\PassOptionsToPackage{unicode}{hyperref}
\PassOptionsToPackage{hyphens}{url}
\PassOptionsToPackage{dvipsnames,svgnames,x11names}{xcolor}
\documentclass[
  12pt]{article}
\usepackage{etoc}
\usepackage{amsmath,amssymb}
\usepackage{iftex}
\ifPDFTeX
  \usepackage[T1]{fontenc}
  \usepackage[utf8]{inputenc}
  \usepackage{textcomp} 
\else 
  \usepackage{unicode-math}
  \defaultfontfeatures{Scale=MatchLowercase}
  \defaultfontfeatures[\rmfamily]{Ligatures=TeX,Scale=1}
\fi
\usepackage{lmodern}
\ifPDFTeX\else  
\fi
\IfFileExists{upquote.sty}{\usepackage{upquote}}{}
\IfFileExists{microtype.sty}{
  \usepackage[]{microtype}
  \UseMicrotypeSet[protrusion]{basicmath} 
}{}
\makeatletter
\@ifundefined{KOMAClassName}{
  \IfFileExists{parskip.sty}{%
    \usepackage{parskip}
  }{
    \setlength{\parindent}{0pt}
    \setlength{\parskip}{6pt plus 2pt minus 1pt}}
}{
  \KOMAoptions{parskip=half}}
\makeatother
\usepackage{xcolor}
\makeatletter
\ifx\paragraph\undefined\else
  \let\oldparagraph\paragraph
  \renewcommand{\paragraph}{
    \@ifstar
      \xxxParagraphStar
      \xxxParagraphNoStar
  }
  \newcommand{\xxxParagraphStar}[1]{\oldparagraph*{#1}\mbox{}}
  \newcommand{\xxxParagraphNoStar}[1]{\oldparagraph{#1}\mbox{}}
\fi
\ifx\subparagraph\undefined\else
  \let\oldsubparagraph\subparagraph
  \renewcommand{\subparagraph}{
    \@ifstar
      \xxxSubParagraphStar
      \xxxSubParagraphNoStar
  }
  \newcommand{\xxxSubParagraphStar}[1]{\oldsubparagraph*{#1}\mbox{}}
  \newcommand{\xxxSubParagraphNoStar}[1]{\oldsubparagraph{#1}\mbox{}}
\fi
\makeatother

\usepackage{longtable,booktabs,array}
\usepackage{calc} 
\usepackage{etoolbox}
\makeatletter
\patchcmd\longtable{\par}{\if@noskipsec\mbox{}\fi\par}{}{}
\makeatother
\IfFileExists{footnotehyper.sty}{\usepackage{footnotehyper}}{\usepackage{footnote}}
\makesavenoteenv{longtable}
\usepackage{graphicx}
\makeatletter
\def\maxwidth{\ifdim\Gin@nat@width>\linewidth\linewidth\else\Gin@nat@width\fi}
\def\maxheight{\ifdim\Gin@nat@height>\textheight\textheight\else\Gin@nat@height\fi}
\makeatother
\setkeys{Gin}{width=\maxwidth,height=\maxheight,keepaspectratio}
\makeatletter
\def\fps@figure{htbp}
\makeatother

\makeatletter
\@ifpackageloaded{caption}{}{\usepackage{caption}}
\AtBeginDocument{%
\ifdefined\contentsname
  \renewcommand*\contentsname{Table of contents}
\else
  \newcommand\contentsname{Table of contents}
\fi
\ifdefined\listfigurename
  \renewcommand*\listfigurename{List of Figures}
\else
  \newcommand\listfigurename{List of Figures}
\fi
\ifdefined\listtablename
  \renewcommand*\listtablename{List of Tables}
\else
  \newcommand\listtablename{List of Tables}
\fi
\ifdefined\figurename
  \renewcommand*\figurename{Figure}
\else
  \newcommand\figurename{Figure}
\fi
\ifdefined\tablename
  \renewcommand*\tablename{Table}
\else
  \newcommand\tablename{Table}
\fi
}
\@ifpackageloaded{float}{}{\usepackage{float}}
\floatstyle{ruled}
\@ifundefined{c@chapter}{\newfloat{codelisting}{h}{lop}}{\newfloat{codelisting}{h}{lop}[chapter]}
\floatname{codelisting}{Listing}

\makeatother
\makeatletter
\@ifpackageloaded{caption}{}{\usepackage{caption}}
\@ifpackageloaded{subcaption}{}{\usepackage{subcaption}}
\makeatother

\ifLuaTeX
  \usepackage{selnolig}  
\fi
\usepackage[square,numbers,sort&compress]{natbib}
\usepackage{bookmark}

\IfFileExists{xurl.sty}{\usepackage{xurl}}{} 
\hypersetup{
  pdftitle={Modeling Dynamic Correlation Matrices with Shrinkage Priors},
  pdfauthor={Author 1; Author 2},
  pdfkeywords={3 to 6 keywords, that do not appear in the title},
  colorlinks=true,
  linkcolor={blue},
  filecolor={Maroon},
  citecolor={Blue},
  urlcolor={Blue},
  pdfcreator={LaTeX via pandoc}}

\newcommand{\anon}{1}

\usepackage{a4wide,natbib,amsmath,graphicx,amssymb,mathtools,float,array, mathbbol}
\usepackage{setspace}
\usepackage{comment}
\usepackage[T1]{fontenc}
\usepackage{authblk}
\usepackage{amsthm}

\newtheorem{Theorem}{Theorem}

\newtheorem{proposition}[Theorem]{Proposition}

\newcommand{\ignore}[1]{}

\begin{document}

\def\spacingset#1{\renewcommand{\baselinestretch}%
{#1}\small\normalsize} \spacingset{1}


\if1\anon
{
  \title{\bf Modeling Dynamic Correlation Matrices with Shrinkage Priors}

  \author{
    Daniel Andrew Coulson\\
    \texttt{dac382@cornell.edu}\\
    \begin{tabular}{c@{\qquad\qquad}c}
      David S. Matteson \& Martin T. Wells\\
      \texttt{dm484@cornell.edu} \& \texttt{mtw1@cornell.edu}
    \end{tabular}
  }

  \date{
    Department of Statistics and Data Science, Cornell University
  }

  \maketitle
} \fi

\bigskip
\begin{abstract}
Estimating time-varying correlation matrices is challenging because existing methods may adapt slowly to structural changes, impose insufficient regularization, or produce diffuse posterior uncertainty. In moderate dimensions, an additional difficulty is summarizing the estimated evolving dependence structure for downstream decision-making tasks. We propose a Bayesian approach based on a low-rank factor representation, with latent states evolving under a dynamic shrinkage prior and observation errors following a multivariate factor stochastic volatility model. This specification allows locally adaptive regularization of the estimated correlation structure over time and informative uncertainty quantification. We establish, to our knowledge, a first-of-its-kind posterior contraction result for dynamically regularized Bayesian models, showing contraction around the true model parameters at an explicit rate under averaged Hellinger distance. To summarize the estimated correlation matrices, we build on the information-theoretic concept of total correlation to obtain a scalar measure of cross-sectional dependence. Simulation studies show improved accuracy and responsiveness relative to competing methods in a range of challenging scenarios. We then apply our method to monitoring the correlation evolution of equity portfolios during periods of financial market stress, providing an ex post framework for assessing the changing benefits of diversification in backtesting analyses. 
\end{abstract}

\noindent%
{\it Keywords:} Bayesian time series, dynamic shrinkage prior, posterior contraction, financial risk management, backtesting
\vfill

\newpage
\spacingset{1.8} 

\section{Introduction}\label{sec-intro}

Existing non-Bayesian methods for estimating time-varying correlation matrices include rolling window and exponentially weighted estimators, regularized estimators, and multivariate GARCH-based dynamic correlation models \citep{ledoit2004well, bollerslev1986generalized, bickel2008covariance}. Rolling and exponentially weighted estimators remain popular because of their simplicity and ease of implementation, but their reliance on moving windows can smooth over abrupt changes and therefore may adapt only gradually when dependence changes quickly. In moderate dimensions, such estimators may also be noisy, since each window contains limited information relative to the number of pairwise dependence parameters to be estimated. Regularization, including shrinkage and sparsity-inducing penalties, can substantially improve stability of the estimators, but stability alone does not guarantee adequate adaptation to temporal changes in dependence, particularly when those changes are heterogeneous and abrupt. 

Parameterized dynamic correlation models, including standard Dynamic Conditional Correlation (DCC) specifications \cite{engle2002dynamic} and extensions such as asymmetric DCC \cite{cappiello2006asymmetric}, achieve tractability through structured low-dimensional dynamic formulations. Although these models are often parsimonious, they can be restrictive for multivariate financial time series in which dependence evolves differently across assets and may change sharply during periods of market stress. 

Uncertainty assessments for such procedures can be obtained using asymptotic or bootstrap methods. However, these methods are typically approximate, do not provide a unified probabilistic characterization of the full time-evolving correlation process, and can be empirically miscalibrated in challenging settings, as illustrated in our simulation studies.

Bayesian methods provide a unified probabilistic framework for inference on time-varying correlation matrices, allowing uncertainty quantification for latent states, model parameters, and derived quantities through the posterior distribution. Existing Bayesian approaches to modeling dynamic correlation matrices include matrix evolution models based on Wishart-type priors, factor stochastic volatility models, and Cholesky-based multivariate stochastic volatility specifications, in some cases supplemented with hierarchical shrinkage to stabilize estimation in higher dimensions \citep{cholsv,factorstochvol,philipov2006multivariate,triantafyllopoulos2008multivariate,aguilar2000bayesian,cogley2005drifts,primiceri2005time}. However, computational tractability is often achieved through structural assumptions, such as low-rank factor dependence, recursive decompositions, or specific matrix evolution laws, which may be restrictive when dependence changes unevenly across variables or exhibits sudden local shifts. 

Moreover, when shrinkage is imposed in existing approaches, it is often static, so that the strength of regularization does not adapt over time. This can lead to poorer estimation accuracy during periods of rapid change in the dependence structure, because such changes may be overshrunk toward the prior center instead of accommodated through a temporary increase in latent state innovation variance. This is especially relevant in settings where dependence changes suddenly, such as periods of market stress in financial time series. 

We propose a dynamic correlation model with a factor specification in the conditional mean and a low-rank multivariate factor stochastic volatility structure for the joint observation error vector. We place the dynamic shrinkage prior (DSP) introduced by \citet{kowal2019dynamic} on the innovations in the time-varying parameter processes. This allows the degree of regularization to adapt over time, strongly shrinking small innovations while permitting larger movements when supported by the data. Because the parameters evolve separately, the model can accommodate heterogeneous changes across variables rather than imposing a common pattern of evolution throughout the system. The resulting model specification retains a parsimonious structure for tractability while allowing locally adaptive changes in the correlation dynamics. 

In addition, we show that the proposed Bayesian model satisfies a posterior contraction property. The posterior contraction property is not theory for its own sake, but a principled theoretical justification of a Bayesian method. It ensures that, as the sample size increases, the posterior distribution contracts around the true data-generating parameters, thereby formalizing the notion that the posterior optimally learns from the data. 

DSP-type priors, such as the dynamic horseshoe, are widely used in many areas, including time-varying parameter regression, change-point analysis, and stochastic volatility models \citep{kowal2019dynamic, hauzenberger2024dynamic, wu2025trend, huber2021dynamic}. However, relatively little is known about the theoretical properties of such hierarchical priors. To our knowledge, no posterior contraction results exist for DSP-type priors; our theorem provides the first such guarantee. 

Because a time-varying $N\times N$ correlation matrix contains $N(N-1)/2$ distinct pairwise correlations at each time point, even moderate-dimensional systems can quickly become a case of not seeing the forest for the trees. For example, a financial portfolio with 1,000 assets yields 499,500 pairwise correlation series to track over time. Visual inspection then becomes impractical and posterior analysis cumbersome when inference is based on thousands of Monte Carlo draws, since the total number of pairwise correlation series can easily grow into the hundreds of millions. 

To address this, we summarize each correlation matrix by a single scalar score that measures the overall strength of dependence in the system rather than any individual pairwise association. The score is 0 under an identity correlation matrix and increases toward 1 as co-movement strengthens, reflecting a loss of effective dimension. This scalar summary provides a compact and interpretable way to track system-wide dependence over time, and it forms the central object of study in this paper rather than individual pairwise correlations. 

In quantitative portfolio management, researchers often develop predictive signals for a large universe of assets, and these signals must be combined into a trading strategy while controlling overall portfolio risk. A key consideration in portfolio construction is the dependence structure of asset returns, since highly correlated positions can weaken diversification benefits, increase portfolio variance, and heighten vulnerability to large drawdowns by increasing the likelihood of simultaneous losses across holdings. For this reason, it is important not only to forecast returns, but also to assess how portfolio co-movement evolves over time. Because our method is retrospective, it can be applied to backtested returns across portfolio assets to determine whether a proposed strategy would historically have exhibited periods of elevated dependence and, hence, greater fragility. As discussed above, large portfolios generate hundreds of thousands of pairwise correlation series, and our scalar score provides the portfolio manager with a convenient summary measure of overall correlation at each time point in the backtest period.

This paper makes three main contributions. First, we develop a Bayesian method for estimating time-varying correlation matrices that combines dynamic shrinkage priors with a multivariate factor stochastic volatility model. Second, we establish the first posterior contraction result for this class of dynamic shrinkage models, providing asymptotic consistency and frequentist justification. Third, we introduce a scalar summary that provides an information-theoretically grounded measure of overall dependence. Taken together, these results advance both the theory and application of Bayesian dynamic shrinkage methods for modeling time-varying dependence.

The remainder of this paper is organized as follows. In Section 2, we specify the observation equation and prior distribution. Section 3 introduces the proposed scalar summary measure. In Section 4, we establish a posterior contraction result and provide a proof sketch. Section 5 presents the results of simulation studies and real-world applications. Section 6 concludes.  

\section{Model for returns }
\subsection*{Terminology and notation}
We consider an equity portfolio consisting of $N$ assets indexed by $a \in \{ 1, \dots, N\}$. Let $P_{a,t}$ denote the adjusted closing price of asset $a$ on day $t \leq T$, where $T\in \mathbb{N}$. The simple return on asset $a$ at time $t$ is 
\begin{equation*}
    R_{a,t} = \frac{P_{a,t}-P_{a,t-1}}{P_{a,t-1}}.
\end{equation*}
Financial theory posits the existence of a risk-free asset, such as a short-term U.S. Treasury bill, with one-period risk-free rate $\mu^{*}_{t}$. The excess return on asset $a$ is then $r_{a,t} := R_{a,t} - \mu^{*}_{t}$. In addition, we denote the excess return of a whole market portfolio by $r_{M,t}$. 

A common modeling framework for excess returns is the multivariate linear factor model. Let $\boldsymbol{r}_{t} = (r_{1,t}, \dots, r_{N,t})^{'}$. The multivariate linear factor model is given by 
\begin{equation*}
    \boldsymbol{r}_{t} = \boldsymbol{\beta}_{0,t} + \boldsymbol{\beta}_{1,t} F_{1,t} + \dots + \boldsymbol{\beta}_{z,t} F_{z,t} + \boldsymbol{\epsilon}_{t}, \qquad \boldsymbol{\epsilon}_{t} = (\epsilon_{1,t}, \dots, \epsilon_{N,t})^{'}. 
\end{equation*}
The variables $F_{1,t}, \dots, F_{z,t}$ are called factors and are often observed, although some models allow for latent factors. We specify our factor structure in Section 2.1.2. The vectors $\boldsymbol{\beta}_{1,t}, \dots, \boldsymbol{\beta}_{z,t}$ are unknown parameters referred to as factor loadings. Finally, $\boldsymbol{\epsilon}_{t}$ denotes the vector of idiosyncratic errors, whose components need not be independent and identically distributed. We discuss our distributional specification for $\boldsymbol{\epsilon}_{t}$ in Section 2.1.2. 

\subsection{Model}
We model the time-varying coefficients using dynamic shrinkage priors (DSP) of \citet{kowal2019dynamic}, which extend global-local shrinkage priors by introducing a time-local component. This allows the degree of shrinkage to vary across both parameters and time, so that coefficient innovations are strongly regularized during stable periods but can adapt to abrupt changes when supported by the data. 

DSP can be expressed as normal mean-scale mixtures, which provide a flexible class of shrinkage priors. This representation induces heavy-tailed prior distributions, so the prior can place substantial mass near zero while still allowing occasional large signals. For the observation errors, we use a multivariate factor stochastic volatility (MFSV) model. This imposes a low-rank structure on the time-varying covariance matrix by assuming the residual dependence is driven by a small set of common latent factors, yielding a parsimonious and computationally tractable model for cross-asset dependence. 
\subsubsection{Observation equation}
Our observation equation is a time-varying version of the Capital Asset Pricing Model (CAPM) \citep{sharpe1964capital,lintner1965security, lintner1965valuation,mossin1966equilibrium}. In particular,
\begin{align*}
    r_{a,t} &= \alpha_{a,t} + \beta_{a,t}r_{M,t} + \epsilon_{a,t},\\ 
    r_{M,t} &= \exp\{\frac{h_{M,t}}{2} \}\epsilon_{M,t}, \epsilon_{M,t} \sim N(0,1), \\
    \boldsymbol{\epsilon}_{t} &= (\epsilon_{1,t},..., \epsilon_{N,t})^{'}| (\Lambda, \boldsymbol{f}_{t} , \bar{\Sigma}_{t}) \sim N_{N}(\Lambda\boldsymbol{f}_{t} , \bar{\Sigma}_{t}) \text{ where } \boldsymbol{f}_{t} | \tilde{\Sigma}_{t} \sim N_{r}(\boldsymbol{0}_{r}, \tilde{\Sigma}_{t}).
\end{align*}

The excess asset returns $r_{a,t}$ and market return $r_{M,t}$ are observed. The time-varying coefficients $\alpha_{a,t}$ and $\beta_{a,t}$, the error vector $\boldsymbol{\epsilon}_{t}$, the factor loadings $\Lambda$, the latent factors $\boldsymbol{f}_{t}$, and the stochastic volatility processes $\bar{\Sigma}_{t}$ and $\tilde{\Sigma}_{t}$ are unobserved. The number of latent factors is denoted by $r$. 

The model allows each asset's excess return to depend on a time-varying intercept and a time-varying market exposure, while permitting the unexplained component of returns to exhibit both cross-sectional dependence and time-varying volatility. At time t, $\alpha_{a,t}$ captures abnormal performance, whereas $\beta_{a,t}$ measures the contemporaneous exposure of asset $a$ to the market factor. The error term $\epsilon_{a,t}$ represents the residual variation that is not explained by the market return. The factor stochastic volatility formulation provides a parsimonious way to model large, time-varying covariance matrices: a small number of latent factors captures the dominant common movements in residual returns, while the stochastic volatility components allow both common and asset-specific uncertainty to evolve over time. Moreover, because the errors are conditionally Gaussian under a scale-mixture representation, the resulting marginal joint distribution is heavy-tailed, which is important for accommodating shocks observed in financial return data.

From the observation equation one can then construct the time-varying covariance matrix, and hence the correlation matrix. In particular, 
\begin{equation*}
    cov(\boldsymbol{r}_{t}) = var(\boldsymbol{\alpha}_{t}) + var(r_{M,t}) E[\boldsymbol{\beta}_{t} \boldsymbol{\beta}_{t}^{'}] + cov(\boldsymbol{\epsilon}_{t}).
\end{equation*}
The expression follows from mutual independence  and the mean zero specification for $r_{M,t}$ and then standardization yields the correlation matrix. 

\subsubsection{Prior Specification}
The evolution of $\{\beta_{a,t} \}_{1\leq t \leq T}$ is modeled by the prior stochastic process given by: 
\begin{equation*}
    \beta_{a,t+1} = \beta_{a,t} + \omega_{\beta_{a},t},\qquad\omega_{\beta_{a},t} |\tau_{a,0}\tau_{\beta_{a}}, \{ \lambda_{\beta_{a},s} \} \sim N(0, \tau_{a,0}^{2}\tau_{\beta_{a}}^{2}\lambda_{\beta_{a},t}^{2}),
\end{equation*}
\begin{equation*}
    h_{\beta_{a},t} = \log(\tau_{a,0}^{2}\tau_{\beta_{a}}^{2}\lambda_{\beta_{a},t}^{2}), \qquad h_{\beta_{a},t} = \mu_{\beta_{a}} + \phi_{\beta_{a}}(h_{\beta_{a},t-1} - \mu_{\beta_{a}} ) + \eta_{\beta_{a},t},
\end{equation*}
\begin{equation*}
    \tau_{a,0} \sim C^{+}(0, \frac{1}{\sqrt{T}}), \tau_{\beta_{a}} \sim C^{+}(0,1),\eta_{\beta_{a},t} \sim Z(\frac{1}{2}, \frac{1}{2}, 0, 1), \frac{\phi_{\beta_{a}} + 1}{2} \sim Beta(10,2) , 
\end{equation*}
\begin{equation*}
    h_{M,t+1} = \mu_{M} + \phi_{M}(h_{M,t} - \mu_{M}) + \sigma_{M} \eta_{t}, \eta_{t}\sim N(0,1)
\end{equation*}
\begin{equation*}
    \mu_{M} \sim N(0,100), \frac{\phi_{M}+1}{2}\sim Beta(10,3), \sigma_{M}^{2} \sim Ga(\frac{1}{2}, \frac{1}{2}).
\end{equation*}
The prior specification of $\{ \alpha_{a,t} \}_{1\leq t\leq T}$ for a given asset $a \in \{ 1, \dots, N \}$ is the same as with $\{ \beta_{a,t} \}_{1\leq t\leq T}$, but with $\beta_{a}$ subscripts replaced by $\alpha_{a}$ subscripts. The random walk evolution provides a simple and interpretable benchmark for gradual parameter drift. The accompanying global-local shrinkage structure is introduced because unrestricted random-walk innovations can lead to excessive variation in the state paths. The global component, $\tau_{a,0}$, controls the overall degree of time-variation, the coefficient specific components, $\tau_{\alpha_{a}}$ and $\tau_{\beta_{a}}$, allow the parameters to exhibit different levels of instability, and the local components, $\lambda_{\alpha_{a,t}}$ and $\lambda_{\beta_{a,t}}$ permit isolated periods of pronounced movement. 

We place stochastic volatility priors on the innovation variances because the amount of parameter instability is itself unlikely to be constant over time. Allowing the innovation variances to evolve dynamically means that the model can distinguish stable periods, in which coefficients are nearly constant, from turbulent periods, in which larger parameter movements are more plausible. This is important for asset return data, where return volatility and the stability of risk exposures can change substantially across market conditions.

We model the observation error covariance process using the following prior process with $i = 1,\dots, N$ and $j = 1, \dots, r$: 
\begin{equation*}
    \bar{\Sigma}_{t} = diag(\exp(\bar{h}_{t,1}),..., \exp(\bar{h}_{t,N})),\qquad \tilde{\Sigma}_{t} = diag(\exp(\tilde{h}_{t,1}),..., \exp(\tilde{h}_{t,r})),
\end{equation*}
\begin{equation*}
    \bar{h}_{t,i} \sim N(\bar{\mu}_{i} + \bar{\psi}_{i} (\bar{h}_{t-1,i} - \bar{\mu}_{i}), \bar{\sigma}_{i}^{2}), \qquad\tilde{h}_{t,j} \sim N(\tilde{\mu}_{j} + \tilde{\psi}_{j}(\tilde{h}_{t-1,j} - \tilde{\mu}_{j}), \tilde{\sigma}_{j}^{2})
\end{equation*}
\begin{equation*}
\Lambda_{i,j}|\tau_{i,j}^{2} \sim N(0, \tau_{i,j}^{2}), \tau_{i,j}^{2}|\lambda_{i}^{2} \sim Ga(0.1,\frac{0.1 \lambda_{i}^{2}}{2}), \lambda_{i}^{2} \sim Ga(1,1),
\end{equation*}
\begin{equation*}
    \bar{\sigma}_{i} \sim Ga(\frac{1}{2}, \frac{1}{2}), \tilde{\sigma}_{j} \sim Ga(\frac{1}{2}, \frac{1}{2}),\bar{\mu}_{i} \sim N(0,10), \tilde{\mu}_{j} \sim N(0,10), 
\end{equation*}
\begin{equation*}
 \frac{\bar{\psi}_{i} + 1}{2} \sim Beta(10,3), \frac{\tilde{\psi}_{j} + 1}{2} \sim Beta (10,3).
\end{equation*}
The factor stochastic volatility prior is motivated by the well known features of financial returns: cross-sectionally dependent, heteroskedastic, and often heavy-tailed. A fully unrestricted time-varying covariance matrix would be prohibitively parameterized in panels with many assets. The factor structure provides a parsimonious alternative: a small number of latent factors captures the dominant common variation in residual risk, while asset-specific stochastic volatilities account for idiosyncratic movements. The priors on the persistence parameters are weakly informative but encode the empirically standard belief that volatility processes are persistent. By centering prior mass on persistent yet stationary dynamics, we allow substantial temporal dependence without permitting explosive behavior. 

To sample from the joint posterior distribution, the MCMC algorithm relies on the multivariate dynamic shrinkage process sampler of \citet{kowal2019dynamic}, the univariate stochastic volatility sampler of \citet{factorstochvol}, and the multivariate factor stochastic volatility sampler of \citet{factorstochvol}. We refer to this combination of observation model and prior specification as the DSP-MFSV-CAPM model. The R code used to implement the MCMC algorithm is available in a GitHub repository at the following link: \href{https://github.com/DanielCoulson/Modeling-Dynamic-Correlation-Matrices-with-Shrinkage-Priors}{\texttt{Modeling-Dynamic-Correlation-Matrices-with-Shrinkage-Priors}}.

We set the number of latent factors to $r=3$ to provide a parsimonious low-rank specification for the time-varying covariance matrix of the observation errors. In initial empirical analysis, we found the model is not sensitive to the number of latent factors. Thus, $r=3$ was adopted as a balance between flexibility, parsimony, and computational tractability.

\subsubsection{Limitations}

The main limitation of DSP-MFSV-CAPM is computational rather than statistical. It is a large latent-variable state-space model estimated by MCMC, so each iteration must update many unknowns at every time point. The DSP component introduces asset-specific time-varying regression parameters and local shrinkage variables, so the latent dimension grows on the order of $N \times T$. The MFSV component avoids the infeasible task of estimating an unrestricted $N\times N$ time-varying covariance matrix by imposing a low-rank factor structure. But posterior sampling still requires updating factor trajectories, stochastic volatility paths for both factor and idiosyncratic variances, and the $N \times r$ loading matrix. As a result, computational cost and mixing worsen as $T$ and $N$ increase. In this paper we therefore focus on panels of moderate size with $N=30$ and $T \leq 1257$, where the model remains practically estimable while retaining its main advantages: locally adaptive shrinkage for time-varying exposures and a parsimonious, interpretable representation of time-varying dependence. 

A natural direction for future work is to scale this framework to larger cross sections by introducing cross-sectional structure with more scalable posterior approximations. For example, one could allow a sectoral hierarchy in both the DSP and MFSV component, with sector specific blocks and a small global component, with the DSP blocks sharing higher level hyperparameters across assets in the same sector. This would induce a conditional block structure in which inference is parallel across blocks given a small set of global objects such as market volatility, global and block specific factors, and their loadings. On the computational side, structured variational inference that preserves temporal dependence within blocks while factorizing across weakly coupled blocks, or sequential Monte Carlo methods, could provide scalable alternatives to full high-dimensional MCMC.  

Using only a small set of observed factors, as is typical in the CAPM style model, has an important limitation. If the chosen factors are incomplete or misspecified, residuals may still contain substantial systematic co-movement, for example from sectoral shocks, which can affect inference on regression coefficients through omitted factor effects. In our model, however, a parsimonious observed-factor specification remains well justified because the market factor typically explains a large share of equity co-movement while the remaining dependence is modeled directly through the MFSV structure in $\boldsymbol{\epsilon}_{t}$. Keeping the observed factor dimension small is also computationally important. Adding many observed factors would introduce additional time-varying parameter processes and shrinkage paths, increasing computational cost and potentially slowing mixing. The low-rank latent covariance $\Lambda \tilde{\Sigma}_{t}\Lambda^{'} + \bar{\Sigma}_{t}$ absorbs the residual variation without an explosion in parameter dimension. As seen in Table 3 in Section 5.1 our model achieves strong performance compared to competing methods despite incomplete factors. 

Our use of continuous shrinkage priors reflects a trade-off between regularization and exact sparsity. Relative to spike-and-slab priors, continuous shrinkage strongly regularizes small innovations while still allowing occasional large moves. It also avoids the discrete model selection step that often produces highly multimodal posteriors and poor MCMC mixing. On the other hand, continuous shrinkage does not produce exact posterior zeros; for example, exact zero correlations have posterior probability zero. If the primary goal is explicit variable selection, spike-and-slab priors, or thresholded variants of our model, may be preferable despite their greater computational cost. Our focus here is on the overall correlation dynamics rather than on selecting individual pairwise correlations. 

Although the observation errors and state innovations are conditionally Gaussian, they are not marginally Gaussian after integrating over latent scale components. A stochastic volatility model yields a scale mixture of normally distributed returns given log-volatility, but mixing over the log volatilities gives a heavy tailed distribution. The Horseshoe prior is conditionally Gaussian, but upon mixing is also a scale mixture of normals yielding heavy-tailed marginal behavior for state innovations. A Student-$t$ distribution could offer an additional layer of protection against outliers not captured by volatility dynamics, at the cost of introducing further latent scale parameters in an already high-dimensional setting. 

More broadly, the model is designed for retrospective inference on how portfolio dependence evolves over time rather than for out-of-sample prediction. This makes it well suited to backtesting analysis, especially the evaluation of portfolio diversification, while predictive assessment lies outside the scope of the present paper. 

\section{Summary of correlation matrices}
Once we have an estimated covariance matrix, we can derive the associated correlation matrix by standardization. However, for correlation matrices of even moderate dimension, it is not obvious how to extract a useful summary for portfolio allocation. The common practice is to plot the time series of the estimated pairwise correlations. However, having plots of multiple estimated pairwise correlation series can become cumbersome and ultimately uninformative, particularly in understanding the overall level of correlation in a portfolio.   

To compress a correlation matrix $R_{t}$ into a single scalar measuring overall correlation, we first need a definition of 'overall correlation'. Rather than averaging pairwise entries, we view overall correlation as the strength of joint dependence in the portfolio; that is, how far the joint excess return distribution is from what it would look like if all assets were independent. Under this interpretation, a correlation matrix with stronger comovement, as in a crisis period, corresponds to a joint distribution that is increasingly constrained to fewer effective directions, implying a larger departure from independence and reduced diversification. We summarize $R_{t}$ by the dimension normalized log determinant score
\begin{equation}
\label{score}
    \text{score}(R_{t}) = 1-\det(R_{t})^{\frac{1}{N}} = 1- \exp\{\frac{1}{N}\log(\det(R_{t}))\} \in [0,1]. 
\end{equation}
If $R=I_{N}$, then $\text{score}(R)=0$. At the opposite extreme, if $R = \boldsymbol{1}_{N}\boldsymbol{1}_{N}^{'}$, a singular boundary case corresponding to perfect comovement, then $\det(R)=0$ and $\text{score}(R)=1$. For the equicorrelation matrix $R(\rho) = (1-\rho)I_{N} + \rho\boldsymbol{1}_{N} \boldsymbol{1}_{N}^{'}$, positive semidefiniteness requires $\rho \geq -1/(N-1)$. At the boundary $\rho_{\min} = -1/(N-1)$, the matrix is singular, so $\det(R(\rho_{\min})) = 0$ and hence $\text{score}(R(\rho_{\min}))=1$.  

The score also has an information-theoretic interpretation in terms of total correlation, denoted by $TC(\cdot)$, which is a form of mutual information. Consider $\boldsymbol{X}\sim N_{N}(0_{N}, R)$ with correlation matrix $R$, where $X_{1},\dots, X_{N}$ have unit variance. Let $f(\boldsymbol{x}) = \phi_{p}(\boldsymbol{x}; 0, R)$ and $g(\boldsymbol{x}) = \prod\limits_{i=1}^{N} \phi_{1}(x_{i}; 0 , 1)$. By \citet{pardo2018statistical}, the Kullback-Leibler divergence between $f$ and $g$ is given by $ D_{KL}(f||g)= - \frac{1}{2}\log(\det(R))$.
Then $\mathcal{I}(R) :=  \frac{1}{N}TC(\boldsymbol{X}) = - \frac{1}{2N}\log(\det(R))$ is the dimension-normalized total correlation, with $\text{score}(R) = 1- \exp (-2\mathcal{I}(R))$. Also, $\text{score}(R)$ measures how much the correlation structure compresses the joint return cloud relative to independence: stronger comovement corresponds to smaller effective volume, and hence a larger score. See Section A of the supplementary material for more details. 

\section{Posterior Contraction}
In this section, we state the posterior contraction result, outline the proof, and discuss its implications. We provide the full proof in Section B of the supplementary material. Posterior contraction is important for several reasons. First, it provides a frequentist justification for Bayesian inference by establishing consistency and guarding against misleading conclusions in the large-sample limit. Second, the contraction rate quantifies statistical efficiency; when it matches the minimax-optimal rate, it shows that the Bayesian procedure attains optimal statistical performance. Third, posterior contraction underpins the calibration of Bayesian credible sets. Under suitable conditions, they can also serve as valid frequentist confidence sets. Finally, the contraction result illuminates the role of the prior, clarifying how prior specification balances flexibility and regularization in determining posterior behavior. In hierarchical shrinkage models, such results are especially informative because they verify that the regularization induced by the prior does not asymptotically overwhelm the likelihood.

Our contraction result provides a theoretical justification for the use of DSP-MFSV-CAPM by showing that the model yields consistent estimation and hence supports reliable inference. The proof is based on Theorem 3 of \cite{ghosal2007convergence}. To apply this theorem, we establish the existence of consistent hypothesis tests for our model (Proposition 2) and a prior mass condition (Proposition 3). To this end, we define a family of sieves $\Theta_{n} \subseteq \Theta$, $\Theta$ is the parameter space and a set of latent states $H_{n} \subseteq H$, $H$ is the set of latent states. Let $\boldsymbol{\theta}$ denote the vector consisting of all the parameters in the model. For $\boldsymbol{\theta} \in \Theta$ and a latent state path $\boldsymbol{s} \in H$, we consider the conditional experiment $Q_{\boldsymbol{\theta},\boldsymbol{s}_{n}}^{(n)} (\cdot) := \mathbb{P}_{\boldsymbol{\theta}}(\boldsymbol{r}_{1:n}\in \cdot|\boldsymbol{s}_{n}, \boldsymbol{r}_{M, 1:n})$, where $\boldsymbol{s}_{n}$ denotes the latent states up to time $n$, $\boldsymbol{r}_{1:n} = (\boldsymbol{r}_{1}, \dots, \boldsymbol{r}_{n})$ and $\boldsymbol{r}_{M,1:n} = (r_{M,1},\dots, r_{M,n})$. By conditional independence, 
\begin{equation}
\label{condexp}
    Q_{\boldsymbol{\theta}, \boldsymbol{s}_{n}}^{(n)} = \bigotimes_{t=1}^{n} Q_{\boldsymbol{\theta}, \boldsymbol{s}_{n,t}}, q_{\boldsymbol{\theta},\boldsymbol{s}_{n}}^{(n)}(\boldsymbol{x}^{(n)}) = \prod\limits_{t=1}^{n} q_{\boldsymbol{\theta}, \boldsymbol{s}_{n,t}}(\boldsymbol{x}_{t}), \text{ where } q_{\boldsymbol{\theta},\boldsymbol{s}_{n,t}} (\cdot) = \phi_{N}(\cdot; \mu_{t}(\boldsymbol{\theta}), \Sigma_{t}(\boldsymbol{\theta}))
\end{equation}
is the Gaussian density with mean vector $\mu_{t}(\boldsymbol{\theta})$ and covariance matrix $\Sigma_{t}(\boldsymbol{\theta})$. This product-measure structure allows us to establish the existence of consistent tests under a Hellinger-type metric. We then remove the conditioning in our final posterior contraction result by a standard total-expectation argument. We define with respect to Lebesgue measure,
\begin{equation*}
    d_{n,\boldsymbol{s}_{n}}^{2}(\boldsymbol{\theta}, \boldsymbol{\theta}') := \frac{1}{n}\sum\limits_{t=1}^{n}h_{t}^{2}(\boldsymbol{\theta}, \boldsymbol{\theta}'), h_{t}^{2}(\boldsymbol{\theta}, \boldsymbol{\theta}') = \int(\sqrt{q_{\boldsymbol{\theta}, \boldsymbol{s}_{n,t}}}-\sqrt{q_{\boldsymbol{\theta}', \boldsymbol{s}_{n,t}}})^{2} d\lambda_{N}(\boldsymbol{x}), \boldsymbol{x}\in \mathbb{R}^{N}. 
\end{equation*}
We also define
\begin{equation*}
    B_{n}(\boldsymbol{\theta}_{0}, \epsilon_{n};k):= \{\boldsymbol{\theta} \in \Theta: K(Q_{\boldsymbol{\theta}_{0}, \boldsymbol{s}_{n}}^{(n)}, Q_{\boldsymbol{\theta}, \boldsymbol{s}_{n}}^{(n)}) \leq n\epsilon_{n}^{2}, V_{k,0}(Q_{\boldsymbol{\theta}_{0}, \boldsymbol{s}_{n}}^{(n)},Q_{\boldsymbol{\theta}, \boldsymbol{s}_{n}}^{(n)}) \leq n^{\frac{k}{2}} \epsilon_{n}^{k} \}, 
\end{equation*}
where $K(f,g) = \int f\log(f/g) d\mu$ and $V_{k,0}(f,g) = \int f|\log(f/g) - K(f,g)|^{k}d\mu,k>1$ are defined with respect to a measure $\mu$. Unconditionally, $\Pi_{n}$ denotes the prior measure; conditional on $\boldsymbol{r}_{1:n}$, it denotes the posterior measure, as seen in Theorem 5. We first prove the following metric entropy result, where $N(a,b,c)$ denotes the covering number of the set $b$ by $c$-balls of radius $a$.
\begin{proposition}
$\sup\limits_{\boldsymbol{s}_{n}\in H_{n}}\sup\limits_{\epsilon>\epsilon_{n}}\log(N(\frac{\epsilon}{36}, \{ \boldsymbol{\theta} \in \Theta_{n}: d_{n,\boldsymbol{s}_{n}}(\boldsymbol{\theta}, \boldsymbol{\theta}_{0})\leq \epsilon\}, d_{n,\boldsymbol{s}_{n}})) \leq n\epsilon_{n}^{2}$, \text{ where } $\epsilon_{n}:= c_{\epsilon}n^{-\frac{1}{2+c_{\boldsymbol{\theta}}}}$, $c_{\epsilon}>0$ is a constant and $c_{\boldsymbol{\theta}}$ is the number of parameters in the model.
\end{proposition}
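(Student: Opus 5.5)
The plan is to reduce the entropy bound to a Euclidean covering problem on a compact, finite-dimensional sieve. We then transfer Euclidean balls to $d_{n,\boldsymbol{s}_n}$-balls through a Lipschitz estimate that is uniform in $t$ and in $\boldsymbol{s}_n\in H_n$.

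First, we define the sieve. Take $\Theta_n$ to be the set of $\boldsymbol{\theta}$ whose $c_{\boldsymbol{\theta}}$ coordinates lie in a box $[-M_n,M_n]^{c_{\boldsymbol{\theta}}}$, with persistence parameters in $[-1+\delta,1-\delta]$ and scale parameters bounded below by $M_n^{-1}$. Take $H_n$ to be the set of latent paths whose log-volatilities, factors and time-varying coefficients are bounded in sup-norm by $M_n$. Here $M_n$ grows at most polylogarithmically in $n$; in the simplest version it is a constant.

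On $\Theta_n\times H_n$, each conditional mean $\mu_t(\boldsymbol{\theta})$ and covariance $\Sigma_t(\boldsymbol{\theta})=\Lambda\tilde\Sigma_t\Lambda'+\bar\Sigma_t$ (with the market term added) is a smooth function of $\boldsymbol{\theta}$. Moreover, the eigenvalues of $\Sigma_t$ are bounded above and below uniformly in $t$ and $\boldsymbol{s}_n$, because $\bar\Sigma_t$ is diagonal with entries $\exp(\bar h_{t,i})\ge e^{-M_n}$.

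Second, we bound the Gaussian Hellinger distance. The closed form
\[
1-\tfrac12 h_t^2=\frac{\det(\Sigma_t)^{1/4}\det(\Sigma_t')^{1/4}}{\det(\tfrac{\Sigma_t+\Sigma_t'}{2})^{1/2}}\exp\Bigl\{-\tfrac18(\mu_t-\mu_t')'\bigl(\tfrac{\Sigma_t+\Sigma_t'}{2}\bigr)^{-1}(\mu_t-\mu_t')\Bigr\}
\]
gives $h_t^2\lesssim \|\mu_t-\mu_t'\|^2+\|\Sigma_t-\Sigma_t'\|_F^2$, with the constant depending only on the eigenvalue bounds. By the mean value theorem applied to the smooth parametrization, $h_t(\boldsymbol{\theta},\boldsymbol{\theta}')\le L_n\|\boldsymbol{\theta}-\boldsymbol{\theta}'\|_2$. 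The constant $L_n$ is polynomial in $M_n$ and does not depend on $t$ or $\boldsymbol{s}_n$. Averaging over $t$ gives $d_{n,\boldsymbol{s}_n}(\boldsymbol{\theta},\boldsymbol{\theta}')\le L_n\|\boldsymbol{\theta}-\boldsymbol{\theta}'\|_2$, uniformly in $\boldsymbol{s}_n\in H_n$.

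Third, we count. The $d_{n,\boldsymbol{s}_n}$-ball is contained in $\Theta_n$, so
\[
N\bigl(\tfrac{\epsilon}{36},\{\cdot\},d_{n,\boldsymbol{s}_n}\bigr)\le N\bigl(\tfrac{\epsilon}{36L_n},[-M_n,M_n]^{c_{\boldsymbol{\theta}}},\|\cdot\|_2\bigr)\le\Bigl(\tfrac{108\sqrt{c_{\boldsymbol{\theta}}}M_nL_n}{\epsilon}\Bigr)^{c_{\boldsymbol{\theta}}}.
\]
This bound is decreasing in $\epsilon$, so the supremum over $\epsilon>\epsilon_n$ is attained as $\epsilon\downarrow\epsilon_n$. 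The logarithm is then $c_{\boldsymbol{\theta}}\log(C M_nL_n/\epsilon_n)$. With $\epsilon_n=c_\epsilon n^{-1/(2+c_{\boldsymbol{\theta}})}$ we have $n\epsilon_n^2=c_\epsilon^2 n^{c_{\boldsymbol{\theta}}/(2+c_{\boldsymbol{\theta}})}$, while the log-covering number grows only like $\tfrac{c_{\boldsymbol{\theta}}}{2+c_{\boldsymbol{\theta}}}\log n+O(\log M_n)$. Hence the inequality holds for all large $n$, and for every $n$ after enlarging $c_\epsilon$ to absorb the constants. The polynomial rate is very loose here, since a $\sqrt{\log n/n}$ rate would already suffice. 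It is probably chosen so that the prior mass condition in Proposition 3 works out, since the prior places mass of order $\epsilon_n^{c_{\boldsymbol{\theta}}}$ on a Euclidean ball.

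The main obstacle is the second step: obtaining a Lipschitz constant that is uniform over all $n$ time points and over all latent paths in $H_n$. This requires the sieve to keep $\Sigma_t$ uniformly well conditioned, which I would secure through the lower bound on $\bar\Sigma_t$ and the sup-norm bound on the paths. The step also requires being careful about which quantities count as parameters $\boldsymbol{\theta}$ and which are latent states $\boldsymbol{s}_n$. The time-varying $\alpha_{a,t},\beta_{a,t}$ and the log-volatilities must sit in $\boldsymbol{s}_n$, so that $c_{\boldsymbol{\theta}}$ stays fixed. Otherwise the dimension would grow with $n$ and the count above would fail.
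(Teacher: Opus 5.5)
Your proposal is correct and follows essentially the same route as the paper. The paper proves a Lipschitz bound $d_{n,\boldsymbol{s}_n}(\boldsymbol{\theta},\boldsymbol{\theta}')\le \tilde L_n\|\boldsymbol{\theta}-\boldsymbol{\theta}'\|_2$, uniform in $t$ and $\boldsymbol{s}_n\in H_n$, from two-sided eigenvalue bounds on $\Sigma_t(\boldsymbol{\theta})$. It then covers the box $\Theta_n$ (side lengths $O(\log n)$) by a grid at Euclidean radius $\epsilon/(36\tilde L_n)$ and compares the resulting $O(\log n)$ entropy with $n\epsilon_n^2$. The only difference is that you bound $h_t^2$ through the closed-form Gaussian Bhattacharyya affinity, whereas the paper uses $h^2\le 2\,\mathrm{KL}$ plus a second-order Taylor bound on $x-\log x-1$; both give the same quadratic estimate.

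One bookkeeping correction: because $\lambda_{\min}(\Sigma_t)\ge e^{-M_n}$, your $L_n$ is of order $e^{O(M_n)}$, not polynomial in $M_n$. The entropy therefore carries an $O(M_n)$ term rather than $O(\log M_n)$. This is harmless, since with the paper's $\Theta_n$ and $H_n$ one has $M_n\asymp\log n$, so $\tilde L_n$ is polynomial in $n$, matching the paper's bound $A\log n+B\log\log n+D$.
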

The proof first establishes that, uniformly over latent states $\boldsymbol{s}_{n} \in H_{n}$, the covariance map $\boldsymbol{\theta} \rightarrow \Sigma_{t}(\boldsymbol{\theta})$ satisfies a polynomial-logarithmic Lipschitz bound on $\Theta_{n}$, while the mean is fixed conditional $\boldsymbol{s}_{n}$. For suitable functions of $n$,  $m_{n}$ and $M_{n}$, we establish the bounds $m_{n}I_{N} \preceq \Sigma_{t}(\boldsymbol{\theta}) \preceq M_{n}I_{N}$. This yields a quadratic bound on the Gaussian Kullback--Leibler divergence. Hence $d_{n,\boldsymbol{s}_{n}}(\boldsymbol{\theta}_{1}, \boldsymbol{\theta}_{2}) \lesssim \tilde{L}_{n}||\boldsymbol{\theta}_{1}-\boldsymbol{\theta}_{2}||$ for some $\tilde{L}_{n}$. Therefore, any $d_{n,\boldsymbol{s}_{n}}$-ball of radius $\epsilon$ can be covered by Euclidean balls of radius $\delta = \epsilon/36\tilde{L}_{n}$. This implies that the metric entropy is bounded by the Euclidean covering number of $\Theta_{n}$. Let $c_{\boldsymbol{\theta}}$ denote the number of parameters in the model. Since $\Theta_{n}$ lies in a $c_{\boldsymbol{\theta}}$-dimensional box with side lengths $O(\log(n))$, the covering number is at most $(C\log(n)/\delta)^{c_{\boldsymbol{\theta}}}$ for some constant $C>0$. Taking logarithms and choosing $c_{\epsilon}$ sufficiently large yields the result. 

\begin{proposition}
    Let $S_{n,j} := \{ \boldsymbol{\theta} \in \Theta_{n}: j\epsilon_{n}< d_{n,\boldsymbol{s}_{n}}(\boldsymbol{\theta}, \boldsymbol{\theta}_{0}) \leq 2j\epsilon_{n} \}$. Then, for every $\boldsymbol{s}_{n} \in H_{n}$, there exists a sequence of tests $\phi_{n,\boldsymbol{s}_{n}}$ and a constant $K>0$ such that $\sup\limits_{\boldsymbol{s}_{n} \in H_{n}} Q_{\boldsymbol{\theta}_{0}, \boldsymbol{s}_{n}}^{(n)} \phi_{n,\boldsymbol{s}_{n}} \rightarrow 0$ and for every $j \geq 2$, $\sup\limits_{\boldsymbol{s}_{n}\in H_{n}} \sup\limits_{\boldsymbol{\theta} \in S_{n,j}}Q_{\boldsymbol{\theta}, \boldsymbol{s}_{n}}^{(n)} (1-\phi_{n,\boldsymbol{s}_{n}}) \leq e^{-Kj^{2}n\epsilon_{n}^{2}}$.
\end{proposition}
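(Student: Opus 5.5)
The plan is to build the tests from the classical construction for independent, non-identically distributed observations under the averaged squared Hellinger metric (Ghosal and van der Vaart, 2007, Lemma 2 and Section 7), and to combine them via the entropy bound of Proposition 1. Fix $\boldsymbol{s}_{n}\in H_{n}$. Conditional on $\boldsymbol{s}_{n}$ and $\boldsymbol{r}_{M,1:n}$, (\ref{condexp}) gives $Q_{\boldsymbol{\theta},\boldsymbol{s}_{n}}^{(n)}$ as a product of Gaussian densities $q_{\boldsymbol{\theta},\boldsymbol{s}_{n,t}}$. For any fixed $\boldsymbol{\theta}_{1}$ with $d_{n,\boldsymbol{s}_{n}}(\boldsymbol{\theta}_{1},\boldsymbol{\theta}_{0})>\epsilon$, we take the likelihood-ratio-type test
\[
\phi_{1}=\mathbb{1}\Bigl\{\textstyle\prod_{t=1}^{n} q_{\boldsymbol{\theta}_{1},\boldsymbol{s}_{n,t}}(\boldsymbol{x}_{t})/q_{\boldsymbol{\theta}_{0},\boldsymbol{s}_{n,t}}(\boldsymbol{x}_{t})>1\Bigr\}.
\]
Markov's inequality applied to the square root of the likelihood ratio, together with $1-x\le e^{-x}$ and the Hellinger affinity identity $\int\sqrt{pq}=1-h^{2}/2$, gives
\[
Q_{\boldsymbol{\theta}_{0},\boldsymbol{s}_{n}}^{(n)}\phi_{1}\le e^{-n d_{n,\boldsymbol{s}_{n}}^{2}(\boldsymbol{\theta}_{0},\boldsymbol{\theta}_{1})/2}.
\]
The same argument bounds $\sup_{\boldsymbol{\theta}} Q_{\boldsymbol{\theta},\boldsymbol{s}_{n}}^{(n)}(1-\phi_{1})$ by a constant multiple of $e^{-n d^{2}/c}$, where the supremum runs over $\boldsymbol{\theta}$ with $d_{n,\boldsymbol{s}_{n}}(\boldsymbol{\theta},\boldsymbol{\theta}_{1})\le d_{n,\boldsymbol{s}_{n}}(\boldsymbol{\theta}_{0},\boldsymbol{\theta}_{1})/18$. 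To get this, we use the triangle inequality for the Hellinger distance coordinatewise and Cauchy--Schwarz on the averaged sum, which is exactly the root-average-square Hellinger setting of Birgé and Le Cam. This is where the radius $\epsilon/36$ in Proposition 1 originates, with factor $1/2$ from the $2j\epsilon_{n}$ shell width.

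Next, for each $j\ge 2$, we cover the shell $S_{n,j}$ by $d_{n,\boldsymbol{s}_{n}}$-balls of radius $j\epsilon_{n}/36$. We have $S_{n,j}\subseteq\{\boldsymbol{\theta}\in\Theta_{n}: d_{n,\boldsymbol{s}_{n}}(\boldsymbol{\theta},\boldsymbol{\theta}_{0})\le 2j\epsilon_{n}\}$ and $2j\epsilon_{n}>\epsilon_{n}$, so Proposition 1 with $\epsilon=2j\epsilon_{n}$ bounds the number of balls by $e^{n\epsilon_{n}^{2}}$, uniformly in $\boldsymbol{s}_{n}\in H_{n}$. Centres lying in $S_{n,j}$ are at distance more than $j\epsilon_{n}$ from $\boldsymbol{\theta}_{0}$, so each elementary test has type I error at most $e^{-nj^{2}\epsilon_{n}^{2}/2}$ and type II error at most $e^{-c' nj^{2}\epsilon_{n}^{2}}$ on its ball. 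We then set $\phi_{n,\boldsymbol{s}_{n}}$ to be the maximum of all elementary tests over all $j\ge 2$. Its type I error is at most
\[
\sum_{j\ge 2} e^{n\epsilon_{n}^{2}}e^{-nj^{2}\epsilon_{n}^{2}/2}\lesssim e^{n\epsilon_{n}^{2}(1-2)}\bigl(1-e^{-n\epsilon_{n}^{2}}\bigr)^{-1},
\]
which tends to $0$ because $n\epsilon_{n}^{2}=c_{\epsilon}^{2}n^{c_{\boldsymbol{\theta}}/(2+c_{\boldsymbol{\theta}})}\to\infty$. For $\boldsymbol{\theta}\in S_{n,j}$, the type II error is bounded by that of the elementary test whose ball contains $\boldsymbol{\theta}$, giving $e^{-Kj^{2}n\epsilon_{n}^{2}}$ with $K=c'$. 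None of these constants depend on $\boldsymbol{s}_{n}$, so the suprema over $H_{n}$ come for free.

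The main obstacle is the first step. We must make sure the Le Cam--Birgé test bound really holds with the averaged Hellinger metric $d_{n,\boldsymbol{s}_{n}}$ rather than a sum metric, and with explicit constants independent of $\boldsymbol{s}_{n}$. We must also check that the type I exponent ($j^{2}/2$ after the $1/2$ loss) still dominates the entropy term $n\epsilon_{n}^{2}$ already at $j=2$. If the constant comes out too small, the first thing to try is to enlarge the starting index of $j$ or absorb the factor into $c_{\epsilon}$ via Proposition 1, since the statement only requires some $K>0$. Measurability of the test in $\boldsymbol{s}_{n}$ is not needed, since the statement is pointwise in $\boldsymbol{s}_{n}$ with uniform bounds.
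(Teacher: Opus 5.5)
Your architecture matches the paper's. You cover each shell $S_{n,j}$ via Proposition 1 with $\epsilon=2j\epsilon_n$, attach a local test to each centre, and take the maximum over centres and over $j\ge 2$. Type I error is controlled by a union bound, and your bound $e^{-n\epsilon_n^2}/(1-e^{-n\epsilon_n^2})\to 0$, uniform in $\boldsymbol{s}_n$, is correct.

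\textbf{The local test.} This is the step that fails as written. Your simple likelihood-ratio test $\phi_1$ does give $Q^{(n)}_{\boldsymbol{\theta}_0,\boldsymbol{s}_n}\phi_1\le e^{-n d^2_{n,\boldsymbol{s}_n}(\boldsymbol{\theta}_0,\boldsymbol{\theta}_1)/2}$. But ``the same argument'' only bounds its type II error at $\boldsymbol{\theta}\neq\boldsymbol{\theta}_1$ by $\prod_t\int q_{\boldsymbol{\theta},\boldsymbol{s}_{n,t}}\,(q_{\boldsymbol{\theta}_0,\boldsymbol{s}_{n,t}}/q_{\boldsymbol{\theta}_1,\boldsymbol{s}_{n,t}})^{1/2}\,d\lambda_N$. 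That quantity is not a function of Hellinger distances, so the triangle inequality and Cauchy--Schwarz cannot control it. For the Gaussian factors here, which share a mean given $\boldsymbol{s}_n$, it is $+\infty$ whenever $\Sigma_t(\boldsymbol{\theta})^{-1}+\tfrac12\Sigma_t(\boldsymbol{\theta}_0)^{-1}-\tfrac12\Sigma_t(\boldsymbol{\theta}_1)^{-1}$ fails to be positive definite. For general product measures, a simple likelihood-ratio test need not have uniformly small type II error over a Hellinger ball around $\boldsymbol{\theta}_1$.

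Uniform control over $\{d_{n,\boldsymbol{s}_n}(\boldsymbol{\theta},\boldsymbol{\theta}_1)\le d_{n,\boldsymbol{s}_n}(\boldsymbol{\theta}_0,\boldsymbol{\theta}_1)/18\}$ requires the Le Cam--Birg\'e minimax test against the convex Hellinger ball. That is exactly what Lemma 2 of \cite{ghosal2007convergence} provides, with both error probabilities at most $e^{-\frac12 n d^2_{n,\boldsymbol{s}_n}(\boldsymbol{\theta}_0,\boldsymbol{\theta}_1)}$. The paper simply invokes that lemma for the product measure (\ref{condexp}); you should do the same rather than re-derive it. This also settles your worry about constants: the type I exponent $j^2/2$ already beats the entropy term $1$ at $j=2$, and you get $K=1/2$.

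\textbf{The covering radius.} Proposition 1 with $\epsilon=2j\epsilon_n$ bounds the number of balls of radius $\epsilon/36=j\epsilon_n/18$, not $j\epsilon_n/36$. Your cover at radius $j\epsilon_n/36$ is therefore not controlled by it. Radius $j\epsilon_n/18$ is also exactly what the local test needs. A centre in $S_{n,j}$ has $d_{n,\boldsymbol{s}_n}(\boldsymbol{\theta}_0,\cdot)>j\epsilon_n$, so its $j\epsilon_n/18$-ball lies inside the $1/18$-neighbourhood of Lemma 2.

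With these two corrections your argument coincides with the paper's proof.
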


To prove this proposition, we apply Proposition 1 to each shell $S_{n,j}$ with $\epsilon = 2j\epsilon_{n}$ to obtain a $d_{n,\boldsymbol{s}_{n}}$-cover of $S_{n,j}$ by at most $\exp(n\epsilon_{n}^{2})$ balls of radius $j\epsilon_{n}/18$. Then Lemma 2 of \cite{ghosal2007convergence} yields, for each center, a local test with both type I and type II errors bounded by $\exp\{ -(1/2) j^{2}n\epsilon_{n}^{2}\}$. Taking the maximum over finitely many local tests in the shell preserves the type II bound uniformly over $S_{n,j}$, while a union bound yields shellwise type I error bound of at most $\exp\{ - (j^{2}/4)n\epsilon_{n}^{2} \}$. Finally, taking the supremum over $j \geq 2$ yields a global test. Since $n\epsilon_{n}^{2} \rightarrow \infty$, the sum of the shellwise type I errors vanishes, whereas the type II error on each shell remains exponentially small in $j^{2}n\epsilon_{n}^{2}$. 

\begin{proposition}
For $k = 2$, $j \in \mathbb{N}$, and any $0<K<\frac{1}{2}$, we have \begin{equation*}
     \frac{\Pi_{n}(\boldsymbol{\theta}\in\Theta_{n}:j\epsilon_{n} < d_{n,\boldsymbol{s}_{n}}(\boldsymbol{\theta}, \boldsymbol{\theta}_{0})<2j\epsilon_{n})}{\Pi_{n}(B_{n}(\boldsymbol{\theta}_{0},\epsilon_{n};k))} \leq \exp\{\frac{1}{2}Kn\epsilon_{n}^{2}j^{2} \}. 
 \end{equation*}
\end{proposition}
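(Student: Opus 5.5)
The plan is to bound the numerator trivially by $1$, since $\Pi_n$ is a probability measure, and to lower bound the denominator by $\exp\{-C n\epsilon_n^2\}$ with an explicit constant $C$. It then suffices to show $\Pi_n(B_n(\boldsymbol{\theta}_0,\epsilon_n;2)) \geq \exp\{-\tfrac{1}{2}Kn\epsilon_n^2 j^2\}$ for all $j\in\mathbb{N}$. Because $j\geq 1$, the worst case is $j=1$, so the target is $\Pi_n(B_n) \geq \exp\{-\tfrac{1}{2}K n\epsilon_n^2\}$. Since $\epsilon_n = c_\epsilon n^{-1/(2+c_{\boldsymbol{\theta}})}$ with $c_\epsilon$ free, we can trade constants: any bound of the form $\log\Pi_n(B_n) \gtrsim -c_{\boldsymbol{\theta}}\log n$ suffices. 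Indeed $n\epsilon_n^2 = c_\epsilon^2 n^{c_{\boldsymbol{\theta}}/(2+c_{\boldsymbol{\theta}})}$ grows polynomially, so it dominates $\log n$ for large $n$.

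\textbf{Step 1: Euclidean ball inside the KL neighbourhood.} I would show there is $\delta_n$ such that $\{\boldsymbol{\theta}: \|\boldsymbol{\theta}-\boldsymbol{\theta}_0\|\le \delta_n\}\subseteq B_n(\boldsymbol{\theta}_0,\epsilon_n;2)$, uniformly in $\boldsymbol{s}_n\in H_n$. Conditional on $\boldsymbol{s}_n$ the experiment is the product of Gaussians in (\ref{condexp}), so
\[ K(Q^{(n)}_{\boldsymbol{\theta}_0,\boldsymbol{s}_n},Q^{(n)}_{\boldsymbol{\theta},\boldsymbol{s}_n})=\sum_{t=1}^n K(\phi_N(\cdot;\mu_t,\Sigma_{t}(\boldsymbol{\theta}_0)),\phi_N(\cdot;\mu_t,\Sigma_t(\boldsymbol{\theta}))). \]
By independence across $t$, the $V_{2,0}$ term is the sum of the per-$t$ variances of the log-likelihood ratios. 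Each Gaussian term has a closed form in $\Sigma_t(\boldsymbol{\theta}_0)^{-1/2}\Sigma_t(\boldsymbol{\theta})\Sigma_t(\boldsymbol{\theta}_0)^{-1/2}-I$. Using the eigenvalue bounds $m_nI_N\preceq\Sigma_t\preceq M_nI_N$ and the polynomial-logarithmic Lipschitz bound on $\boldsymbol{\theta}\mapsto\Sigma_t(\boldsymbol{\theta})$, both established in the proof of Proposition 1, each term is $\lesssim (M_n \tilde L_n/m_n)^2\|\boldsymbol{\theta}-\boldsymbol{\theta}_0\|^2$. Summing over $t$ gives $K,V_{2,0}\lesssim n L_n'^2\|\boldsymbol{\theta}-\boldsymbol{\theta}_0\|^2$, where $L_n'$ is polylogarithmic in $n$. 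Choosing $\delta_n = \epsilon_n/(C L_n')$ makes both quantities at most $n\epsilon_n^2$.

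\textbf{Step 2: prior mass of the Euclidean ball.} I assume $\boldsymbol{\theta}_0$ lies in the interior of the support. All priors in Section 2.1.2 (half-Cauchy, Beta, Gamma, Normal, and the $Z$-distribution for the log-variance innovations) have continuous densities that are positive near $\boldsymbol{\theta}_0$. The half-Cauchy global scale $C^+(0,1/\sqrt{T})$ has density bounded below by a constant times $\sqrt{T}$ on bounded sets, which costs only polynomially in $n$. Hence the joint prior density is bounded below by $c_n\ge n^{-a}$ on the ball, and
\[ \Pi_n(B_n)\ \ge\ c_n\,\mathrm{vol}(\text{ball of radius }\delta_n)\ \gtrsim\ n^{-a}\,\delta_n^{c_{\boldsymbol{\theta}}}. \]
Therefore $-\log\Pi_n(B_n)\lesssim c_{\boldsymbol{\theta}}\log(1/\epsilon_n)+\log L_n'+a\log n = O(\log n)$. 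This is $\le \tfrac12 K n\epsilon_n^2$ for $n$ large, or for all $n$ after enlarging $c_\epsilon$.

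\textbf{Main obstacle.} The hardest step is making Step 1 uniform over $\boldsymbol{s}_n\in H_n$ while keeping $L_n'$ only polylogarithmic. This requires that $M_n/m_n$ and $\tilde L_n$ from Proposition 1 stay polylog. If they grow polynomially, I would absorb them by enlarging the exponent $a$; any polynomial loss still yields $O(\log n)$. Some care is also needed for the hierarchical parameters (the $\lambda$'s and $\tau$'s) near the boundary of their support. For these I would first handle the latent-state sieve $H_n$ by conditioning, and then work with local boxes in log-coordinates around $\boldsymbol{\theta}_0$.
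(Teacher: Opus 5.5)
Your proposal is correct and follows essentially the paper's route. You bound the numerator by $1$, show that a Euclidean ball around $\boldsymbol{\theta}_0$ lies inside $B_n(\boldsymbol{\theta}_0,\epsilon_n;2)$ via the Gaussian KL and log-likelihood-ratio variance bounds, and lower bound that ball's prior mass by a polynomial in $n$ times a power of $\epsilon_n$, giving $-\log\Pi_n(B_n(\boldsymbol{\theta}_0,\epsilon_n;2))=O(\log n)=o(n\epsilon_n^2)$. Taking the radius $\epsilon_n/(CL_n')$ with the polynomial factors $m_n$, $M_n$, $\tilde L_n$ from Proposition 1, instead of the paper's fixed constants $c_1,c_2,L_1,c_F$, is a harmless variant that also makes the bound explicitly uniform over $\boldsymbol{s}_n\in H_n$.

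There is one slip. On a fixed neighbourhood of a positive point, the $C^+(0,1/\sqrt{n})$ density is bounded below by a constant times $n^{-1/2}$, not $\sqrt{n}$. This $n^{-1/2}$ is exactly the factor the paper extracts for $\mu_{\alpha_i},\mu_{\beta_i}$. Since you only need a polynomial lower bound, your conclusion is unaffected.
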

Since the numerator is bounded above by 1, it suffices to lower bound the denominator by constructing the Euclidean ball $U_{n} = \{ ||\boldsymbol{\theta} - \boldsymbol{\theta}_{0}|| \leq c\epsilon_{n} \}$, for some constant $c>0$, on which both $KL_{n}(\boldsymbol{\theta}_{0}, \boldsymbol{\theta})$ and $V_{2,0} (\boldsymbol{\theta}_{0}, \boldsymbol{\theta})$ are of order $O(n\epsilon_{n}^{2})$. Hence $U_{n}\subset B_{n}(\boldsymbol{\theta}_{0}, \epsilon_{n};2)$. We then lower bound $\Pi_{n}(U_{n})$ by intersecting coordinatewise neighborhoods around $\boldsymbol{\theta}_{0}$. Each scalar parameter contributes at least a constant multiple of $n^{-\frac{1}{2}} \epsilon_{n}$, while $\Lambda$ contributes at least a constant multiple of $\epsilon_{n}^{Nr}$. Therefore, $\Pi_{n}(B_{n}(\boldsymbol{\theta}_{0}, \epsilon_{n}; 2)) \geq \Pi_{n}(U_{n}) \gtrsim (mn^{-\frac{1}{2}}\epsilon_{n})^{c_{\boldsymbol{\theta}}}$ for some constant $m>0$. Taking reciprocals and using the definition of $\epsilon_{n}$, the logarithm of this bound is of order $O(\log(n))$, which is dominated by $n\epsilon_{n}^{2}$, yielding the stated result. 

\begin{proposition}
$\mathbb{P}_{\boldsymbol{\theta}_{0}}(H_{n}^{c}) \rightarrow 0.$
\end{proposition}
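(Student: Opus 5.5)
The plan is to make explicit the (implicit) definition of $H_n$ as a set of latent paths whose coordinates are controlled uniformly over $t\le n$ at a logarithmic or polynomial scale. I would then show that, under $\mathbb{P}_{\boldsymbol{\theta}_0}$, each family of latent coordinates leaves its box with vanishing probability, and conclude by a finite union bound. Concretely, I take $H_n$ to consist of paths $\boldsymbol{s}_n$ with the following properties.
\begin{itemize}
\item All log-volatilities satisfy $\max_{t\le n}|h_{M,t}|,\ |\bar h_{t,i}|,\ |\tilde h_{t,j}| \le C_1\log n$.
\item All DSP log-variances satisfy $\max_{t\le n}|h_{\beta_a,t}|,\ |h_{\alpha_a,t}|\le C_1\log n$.
\item The factors satisfy $\max_{t\le n}\|\boldsymbol{f}_t\|\le n^{C_2}$.
\item The coefficient paths satisfy $\max_{t\le n}|\alpha_{a,t}|,\ |\beta_{a,t}|\le n^{C_2}$.
\end{itemize}
These are the same bounds that feed the eigenvalue sandwich $m_nI_N\preceq\Sigma_t(\boldsymbol{\theta})\preceq M_nI_N$ used in Proposition 1, so $H_n$ is consistent with how the earlier propositions were proved. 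Since $N$, $r$ and $c_{\boldsymbol{\theta}}$ are fixed, it suffices to show that each of the finitely many families has exit probability $o(1)$.

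\emph{Gaussian AR(1) log-volatilities.} Under $\boldsymbol{\theta}_0$ the persistence parameters satisfy $|\phi|<1$, so I initialize from (or dominate by) the stationary law. Then each $h_t$ is Gaussian with variance $\sigma^2/(1-\phi^2)$, uniformly bounded in $t$. A union bound over $t\le n$ gives
\[
\mathbb{P}\Bigl(\max_{t\le n}|h_t-\mu|>C_1\log n\Bigr)\le 2n\exp\{-c\,C_1^2\log^2 n\}\to 0 .
\]

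\emph{DSP log-variances.} These are AR(1) processes with $Z(\tfrac12,\tfrac12,0,1)$ innovations, which are log-$\beta'$ variables and have exponential tails $\mathbb{P}(|\eta|>x)\le 2e^{-x/2}$. Writing $h_t-\mu=\sum_{k\ge0}\phi^k\eta_{t-k}$, Markov's inequality applied to $\mathbb{E}e^{s|h_t-\mu|}$ for small $s>0$ gives a tail of order $e^{-s x}$ (finite as $|\phi|<1$). Choosing $C_1 s>1$ and taking the union bound over $t$ yields $n\cdot n^{-sC_1}\to0$.

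\emph{Factors and coefficient paths.} On the event already controlled, $\tilde\Sigma_t\preceq n^{C_1}I_r$, so $\boldsymbol f_t$ is conditionally Gaussian with polynomially bounded variance. A Gaussian union bound then gives $\max_t\|\boldsymbol f_t\|\le n^{C_2}$ with high probability. For the random walks $\beta_{a,t}=\beta_{a,1}+\sum_{s<t}\omega_{\beta_a,s}$, on the same event each innovation is conditionally $N(0,v_s)$ with $v_s\le n^{C_1}$. Hence $\max_t|\beta_{a,t}-\beta_{a,1}|$ is dominated by the maximum of a Gaussian martingale with variance at most $n^{1+C_1}$, and Doob's (or Kolmogorov's) maximal inequality bounds the exit probability by $n^{1+C_1}/n^{2C_2}\to0$ once $C_2>(1+C_1)/2$. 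The same argument applies to $\alpha_{a,t}$. Combining the finitely many pieces through conditioning and the law of total probability gives $\mathbb{P}_{\boldsymbol{\theta}_0}(H_n^c)\to0$.

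\emph{Main obstacle.} The delicate step is the DSP layer. The global scales $\tau_{a,0}$ and $\tau_{\beta_a}$ are half-Cauchy under the prior, but under $\mathbb{P}_{\boldsymbol{\theta}_0}$ they are fixed true values, which absorb into $\mu_{\beta_a}$. The issue is therefore only the non-Gaussian $Z$-innovations, and I need their exponential moment to survive the AR(1) summation. I would first verify $\mathbb{E}e^{s|\eta|}<\infty$ for $s<1/2$ directly from the $Z$-density, and then use $\sum_k|\phi|^k<\infty$ with Hölder's inequality. If that is awkward, a fallback is to bound $\max_t|\eta_t|\le 3\log n$ with high probability, which forces $|h_t-\mu|\le 3\log n/(1-|\phi|)$ deterministically.
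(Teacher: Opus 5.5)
Your proposal is correct and follows essentially the paper's argument. The paper also takes a finite union bound over the components of $H_n$ and uses stationary Gaussian Chernoff bounds with a union over $t$ for $h_M$, $\tilde h$, $\bar h$. For the $Z$-innovation AR(1) log-variances it uses an exponential-moment Chernoff bound, computing the moment exactly as $\prod_{j\ge 0}\sec(\pi\phi_0^{j}s)$ for $s\in(0,1/2)$. For $\beta$ it conditions on a high-probability event $G_n$ controlling $h_{\beta}$, so that the random walk is conditionally Gaussian with variance $O(n^{1+c/s})$.

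The differences are minor. The paper's $H_n$ contains only the five events $A_{M,n},A_{\alpha,n},A_{\beta,n},A_{\tilde h,n},A_{\bar h,n}$, with no constraint on $\boldsymbol{f}_t$ or on the $\alpha_{a,t}$ paths, since $\boldsymbol{f}_t$ is integrated out of $Q_{\boldsymbol{\theta},\boldsymbol{s}_n}^{(n)}$. The paper also controls $\beta_{i,t}$ pointwise at level $\sqrt{(\kappa+1)C_4n^{\alpha}\log n}$ with a union bound over $(i,t)$ rather than through Kolmogorov's maximal inequality; your maximal-inequality argument would also work at that level.
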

Let $H_{n} = A_{M,n} \cap A_{\alpha,n} \cap A_{\beta,n}\cap A_{\tilde{h},n}\cap A_{\bar{h},n}$. By a union bound, it suffices to show that each complement has vanishing probability. For the Gaussian AR(1) log-volatility processes $(h_{M}, \tilde{h}, \bar{h})$, stationary Gaussian tail bounds together with a union bound over $t$ imply maxima of order $\log(n)$ with probability tending to one. For the $Z$-innovation AR(1) process $h_{\alpha_{a},t}$, the moment generating function $M(t) = \sec(\pi t)$ yields exponential tails via Chernoff bounds, which again implies vanishing tail probability. Finally, for $\beta_{i,t}$ we first control the latent log-variance path $h_{\beta_{i}}$ on a high-probability event $G_{n}$, which bounds the conditional variance of $\beta_{a,t}-\beta_{a,t,0}$, and then Gaussian tail bounds together with a union bound show that the complement has vanishing probability. We now state the main result.  

\begin{Theorem}\label{thm:correlation_posterior_contraction}
    $\Pi_{n}(\boldsymbol{\theta} \in \Theta_{n}:d_{n,\boldsymbol{S}_{0,n}}(\boldsymbol{\theta},\boldsymbol{\theta}_{0})\geq M_{n}\epsilon_{n}|\boldsymbol{r}_{1:n}, r_{M,1:n}) \xrightarrow{\mathbb{P}_{\boldsymbol{\theta}_{0}}} 0$.
\end{Theorem}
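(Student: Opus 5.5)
We will prove the theorem by applying Theorem 3 of \cite{ghosal2007convergence} conditionally on the latent path, and then removing the conditioning with a total-expectation argument.

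\textbf{Step 1 (conditional contraction).} Fix a latent path $\boldsymbol{s}_{n}\in H_{n}$ and work with the conditional experiment $Q^{(n)}_{\boldsymbol{\theta},\boldsymbol{s}_{n}}$ in (\ref{condexp}). This experiment is a product of independent, non-identically distributed Gaussian laws, so the semimetric $d_{n,\boldsymbol{s}_{n}}$ is exactly the averaged Hellinger distance required by that theorem. Its three hypotheses are supplied as follows.
\begin{itemize}
\item The entropy condition is Proposition 1.
\item The existence of tests with exponentially small errors on the shells $S_{n,j}$ is Proposition 2.
\item The prior mass condition, in its shell-ratio form with $k=2$, is Proposition 3.
\end{itemize}
The theorem then gives
\[
\mathbb{E}_{\boldsymbol{\theta}_{0},\boldsymbol{s}_{n}}\,\Pi_{n}\bigl(\boldsymbol{\theta}\in\Theta_{n}: d_{n,\boldsymbol{s}_{n}}(\boldsymbol{\theta},\boldsymbol{\theta}_{0})\geq M_{n}\epsilon_{n}\mid \boldsymbol{r}_{1:n}, r_{M,1:n}\bigr)\to 0
\]
for every $M_{n}\to\infty$. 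Because Propositions 1--3 hold uniformly over $\boldsymbol{s}_{n}\in H_{n}$, I will retrace the proof of that theorem and record an explicit bound. The bound has the form
\[
\sup_{\boldsymbol{s}_{n}\in H_{n}} Q^{(n)}_{\boldsymbol{\theta}_{0},\boldsymbol{s}_{n}}\phi_{n,\boldsymbol{s}_{n}} + \frac{1}{n\epsilon_{n}^{2}} + \sum_{j\geq M_{n}} e^{-(K/2)j^{2}n\epsilon_{n}^{2}}.
\]
Here the middle term comes from the Chebyshev lower bound on the evidence, obtained via the $K$ and $V_{2,0}$ control in $B_{n}$. This bound is free of $\boldsymbol{s}_{n}$, so the convergence is uniform over $H_{n}$.

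\textbf{Step 2 (unconditioning).} Write $\boldsymbol{S}_{0,n}$ for the true latent path. Split the expectation of the posterior probability on the events $\{\boldsymbol{S}_{0,n}\in H_{n}\}$ and its complement. On the first event, apply the tower property, conditioning on $\boldsymbol{S}_{0,n}$ and $r_{M,1:n}$, and use the uniform bound from Step 1. On the complement, the posterior probability is at most 1, so that part contributes at most $\mathbb{P}_{\boldsymbol{\theta}_{0}}(H_{n}^{c})$, which tends to zero by Proposition 4. The expectation therefore tends to zero, and Markov's inequality turns this into convergence in $\mathbb{P}_{\boldsymbol{\theta}_{0}}$-probability.

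\textbf{Main obstacle.} The delicate step is justifying that the posterior given $(\boldsymbol{r}_{1:n}, r_{M,1:n})$ coincides with the conditional-experiment posterior evaluated at the true path $\boldsymbol{S}_{0,n}$. In the model the latent states are themselves integrated under the prior, so this identification does not hold automatically.

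I will first try to handle this by treating $\boldsymbol{s}_{n}$ as held at its realized value, consistent with the indexing $d_{n,\boldsymbol{S}_{0,n}}$ in the statement. Equivalently, the sieve prior on $\boldsymbol{\theta}$ would be conditioned on the latent path. Under that treatment, Ghosal--van der Vaart applies verbatim with the product structure (\ref{condexp}).

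A second, related point needs checking. The mean $\mu_{t}(\boldsymbol{\theta})$ is fixed given $\boldsymbol{s}_{n}$ and $r_{M,t}$, so the tests in Proposition 2 only have to separate covariances. I must confirm that the covariance bounds $m_{n}I_{N}\preceq\Sigma_{t}\preceq M_{n}I_{N}$ enter Proposition 3 only through polylogarithmic factors, so that they do not spoil the $O(\log n)\ll n\epsilon_{n}^{2}$ comparison.
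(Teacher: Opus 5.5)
Your proposal is essentially the paper's proof. You apply Theorem 3 of \cite{ghosal2007convergence} to the conditional experiment uniformly over $\boldsymbol{s}_{n}\in H_{n}$ via Propositions 1--3, then split on $\{\boldsymbol{S}_{0,n}\in H_{n}\}$ and bound the complement by $\mathbb{P}_{\boldsymbol{\theta}_{0}}(H_{n}^{c})\to 0$ via Proposition 4. The only difference is cosmetic: the paper applies Markov's inequality conditionally on each path before unconditioning, while you uncondition the expectation first and apply Markov once.

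The identification you flag as the main obstacle is exactly what the paper assumes implicitly: the posterior is treated as the conditional-experiment posterior evaluated at the realized true path. For your second check, note that on $H_{n}$ the eigenvalue bounds are polynomial in $n$ (e.g.\ $m_{n}=n^{-(3+p)}$), not merely polylogarithmic. This is still harmless, since it only shrinks the Euclidean ball $U_{n}$ by a polynomial factor and so costs $O(\log n)\ll n\epsilon_{n}^{2}$ in the prior-mass bound.
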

Conditional on any fixed latent path $\boldsymbol{s}_{n}$, Theorem 3 of \cite{ghosal2007convergence} implies that the posterior mass outside the $d_{n,\boldsymbol{s}_{n}}$-ball of radius $M_{n}\epsilon_{n}$ has vanishing expectation uniformly over $H_{n}$. Since this posterior mass is bounded in $[0,1]$, Markov's inequality then yields conditional convergence in probability uniformly over $\boldsymbol{s}_{n} \in H_{n}$. To pass from fixed $\boldsymbol{s}_{n}$ to the random true path $\boldsymbol{S}_{0,n}$, we decompose according to the event $H_{n}$. On $H_{n}$, the conditional error is controlled by the uniform bound, while on $H_{n}^{c}$ it is bounded by $\mathbb{P}_{\boldsymbol{\theta}_{0}}(H_{n}^{c}) \rightarrow 0$. Since both terms vanish, posterior contraction follows in $\mathbb{P}_{\boldsymbol{\theta}_{0}}$-probability. 

Theorem \ref{thm:correlation_posterior_contraction} establishes posterior contraction for DSP-MFSV-CAPM. It shows that as the sample size increases, DSP-MFSV-CAPM can consistently recover the true parameters. This provides theoretical support for using DSP-MFSV-CAPM to model stock portfolios. Aside from the observation equation and the prior specification, the theorem imposes no additional structural assumptions. The result concerns posterior contraction around the true parameter $\boldsymbol{\theta}_{0}$. Under model misspecification, however, $\boldsymbol{\theta}_{0}$ should instead be interpreted as the pseudo-true value rather than as the data-generating truth. 
\section{Numerical Results}
\subsection{Simulation study}
We consider five distinct simulation designs to evaluate DSP-MFSV-CAPM and to compare its performance with that of competing methods. In each design, we simulate the excess returns of $N = 30$ assets over $T = 1000$ time points. For each of the 100 seeds, we compute performance metrics over the full time series and then average these seed-specific summaries to obtain scenario-level results. We report the root mean squared error (RMSE) of the estimated score path relative to the truth, the mean interval width (width), and the empirical interval coverage (coverage). Together, these metrics assess both point estimation and uncertainty quantification: lower RMSE indicates more accurate recovery of the underlying dynamics, shorter intervals indicate greater sharpness, and empirical coverage close to the nominal $95\%$ level indicates better calibration. Interval performance should be assessed jointly through coverage and width. An interval spanning the entire parameter space would trivially attain $100\%$ coverage but would be uninformative. Conversely, a narrow interval with low empirical coverage would be poorly calibrated. All code is available in a GitHub repository at the following link: \href{https://github.com/DanielCoulson/Modeling-Dynamic-Correlation-Matrices-with-Shrinkage-Priors}{\texttt{Modeling-Dynamic-Correlation-Matrices-with-Shrinkage-Priors}}.

In Scenario 4, we report the mean absolute error (MAE). In the remaining scenarios, we report two MAE-based summaries. The first is the steady-state MAE (MAE1), defined as the MAE over time points 21--70 after a breakpoint; this measures how accurately a method estimates the correlation once the new regime has stabilized. The second is the transient MAE (MAE2), defined as the MAE over the first 50 post-break time points; this measures how accurately a method tracks the correlation immediately following a regime change. Because the rolling methods do not produce estimates for the first 59 time points, all reported metrics are computed over time points 60--1000. In Scenarios 1--3 and 5, we further report the mean response lag, defined as the number of periods after a break until the estimated score moves by at least $50\%$ of the true jump in the correct direction relative to its pre-break level. We also report the mean settling lag, defined as the number of periods after a break until the estimated score first enters a tolerance band equal to $10\%$ of the jump size around the truth and remains within that band for five consecutive periods. If the thresholds are not reached within the observed post-break period MAE1 and MAE2 are reported as $\infty$.

Scenario 1 studies abrupt structural breaks in asset dependence within a standard one-factor return model. Returns are generated as $r_{a,t} = \alpha_{a} + \beta_{a} r_{M,t} + \epsilon_{a,t}$, where $\alpha_{a} \sim  N(0,0.03^{2})$, $\beta_{a} \sim \text{Unif}(0.7,1.4)$, $r_{M,t} \sim N(0,\sigma_{M,t}^{2})$, and $\boldsymbol{\epsilon}_{t} \sim N_{N}(\boldsymbol{0}_{N}, \Sigma_{\epsilon,t})$, with $\Sigma_{\epsilon,t}$ calibrated so the implied asset correlation matrix follows the desired regime-specific pattern. We impose four 250-period regimes, with breaks at $t = 250$, $500$, and $750$, and market-factor standard deviations $(0.80, 2.20, 1.10, 1.60)$, corresponding, respectively, to low equicorrelation $(\rho = 0.10)$, a crisis jump to high equicorrelation ($\rho = 0.70$), a reversion to moderate equicorrelation ($\rho = 0.25$), and a final three-block dependence structure with 10 assets per block, within-block correlation $0.65$, and between-block correlation $0.35$. Each asset's standard deviation is drawn once, $\sigma_{a}\sim \text{Unif}(1.20, 2.00)$, for $a= 1,\dots,N$, and is then scaled by the regime-specific multipliers $(1,1.35, 1.10, 1.25)$. 

Scenario 2 studies sparse, time-varying asset dependence with repeated crisis clustering. Returns are generated from the same one-factor model, $r_{a,t} = \alpha_{a} + \beta_{a}r_{M,t} + \epsilon_{a,t}$, $\alpha_{a} \sim N(0,0.03^{2})$, $\beta_{a} \sim \text{Unif}(0.7,1.4)$, $r_{M,t} \sim N(0, \sigma_{M,t}^{2})$, $\sigma_{a} \sim \text{Unif}(1.20,2.00)$, and $\boldsymbol{\epsilon}_{t}\sim N_{N}(\boldsymbol{0}_{N},\Sigma_{\epsilon,t})$, with $\Sigma_{\epsilon,t}$ calibrated so that the implied asset correlation matrix follows the desired regime-specific pattern. The assets are partitioned into three groups of 10. During calm periods, the asset correlation matrix is sparse: within-group correlation is 0.20, while all between-group correlations are 0. Then three crisis episodes are introduced over periods 251--350, 601--700, and 851--925. During these crisis windows, within-group correlation rises to 0.75, the between-group correlation between groups 1 and 2 increases to 0.35, and all other between-group correlations are set to 0.05. Market volatility also rises from 0.80 in calm periods to 2.20 in crises, and asset standard deviations are scaled by 1.00 in calm periods and 1.25 in crises. For each state, the target asset covariance matrix is constructed from the corresponding regime-specific correlations and volatilities. 

Scenario 3 studies abrupt changes in latent factor dimension and loading structure within a multi-factor return model. Returns are generated as 
\begin{equation*}
    r_{a,t} = \alpha_{a} + \beta_{a}r_{M,t} + \gamma_{a}z_{2,t} + \delta_{a}z_{3,t} + \eta_{a} z_{4,t} + \theta_{a} z_{5,t} + \kappa_{a} z_{6,t} + e_{a,t},
\end{equation*}
with $\alpha_{a} \sim N(0,0.03^{2})$, $\beta_{a} \sim \text{Unif}(0.7,1.4)$, $r_{M,t} \sim N(0,\sigma_{M,t}^{2})$, $z_{j,t} \sim N(0, \sigma_{j,t}^{2})$, $\boldsymbol{e}_{t} \sim N_{N}(\boldsymbol{0}_{N},D_{t}^{2})$, and baseline idiosyncratic standard deviations $\sigma_{e,a} \sim \text{Unif}(0.70,1.20)$, where $D_{t}^{2} = c_{t}^{2}\text{diag}(\sigma_{e,1}^{2},\dots, \sigma_{e,30}^{2})$, with $c_{t} \in \{1, 1.35, 1.10 \}$ corresponding to three different regimes. We consider three groups, $G_{1}, G_{2}, G_{3}$, of 10 assets and impose three regimes of lengths 300, 400, and 300. In particular, $\gamma_{a} \sim \boldsymbol{1}\{ a \in G_{1} \cup G_{2} \}\max\{ 0, N(0.75,0.08^{2}) \}$, $\delta_{a} \sim \boldsymbol{1}\{ a \in G_{1} \} \max \{ 0, N(0.90, 0.08^{2})\}$, $\eta_{a} \sim \boldsymbol{1}\{ a \in G_{2} \} \max\{ 0, N(0.80, 0.08^{2}) \}$, $\theta_{a} \sim \boldsymbol{1}\{ a \in G_{3} \} \max\{ 0, N(0.80, 0.08^{2}) \}$, $\kappa_{a}\sim \max\{0, N(0.65, 0.08^{2}) \}.$ The first regime is a calm one-factor regime in which only the market factor is active and $\sigma_{M,t} = 0.80$. The second regime is a severe six-factor regime with $\sigma_{M,t} = 2$, in which $z_{2,t}, z_{3,t}, z_{4,t}, z_{5,t}$, and $z_{6,t}$ become active with standard deviations (1.10, 0.95, 0.85, 0.80, 0.75), respectively. The final regime is a milder two-factor period where only $r_{M,t}$ and $z_{2,t}$ remain active, with standard deviations 1.20 and 0.65, respectively, and the sector loading is reduced to $70\%$ of its regime 2 value.

Scenario 4 studies smoothly evolving dependence under a correctly specified DCC model with heavy-tailed innovations. Let $\boldsymbol{w}_{t} = (r_{M,t}, r_{1,t}, \dots, r_{N,t})^{'}$ denote the joint vector of the observed market return and excess asset returns. We generate $\boldsymbol{w}_{t} = \boldsymbol{\mu} + D\boldsymbol{z}_{t}$, where $\boldsymbol{z}_{t}|\mathcal{F}_{t-1}$ follows a multivariate Student-t distribution with 8 degrees of freedom and conditional correlation matrix $R_{t}$. Then $R_{t}$ evolves according to a DCC(1,1) recursion with parameters $(a,b) = (0.03, 0.95)$. The unconditional joint correlation target is calibrated so that the long-run market-asset correlation is 0.35 and the long-run asset-asset correlation follows an equicorrelation structure with $\rho = 0.15$. Marginal scales are heterogeneous, with market standard deviation $1.20$, asset standard deviations drawn independently from $\text{Unif}(0.80, 1.40)$, and asset-specific means $\alpha_{a} \sim N(0,0.03^{2})$. After a burn-in of 300 observations, we retain $1000$ time points for analysis.

Scenario 5 preserves the four-regime dependence design of Scenario 1, but allows the factor-model coefficients to evolve over time. Specifically, each asset's $\alpha_{a,t}$ and $\beta_{a,t}$ follow persistent Gaussian AR(1) processes around asset-specific long-run means $\bar{\alpha}_{a} \sim N(0,0.03^{2})$ and $\bar{\beta}_{a} \sim \text{Unif}(0.7,1.4)$. The persistence parameters are $\phi_{\alpha} = \phi_{\beta} = 0.995$. The innovation standard deviations are 0.004 for $\alpha_{a,t}$ and 0.012 for $\beta_{a,t}$ in calm periods and are scaled by regime-dependent multipliers $(1.0, 3.0, 1.2, 1.8)$. The market factor variance also follows a stochastic volatility process, so dependence changes reflect both abrupt regime shifts and smoother variation in factor exposures and market risk. 

For model comparison, we benchmark DSP-MFSV-CAPM against eight alternative specifications. The first is a multivariate factor stochastic volatility (MFSV) model fit directly to the return panel and estimated using 3 latent factors, 12,000 MCMC draws, 1,500 burn-in iterations, and a thinning rate of 4. The second is an exponentially weighted rolling covariance estimator (Rolling) based on a 60-period window with decay factor 0.94. The third is a rolling Ledoit-Wolf shrinkage estimator (Ledoit-Wolf), also based on a 60-period window. The fourth through sixth benchmarks are multivariate GARCH-based dynamic correlation models with constant-mean univariate sGARCH(1,1) margins and DCC(1,1)-type correlation dynamics: a Gaussian DCC model (DCC), a Student-t DCC model (DCC-t), and an asymmetric DCC (ADCC) model. The seventh benchmark is a Bayesian Cholesky stochastic volatility model (Chol-SV), implemented as a VAR(1) with stochastic volatility and a horseshoe prior on the Cholesky coefficients, using 12,000 MCMC draws, 1,500 burn-in iterations, and a thinning rate of 4. The eighth benchmark is a hard-thresholded rolling correlation estimator (Rolling-threshold), which constructs a 60-period rolling sample correlation matrix, sets off-diagonal entries with absolute value below 0.10 to zero, and projects the result to the nearest positive definite correlation matrix. For DSP-MFSV-CAPM, we use 3 latent factors and retain 3,000 posterior draws after 1,500 burn-in iterations, and a thinning rate of 4. For the non-Bayesian rolling and DCC-based competitors, interval estimates are obtained by a moving-block bootstrap with 200 resamples and block length 20, whereas uncertainty for the Bayesian procedures is derived from posterior output, using $95\%$ credible intervals for DSP-MFSV-CAPM, and normal approximations for MFSV and Chol-SV. The tables report metrics averaged over 100 seeds. 
\begin{table}[t]
\centering
\caption{Results for Scenario 1}
\label{tab:sim1}
\begin{tabular}{lccccccc}
\hline
Model & RMSE & width & coverage & response lag & MAE1 & settling lag  & MAE2  \\
\hline
MFSV & 0.061 & 0.360 & 0.981 & 1.530 & 0.042 & 15.20& 0.061  \\
Rolling &0.292  & 0.381  & 0.531 & 1.130 & 0.239 & 177.0& 0.255  \\
Ledoit-Wolf &0.126  &0.553  & 0.766 & 26.00 & 0.125 & 57.30&0.246  \\
DCC & 0.193 &0.144  & 0.004& 21.40 & 0.204 & 730.0& 0.231 \\
DCC-t & 0.203 & 0.139 & 0.001 & 17.60 & 0.205&  $\infty$&0.227 \\
ADCC &0.195 & 0.146 & 0.017& 20.80 & 0.198 & 556.0 &0.225  \\
Chol-SV &0.197 &0.220 & 0.327& 3.860 & 0.138& 352.0&0.150  \\ 
Rolling-threshold &0.272 & 0.357 & 0.531 & 34.10 & 0.202& 94.30&0.222  \\ 
\hline
DSP-MFSV-CAPM & 0.057  & 0.217  & 0.952 & 1.040 & 0.040&14.10&0.057    \\
\hline
\end{tabular}
\end{table}
\begin{table}[t]
\centering
\caption{Results for Scenario 2}
\label{tab:sim2}
\begin{tabular}{lccccccc}
\hline
Model & RMSE & width & coverage & response lag & MAE1 & settling lag & MAE2  \\
\hline
MFSV & 0.059 & 0.345  &0.957 &0.505   &  0.030 &8.530  & 0.046 \\
Rolling & 0.363  & 0.383 & 0.292 &0.000  &0.287 & 219.0 &0.329  \\
Ledoit-Wolf & 0.201 & 0.541  &0.708 &29.50  & 0.158  &59.20 &0.319  \\
DCC & 0.231 & 0.164  & 0.000 &13.90  & 0.251& $\infty$&0.267  \\
DCC-t & 0.235 & 0.156 & 0.000 &2.550   &0.255 & $\infty$&0.268   \\
ADCC & 0.232 & 0.171 &0.000& 31.00  & 0.248&$\infty$ &0.264 \\
Chol-SV & 0.212 & 0.258 & 0.101 & 2.110  &0.191 &  $\infty$&0.203 \\ 
Rolling-threshold & 0.437 & 0.622 & 0.292 &  31.00&0.444&231.0 & 0.495  \\ 
\hline
DSP-MFSV-CAPM & 0.064  & 0.187   & 0.929 &0.562  &0.039 &11.70 &0.056   \\
\hline
\end{tabular}
\end{table} 
\begin{table}[t]
\centering
\caption{Results for Scenario 3}
\label{tab:sim3}
\begin{tabular}{lccccccc}
\hline
Model & RMSE & width & coverage & response lag & MAE1 & settling lag & MAE2 \\
\hline
MFSV & 0.054  & 0.402  &0.996          & 3.340& 0.036   & 28.60 & 0.044   \\
Rolling & 0.191  & 0.246  & 0.425     & 26.90& 0.158  & 65.60& 0.165 \\
Ledoit-Wolf & 0.070  & 0.275&0.740     &27.70&0.066    & 83.10&0.117\\
DCC & 0.109 & 0.084  & 0.039          & 4.590 &0.083  & 339.0& 0.092\\
DCC-t & 0.115 & 0.082&0.040            & 2.210&0.079& 327.0&0.085\\
ADCC & 0.109 & 0.084 & 0.029          &3.580&0.083  & 384.0&0.092\\
Chol-SV & 0.084 & 0.206 & 0.758       &1.330&0.043  &121.0 &0.053  \\ 
Rolling-threshold &0.130 &0.235&0.441  &57.60&0.112   & 39.90&0.129 \\ 
\hline
DSP-MFSV-CAPM & 0.049&0.202&0.954     & 4.060&0.033& 27.50& 0.042\\
\hline
\end{tabular}
\end{table}
\begin{table}[t]
\centering
\caption{Results for Scenario 4}
\label{tab:sim4}
\begin{tabular}{lcccc}
\hline
Model & RMSE & width & coverage & MAE    \\
\hline
MFSV &0.146  &0.046  &0.271&0.139  \\
Rolling &0.402  & 0.184 &0.000&0.401  \\
Ledoit-Wolf &0.053  &0.222  &0.813&0.041 \\
DCC & 0.016 &0.100  &0.710&0.015 \\
DCC-t & 0.024 &0.106 &0.767&0.023\\
ADCC & 0.017& 0.100&0.718&0.016 \\
Chol-SV & 0.120 &0.245 &0.784&0.114\\ 
Rolling-threshold &0.582 & 0.301& 0.000&0.579\\ 
\hline
DSP-MFSV-CAPM &0.117  & 0.259  & 0.670&0.109   \\
\hline
\end{tabular}
\end{table}

\begin{table}[t]
\centering
\caption{Results for Scenario 5}
\label{tab:sim5}
\begin{tabular}{lccccccc}
\hline
Model & RMSE & width & coverage & response lag & MAE1 & settling lag  & MAE2  \\
\hline
MFSV & 0.065  &0.375  & 0.987  & 1.710  &0.044  & 21.80 & 0.075   \\
Rolling & 0.297 & 0.389  &0.531 & 19.70  &0.225  & 86.30 & 0.231  \\
Ledoit-Wolf &0.135  &0.570 & 0.694&30.60 &0.146  &63.80 & 0.281 \\
DCC & 0.207  & 0.149  &0.000 &33.60  &0.217  & $\infty$ & 0.243 \\
DCC-t  &0.219&0.147  &0.000  &36.30  &0.217 & $\infty$ &  0.235 \\
ADCC &0.207 & 0.151  & 0.000&28.10  &0.213  & 697.0  & 0.239\\
Chol-SV & 0.211&0.274 & 0.316 &2.020  &0.149 & 382.0 &0.160  \\ 
Rolling-threshold &0.290 &0.389  &0.530  &37.90  &0.195 & 97.20 & 0.211\\ 
\hline
DSP-MFSV-CAPM &0.064   &0.226   &0.947  &1.730  & 0.044& 22.20 &0.074    \\
\hline
\end{tabular}
\end{table}

Overall, DSP-MFSV-CAPM performs strongly in designs with abrupt, nonstationary changes in dependence and remains competitive in the smooth-dynamics design. In Scenario 1, it has the lowest RMSE, the smallest MAE summaries, the shortest response lag, and the shortest settling lag. Its interval estimates are also well calibrated, with empirical coverage near the nominal $95\%$ level and substantially smaller interval width than those of competing methods, indicating a favorable tradeoff between coverage and interval width. The DCC-type methods produce narrow intervals in this setting but with very low coverage. In Scenario 2, which emphasizes sparse and localized correlation changes rather than broader market-wide changes in correlation, DSP-MFSV-CAPM no longer ranks first on every metric; MFSV performs slightly better in RMSE and the MAE summaries. Even so, DSP-MFSV-CAPM remains competitive overall. In Scenario 3, DSP-MFSV-CAPM again has the lowest RMSE, the smallest MAE measures, and narrow intervals with coverage near the nominal level. Although a few competitors have smaller raw response lags in this design, those gains do not translate into better overall estimation accuracy or interval performance. Scenario 4 is a less favorable setting for the proposed method because the data are generated from a correctly specified heavy-tailed DCC process with smooth dynamics and no breaks; accordingly, the DCC family and Ledoit-Wolf perform best in this setting. Even in that setting, however, DSP-MFSV-CAPM improves on MFSV and Chol-SV in both RMSE and MAE, outperforms the rolling-based estimators, and provides better interval coverage. In Scenario 5, DSP-MFSV-CAPM again performs competitively: achieving the lowest RMSE, the smallest MAE measures, and shortest interval coverage close to the nominal $95\%$ level while maintaining substantially sharper intervals than competing procedures. These results indicate that DSP-MFSV-CAPM performs well across the break-driven environments with broad changes in correlation that motivate the proposed method, while remaining competitive in the smooth setting that favors competing models. 

\subsection{Real-world examples}
\subsubsection{U.S. subprime mortgage crisis}
The U.S. subprime mortgage crisis, which unfolded between 2007 and 2010, originated in the collapse of the U.S. housing bubble. In particular, the securitization of mortgages and their inclusion in collateralized debt obligations (CDOs) played a central role in the crisis, as these instruments, despite receiving high ratings, proved substantially riskier than those ratings suggested. We consider two portfolios, each consisting of 30 stocks. The first portfolio comprises large technology-oriented stocks from the period. The second is more diversified, comprising stocks from 10 different sectors and spanning a range of firm sizes. We compute excess asset returns using data downloaded from Yahoo Finance and the Fama-French data library \cite{fama2023production}. We then fit DSP-MFSV-CAPM to returns from 4 January 2006 to 31 December 2009. This yielded 3,000 posterior draws of the score series. Figure 1 shows the posterior mean score time series of both portfolios, and Figure 2 shows the posterior mean score for the diversified portfolio with 95\% highest density intervals (HDIs) and the VIX index.

\begin{figure}[]
    \centering
    \includegraphics[width = 0.65\linewidth]{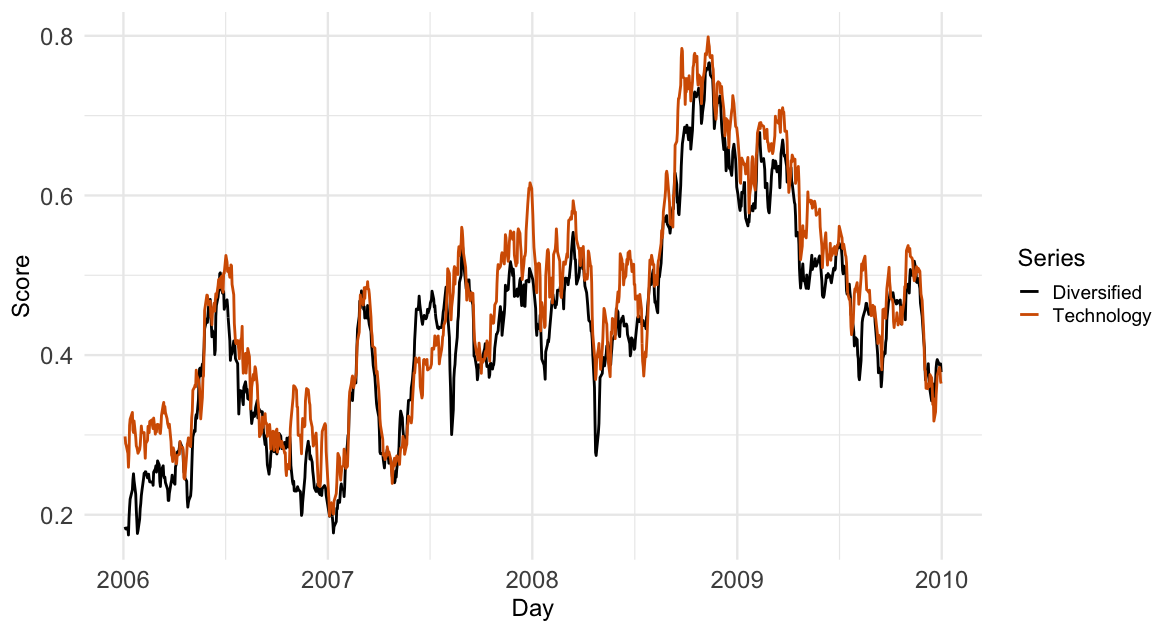}
    \caption{Posterior mean score time series for the technology portfolio (vermilion) and the diversified portfolio (black).}
    \label{fig:2008_tech}
\end{figure}

\begin{figure}[]
    \centering
    \includegraphics[width = 0.65\linewidth]{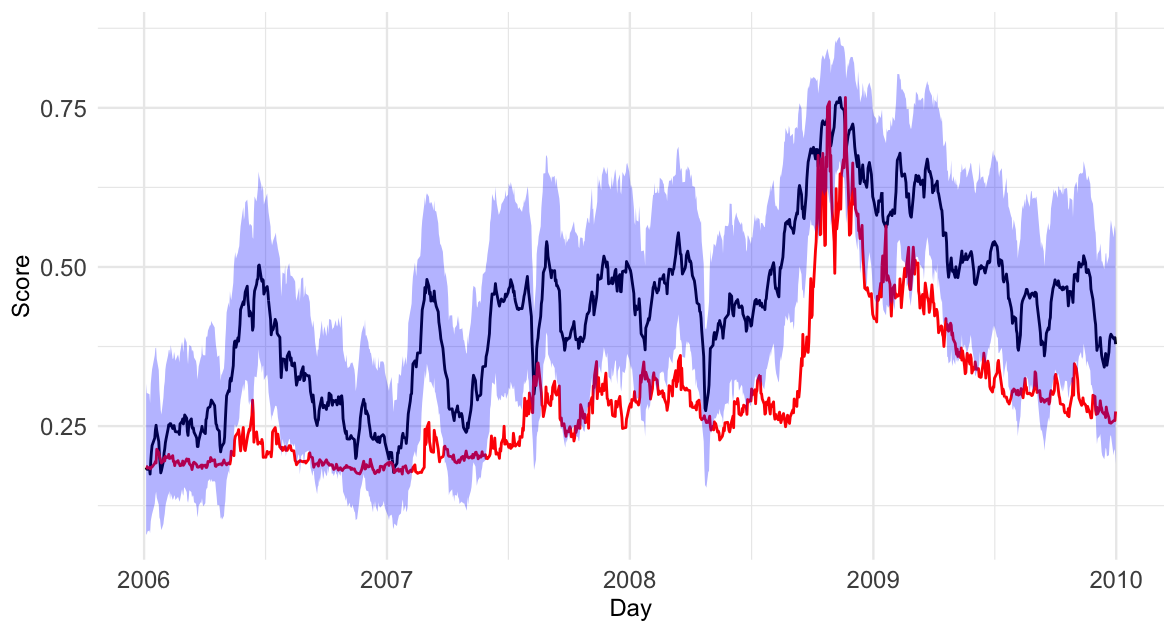}
    \caption{Posterior mean score time series for the diversified portfolio (black) with 95\% HDIs and the VIX index (red).}
    \label{fig:2008_vix}
\end{figure} 

Figure 1 shows that the technology portfolio exhibits a higher overall correlation level than the diversified portfolio. For example, in mid-2006, the correlation of the technology portfolio is higher than that of the diversified portfolio. This pattern changes in 2007; for example, in the first quarter of 2007, both portfolios exhibit a spike in correlation, with the diversified portfolio reaching a higher correlation than the technology portfolio. Overall, the correlation dynamics are similar across the crisis period. The series culminate in pronounced spikes, reaching peaks on 10 November 2008 and 12 November 2008 for the technology and diversified portfolios, respectively. 

Figure 2 shows several spikes, including one in June--July 2006, a period that coincided with an increase in the federal funds rate and emerging difficulties in the CDO market, including growth in credit default swaps linked to CDOs. The series then shows a noticeable spike in March 2007, which may coincide with growing warnings about an impending crisis; for example, a speech by Ben Bernanke discussing systemic risk associated with Fannie Mae and Freddie Mac. After this period, the series shows a prolonged increase in correlation that began around June 2007 and continued into 2008. The diversified portfolio correlation begins to rise from approximately 0.372 on 1 May 2008. The correlation rises rapidly during September 2008, such as the government takeover of Fannie Mae and Freddie Mac and the bankruptcy of Lehman Brothers, reaching 0.614 on 15 September 2008, according to the estimated score. The series exhibits minor fluctuations in correlation, although it remains elevated, reaching a peak of 0.766 on 12 November 2008. This may reflect the aftermath of these events and uncertainty surrounding the policy response to the crisis, including government lending and refinancing measures. 

\subsubsection{2020 COVID-19 pandemic}
The COVID-19 pandemic began in Wuhan, China, in December 2019. It quickly spread, prompting widespread public-health restrictions and substantial economic disruption, which had a large negative economic impact. We consider two portfolios, each consisting of 30 stocks. The first portfolio comprises large technology-oriented stocks from the period. The second is more diversified, comprising stocks from 10 different sectors and spanning a range of firm sizes. We compute excess asset returns using data downloaded from Yahoo Finance and the Fama-French data library \cite{fama2023production}. We use daily data from 3 January 2019 to 28 December 2023. We then fit DSP-MFSV-CAPM to the data and obtain 3,000 posterior draws of the score series. Figure 3 shows the posterior mean score time series of both portfolios, and Figure 4 shows the posterior mean score for the diversified portfolio, with 95\% HDIs, together with the VIX index. 

\begin{figure}[]
    \centering
    \includegraphics[width = 0.65\linewidth]{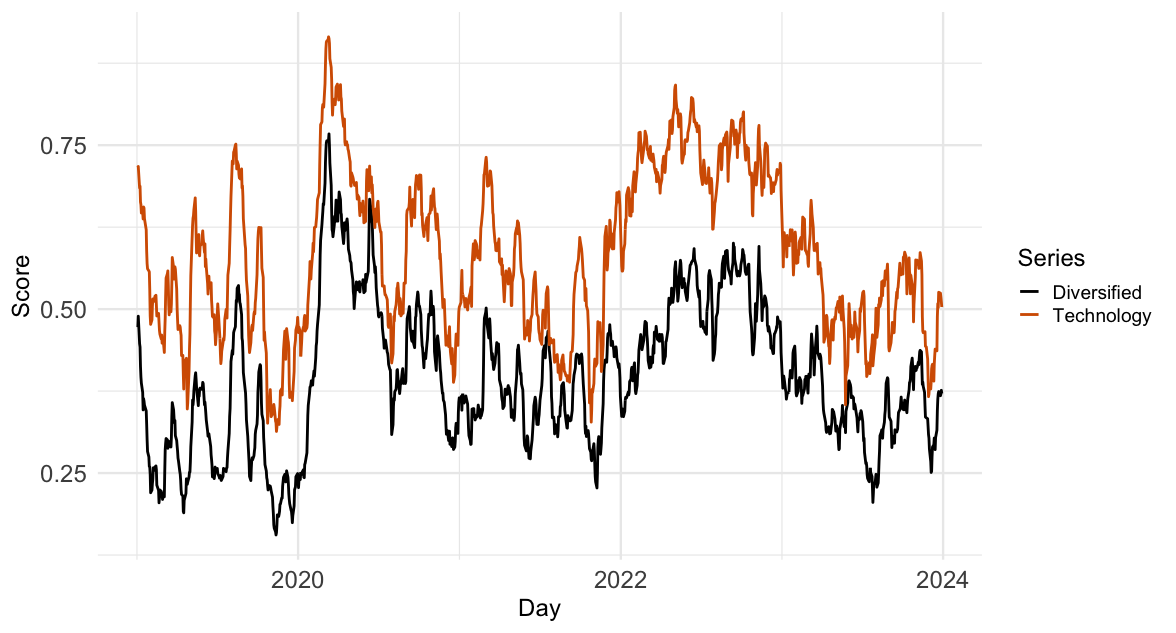}
    \caption{Posterior mean score time series for the technology portfolio (vermilion) and the diversified portfolio (black).}
    \label{fig:2020_tech}
\end{figure}

\begin{figure}[]
    \centering
    \includegraphics[width = 0.65\linewidth]{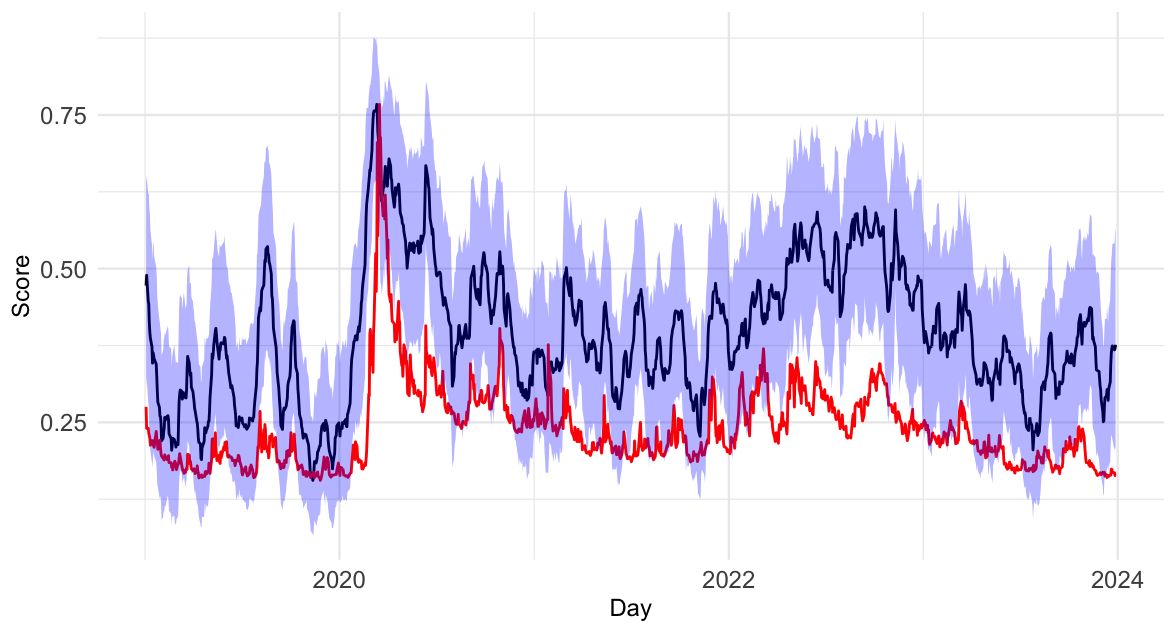}
    \caption{Posterior mean score time series for the diversified portfolio (black) with 95\% HDIs and the VIX index (red).}
    \label{fig:2020_vix}
\end{figure}

Figure 3 shows a large spike in the correlation series for the technology and diversified portfolios on 10 March 2020 and 11 March 2020, respectively, with nationwide lockdowns across the globe beginning soon after, including in the United Kingdom, parts of Europe, and some U.S. states, and with the World Health Organization declaring a pandemic on 11 March 2020. At the start of the period, the diversified portfolio has a lower correlation than the technology portfolio, but both series exhibit similar spikes in correlation. 

 Figure 4 shows several spikes. One early spike reaches a correlation of 0.536 on 19 August 2019. This increase may be associated with several developments, including recently announced U.S. tariffs on Chinese goods, global signs of an economic downturn, and a reduction in interest rates by the U.S. Federal Reserve announced at the end of July 2019. This was the first Federal Reserve rate cut since 2008 and may have signaled growing concern about economic conditions. This period also coincided with other signs of possible economic weakness, including an inverted U.S. Treasury yield curve. Other international stresses were also present, such as the sharp decline in the Argentine stock market on 12 August 2019. The series increased in early 2020, following reports of COVID-19 in late 2019, with the World Health Organization declaring the COVID-19 outbreak a public health emergency of international concern on 30 January 2020, by which point the estimated correlation had already reached 0.386. The subsequent increase coincides with several major developments in February, including rising death counts internationally and a large decline in international stock markets from 24 to 28 February. The Federal Reserve then reduced interest rates on 3 March in response to growing concern about COVID-19, and the U.S. stock market saw a large decline on 9 March 2020. The largest spike in Figure 4 occurs on 11 March 2020, with a correlation of 0.767. Although the estimated score peaks on 11 March 2020, the VIX peaks later, on 16 March 2020. 

Both examples show that, in a retrospective backtesting setting, the correlation score can serve as a portfolio-level indicator of financial stress, closely tracking and, in some cases, preceding spikes in the VIX. This suggests that the score captures changes in portfolio-wide correlation before they are fully reflected in broader market measures, providing a more portfolio-specific view of market instability. The examples also suggest that simple stock diversification offered little protection, with the diversified portfolio affected similarly to the technology portfolio during downturns. These results suggest that the method may be useful for portfolio managers as a backtesting tool to assess how diversification strategies would have performed under crisis conditions and to identify when alternative approaches to diversification may have been warranted. The method may also be useful for economic historians by providing a portfolio-level complement to the VIX for characterizing episodes of financial stress and their correspondence with contemporaneous events. 

\section{Conclusion}\label{sec-conc}
We proposed a novel approach for estimating the evolution of the overall correlation over time. To do so, we developed an information-theoretically grounded score to summarize a given correlation matrix, and we constructed a Bayesian hierarchical model with dynamic shrinkage to model time-varying correlation matrices. Through a posterior contraction result, we established theoretical support for the model. Across five simulation scenarios, our approach achieved strong performance relative to competing methods. In two real-world examples, we showed how our model can be used in practice to provide novel insights for investors and economic historians. Future work could build on the theoretical results in this paper, for example by considering alternative separation metrics or modes of stochastic convergence. Another direction is to explore computational approaches for extending the hierarchical model to higher dimensions.
\clearpage
\appendix

\setcounter{section}{0}
\setcounter{subsection}{0}
\setcounter{subsubsection}{0}
\setcounter{equation}{0}
\setcounter{figure}{0}
\setcounter{table}{0}

\renewcommand{\thesection}{S\arabic{section}}
\renewcommand{\thesubsection}{S\arabic{section}.\arabic{subsection}}
\renewcommand{\thesubsubsection}{S\arabic{section}.\arabic{subsection}.\arabic{subsubsection}}

\numberwithin{equation}{section}
\numberwithin{figure}{section}
\numberwithin{table}{section}

\part*{Supplementary Material}
\etocsetlocaltop.toc{part}


\section*{Section A}
Let $\boldsymbol{x} \in \mathbb{R}^{N}$ be the vector of standardized returns with $R \in \mathbb{S}_{++}^{N}$, the set of symmetric $N\times N$ positive definite matrices. Consider the quadratic form $Q_{R}(\boldsymbol{x}) = \boldsymbol{x}^{'}R^{-1} \boldsymbol{x}$ and the set $E_{R}(c) := \{ \boldsymbol{x} \in \mathbb{R}^{N} : \boldsymbol{x}^{'} R^{-1} \boldsymbol{x} \leq c\}$. Notice that on $E_{R}(c), \boldsymbol{x}^{'}R^{-1}\boldsymbol{x} = \boldsymbol{x}^{'}R^{-\frac{1}{2}}R^{-\frac{1}{2}}\boldsymbol{x} = ||R^{-\frac{1}{2}}\boldsymbol{x}||_{2}^{2} \leq c$, where $||\cdot||_{2}$ denotes the Euclidean norm. Let $\boldsymbol{u} = c^{-\frac{1}{2}} R^{-\frac{1}{2}} \boldsymbol{x}$ which yields the transformed constraint $||\boldsymbol{u}||_{2}^{2} \leq 1$. Clearly, $E_{R}(c)$ is the image of the unit ball $B_{N} = \{ \boldsymbol{u}: ||u||_{2} \leq 1 \}$ under the linear map $\boldsymbol{u} \rightarrow \sqrt{c} R^{\frac{1}{2}}\boldsymbol{u}$. Then the Jacobian determinant is given by $|\det(\sqrt{c}R^{\frac{1}{2}})|= c^{\frac{p}{2}}\det(R)^{\frac{1}{2}}$. Therefore, 
\begin{equation*}
    \text{Vol}(E_{R}(c)) = \text{Vol}(\sqrt{c}R^{\frac{1}{2}}B_{N}) = |\det(\sqrt{c}R^{\frac{1}{2}})| \text{Vol}(B_{N}) = c^{\frac{p}{2}} \det(R)^{\frac{1}{2}} \text{Vol}(B_{N}). 
\end{equation*}
Then, 
\begin{equation*}
    \frac{\text{Vol}(E_{R}(c))}{\text{Vol}(E_{I_{p}}(c))} = \frac{c^{\frac{p}{2}}\det(R)^{\frac{1}{2}}\text{Vol}(B_{N})}{c^{\frac{p}{2}}\det(I_{N})^{\frac{1}{2}}\text{Vol}(B_{N})} = \det(R)^{\frac{1}{2}}. 
\end{equation*}
By Hadamard's inequality, $\det(R) \leq \prod\limits_{i=1}^{N}R_{ii} = 1$. Note that equality occurs iff $R = I_{N}$. So, we have $\det(R)^{\frac{1}{2}} \leq 1$. This means $E_{R}(c)$ has the largest volume when assets are uncorrelated. Observe that
\begin{equation*}
    (\frac{\text{Vol}(E_{R}(c))}{\text{Vol}(E_{I_{N}}(c))})^{\frac{2}{N}} = \det(R)^{\frac{1}{N}} \implies \text{score}(R) = 1 - (\frac{\text{Vol}(E_{R}(c))}{Vol(E_{I_{N}}(c))})^{\frac{2}{N}}. 
\end{equation*}
\section*{Section B}
Using the notions in Section 4, the sieve construction assumes:\\
$\Theta_{n} = \{\boldsymbol{\theta} \in \Theta: |\mu_{M}|\leq \log(n),\, |\phi_{M}| \leq \sqrt{1-\frac{1}{\log(\log(n))}},\sigma_{M}^{2} \leq \log(n)^{0.5},||\Lambda||_{F} \leq \log(n),\\
\max\limits_{1 \leq k \leq r}|\tilde{\mu}_{k}|\leq \log(n), \max\limits_{1\leq k \leq r} |\tilde{\phi}_{k}| \leq \sqrt{1-\frac{1}{\log(\log(n))}},  \max\limits_{1 \leq k \leq r}\tilde{\sigma}_{k}^{2} \leq \log(n)^{0.5},\max\limits_{1\leq i \leq N} |\bar{\mu}_{i}| \leq \log(n),\\
\max\limits_{1 \leq i \leq N}|\bar{\phi}_{i}| \leq \sqrt{1-\frac{1}{\log(\log(n))}}, \max\limits_{1 \leq i \leq N}\bar{\sigma}^{2}_{i} \leq \log(n)^{0.5}, \max\limits_{1\leq i \leq N}|\mu_{\alpha_{i}}| \leq \log(n^{4}), \max\limits_{1 \leq i \leq N}|\mu_{\beta_{i}}| \leq \log(n^{4}),\\
\max\limits_{1 \leq i \leq N} |\phi_{\beta_{i}}| \leq \sqrt{1-\frac{1}{\log(\log(n))}}, \max\limits_{1 \leq i \leq N} |\phi_{\alpha_{i}}| \leq \sqrt{1-\frac{1}{\log(\log(n))}} \}$. \vspace{.1cm}\\ 

We also assume the following on the set of latent states: \\
$H_{n} = \{ \boldsymbol{s}_{n}: \max\limits_{1 \leq t \leq n}|h_{M,t-1}| \leq |\mu_{M}|+p\log(n),\max\limits_{1\leq i \leq N,1\leq t\leq n-1}|h_{\alpha_{i,t-1}}| \leq |\mu_{\alpha_{i}}| + \frac{c\log(n) + \log(2k(s))}{s},\\
\max\limits_{1\leq i\leq N, 1\leq t\leq n}|\beta_{i,t-1}|\leq \sqrt{(\kappa+1)C_{4}n^{\alpha}\log(n)},\max\limits_{1\leq k \leq r, 1\leq t \leq n}|\tilde{h}_{k,t-1}| \leq |\tilde{\mu}_{k}|+p\log(n),$ and \\
$\max\limits_{1 \leq i \leq N, 1\leq t \leq n}|\bar{h}_{i,t-1}| \leq |\bar{\mu}_{i}|+p\log(n)\}$ 
for constants $p,\alpha,c, \xi>0.$ \vspace{.4cm}

For $\boldsymbol{\theta}\in \Theta$ and $\boldsymbol{s} \in H $ define the conditional experiment, from (2), as follows:
\begin{equation*}
    Q_{\boldsymbol{\theta},\boldsymbol{s}_{n}}^{(n)} (\cdot) := \mathbb{P}_{\boldsymbol{\theta}}(\boldsymbol{r}_{1:n}\in \cdot|\boldsymbol{s}_{n}, \boldsymbol{r}_{M, 1:n}), \boldsymbol{r}_{1:n} = (\boldsymbol{r}_{1}, \dots, \boldsymbol{r}_{n}) \text{ and } \boldsymbol{r}_{M,1:n} = (r_{M,1},\dots, r_{M,n}) . 
\end{equation*}
Now note that $\boldsymbol{s}_{n}$ denotes the latent states up to time $n$. 
By conditional independence 
\begin{equation*}
    Q_{\boldsymbol{\theta}, \boldsymbol{s}_{n}}^{(n)} = \bigotimes_{t=1}^{n} Q_{\boldsymbol{\theta}, \boldsymbol{s}_{n,t}}, q_{\boldsymbol{\theta},\boldsymbol{s}_{n}}^{(n)}(\boldsymbol{x}^{(n)}) = \prod\limits_{t=1}^{n} q_{\boldsymbol{\theta}, \boldsymbol{s}_{n,t}}(\boldsymbol{x}_{t}), \text{ and } q_{\boldsymbol{\theta},\boldsymbol{s}_{n,t}} (\cdot) = \phi_{N}(\cdot; \mu_{t}(\boldsymbol{\theta}), \Sigma_{t}(\boldsymbol{\theta}))
\end{equation*}
is the Gaussian probability density function with mean vector $\mu_{t}(\boldsymbol{\theta})$ and covariance matrix $\Sigma_{t}(\boldsymbol{\theta})$.
Finally, define 
\begin{equation*}
    d_{n,\boldsymbol{s}_{n}}^{2}(\boldsymbol{\theta}, \boldsymbol{\theta}') := \frac{1}{n}\sum\limits_{t=1}^{n}h_{t}^{2}(\boldsymbol{\theta}, \boldsymbol{\theta}'), h_{t}^{2}(\boldsymbol{\theta}, \boldsymbol{\theta}') = \int(\sqrt{q_{\boldsymbol{\theta}, \boldsymbol{s}_{n,t}}}-\sqrt{q_{\boldsymbol{\theta}', \boldsymbol{s}_{n,t}}})^{2} d\lambda_{N}(\boldsymbol{x}), \boldsymbol{x}\in \mathbb{R}^{N}. 
\end{equation*}
\subsection*{Proof of Proposition 1}
\begin{proof}
First we establish Lipschitz-type bounds on the mean vector and covariance matrix on $\Theta_{n}$. 

Consider a fixed $\boldsymbol{s}_{n} \in H_{n}$. Recall for fixed $a \in \{ 1, \dots, N \}$, $\mu_{a,t} = \alpha_{a,t} + r_{M,t}\beta_{a,t}$. Then $\partial_{\boldsymbol{\theta}}\mu(\boldsymbol{\theta}) =0$ since for a fixed latent state vector, the mean does not depend on the parameters. 
Recall, 
\begin{align*}
        cov({\bf{r}}_{t})_{ij}|\boldsymbol{\theta},\boldsymbol{s}_{n}   &= \sum\limits_{k=1}^{r} \Lambda_{ik}\Lambda_{jk}\exp\{ \tilde{\mu}_{k} + \tilde{\phi}_{k}(\tilde{h}_{k,t-1} - \tilde{\mu}_{k})+ \frac{1}{2}\tilde{\sigma}_{k}^{2} \} \\ &+ \delta_{ij}\exp\{ \bar{\mu}_{i} + \bar{\phi}_{i}(\bar{h}_{i,t-1}-\bar{\mu}_{i}) + \frac{1}{2}\bar{\sigma}_{i}^{2} \}.
    \end{align*}
Now we compute the relevant partial derivatives. 
For $\Lambda$, suppressing the conditioning set $(\boldsymbol{\theta},\boldsymbol{s}_{n})$ for clarity, we have:
\begin{align*}
    \frac{\partial cov({\bf{r}}_{t})_{ij}}{\partial \Lambda_{ik}} &= \exp\{\tilde{\mu}_{k}+\tilde{\phi}_{k}(\tilde{h}_{k}-\tilde{\mu}_{k}) + \frac{1}{2}\tilde{\sigma}_{k}^{2}\}\Lambda_{jk}\\& = \exp\{ \tilde{\mu}_{k} \}\exp\{ \tilde{\phi}_{k}(\tilde{h}_{k}-\tilde{\mu}_{k}) \} \exp \{ \frac{1}{2}\tilde{\sigma}_{k}^{2}\}\Lambda_{jk} \\
    &\leq \exp \{ |\tilde{\mu}_{k}| \}\exp \{ |\tilde{\phi}_{k}||\tilde{h}_{k}| + |\tilde{\phi}_{k}||\tilde{\mu}_{k}|\}\exp\{\frac{1}{2}\tilde{\sigma}_{k}^{2} \}|\Lambda_{jk}|\\
    &\leq  \exp \{\log(n) \}\exp\{\log(n) + p\log(n)+ \log(n)\} \exp \{ \frac{1}{2}\log(n)^{0.5} \} \log(n)\\
    &\leq nn^{2+p} n\log(n) = n^{4+p} \log(n), \text{ since } n \in \mathbb{N},
\end{align*}
with an additional factor of 2 if $i = j. $
Likewise, 
\begin{align*}
    \frac{\partial cov({\bf{r}}_{t})_{ij}}{\partial \tilde{\mu}_{k}} &=  \Lambda_{ik}\Lambda_{jk}\exp\{ \tilde{\phi}_{k}\tilde{h}_{k,t-1} \}\exp\{ \frac{1}{2}\tilde{\sigma}_{k}^{2} \}(1-\tilde{\phi}_{k})\exp\{ (1-\tilde{\phi}_{k})\tilde{\mu}_{k} \} \\
    & \leq  \log(n)^{2}\exp \{ (p\log(n) + \log(n)) \}\exp \{\frac{1}{2}\log(n)^{0.5} \}\\ & \times(1+1)\exp\{(1+1)\log(n) \} \\ 
    & \leq  2\log(n)^{2}\exp \{\log(n^{p+1}) \}\exp\{ \frac{1}{2}\log(n)^{0.5} \}\exp \{ 2\log(n) \}\\ & \leq 2\log(n)^{2}n^{p+1}nn^{2} = 2\log(n)^{2}n^{p+4}, \text{ since } n \in \mathbb{N}.
\end{align*}
In addition, 
\begin{align*}
    \frac{\partial cov({\bf{r}}_{t})_{ij}}{\partial \tilde{\phi}_{k}} & = \Lambda_{ik}\Lambda_{jk} \exp \{\tilde{\mu}_{k} + \frac{1}{2}\tilde{\sigma}_{k}^{2} \}(\tilde{h}_{k,t-1}-\tilde{\mu}_{k})\exp\{ \tilde{\phi}_{k}(\tilde{h}_{k,t-1}-\tilde{\mu}_{k}) \} \\ & \leq  (\log(n))^{2}\exp \{ \log(n) + \frac{1}{2}\log(n)^{0.5} \}(p\log(n)+2\log(n))\\ & \times\exp \{(p\log(n) +2\log(n)) \}\\ & \leq \log(n)^{2}\exp \{ \log(n)  \} \exp\{\frac{1}{2}\log(n)^{0.5} \}(p+2)\log(n)\exp\{ \log(n^{2+p}) \}, \\ &\leq\log(n)^{2}n^{2}(p+2)\log(n)n^{2+p}=\log(n)^{3}(p+2)n^{4+p}, \text{ since } n \in \mathbb{N}.
\end{align*}
Furthermore, 
\begin{align*}
    \frac{\partial cov({\bf{r}}_{t})_{ij}}{\partial \tilde{\sigma}_{k}^{2}} &=  \Lambda_{ik}\Lambda_{jk}\exp\{ \tilde{\mu}_{k} +\tilde{\phi}_{k}(\tilde{h}_{k,t-1}-\tilde{\mu}_{k}) \}\frac{1}{2}\exp\{ \frac{1}{2}\tilde{\sigma}_{k}^{2} \} \\ &\leq \frac{1}{2}\log(n)^{2}\exp \{\log(n) + (p\log(n) + 2\log(n)) \} \exp\{ \frac{1}{2}\log(n)^{0.5}\} \\ & \leq \frac{1}{2}\log(n)^{2}\exp \{ \log(n^{3+p}) \}\exp \{ \frac{1}{2}\log(n)^{0.5}\},  \\ &  \leq \frac{1}{2}\log(n)^{2}n^{3+p}n=\frac{1}{2}\log(n)^{2} n^{4+p}, \text{ since } n \in \mathbb{N} .
\end{align*}
Antepenultimately, 
\begin{align*}
    \frac{\partial cov({\bf{r}}_{t})_{ij}}{\partial \bar{\mu}_{i}} &= \delta_{ij}\exp\{ \bar{\phi}_{i}\bar{h}_{i,t-1} + \frac{1}{2}\bar{\sigma}_{i}^{2} \} (1-\bar{\phi}_{i})\exp\{ \bar{\mu}_{i}(1-\bar{\phi}_{i}) \}\\ &  \leq \exp \{ (\log(n) + p\log(n)) + \frac{1}{2}\log(n)^{0.5}\} \\ &\times (1+1)\exp \{ \log(n)(1+1)\} \\ &  \leq 2\exp\{\log(n^{1+p}) \}\exp\{ \frac{1}{2}\log(n)^{0.5} \}\exp \{2\log(n) \},\\ & \leq 2n^{1+p}nn^{2} = 2n^{4+p}, \text{ since } n \in \mathbb{N}. 
\end{align*}
Penultimately, 
\begin{align*}
    \frac{\partial cov({\bf{r}}_{t})_{ij}}{\partial \bar{\phi}_{i}} &= \delta_{ij}\exp\{ \bar{\mu}_{i}+\frac{1}{2}\bar{\sigma}_{i}^{2} \}(\bar{h}_{i,t-1}-\bar{\mu}_{i})\exp\{ \bar{\phi}_{i}(\bar{h}_{i,t-1}-\bar{\mu}_{i}) \} \\ & \leq \exp \{ \log(n) + \frac{1}{2}\log(n)^{0.5}\}(p\log(n) + 2\log(n))\\ & \times\exp \{ (p\log(n)+2\log(n))\} \\ & \leq \exp \{ \log(n) \} \exp\{ \frac{1}{2}\log(n)^{0.5}\}(2+p)\log(n)\exp \{ \log(n^{p+2}) \} \\ & \leq (2+p)\log(n)nnn^{p+2} = (2+p)\log(n)n^{4+p}, \text{ since } n \in \mathbb{N}.
\end{align*}
Finally, 
\begin{align*}
   \frac{\partial cov({\bf{r}}_{t})_{ij}}{\partial \bar{\sigma}_{i}^{2}} & = \delta_{ij}\exp\{ \bar{\mu}_{i} +\bar{\phi}_{i}(\bar{h}_{i,t-1}-\bar{\mu}_{i})  \}\frac{1}{2}\exp\{\frac{1}{2}\bar{\sigma}_{i}^{2} \} \\ & \leq \exp \{ \log(n) + (p\log(n)+2\log(n))\} \frac{1}{2}\exp \{ \frac{1}{2}\log(n)^{0.5} \} \\ &\leq \frac{1}{2}\exp \{ \log(n^{3+p}) \}\exp \{\frac{1}{2}\log(n)^{\frac{1}{2}} \} \\ & \leq \frac{1}{2}n^{3+p}n = \frac{1}{2}n^{4+p}, \text{ since } n\in \mathbb{N}.
\end{align*} 
Then, 
\begin{align*}
    |\Sigma_{t,i,j}(\boldsymbol{\theta}) - \Sigma_{t,i,j}(\boldsymbol{\theta}')| & = |\nabla_{\boldsymbol{\theta}}\Sigma_{t,i,j}(\eta_{i,j})' (\boldsymbol{\theta}- \boldsymbol{\theta}')|, \eta_{i,j} = \boldsymbol{\theta}' + s_{ij}(\boldsymbol{\theta} - \boldsymbol{\theta}'), s_{ij} \in (0,1)\\ 
    &\leq ||\nabla_{\boldsymbol{\theta}}\Sigma_{t,i,j}(\eta_{i,j})||_{2}||\boldsymbol{\theta}-\boldsymbol{\theta}'||_{2}\\
    &\leq L_{n}||\boldsymbol{\theta} - \boldsymbol{\theta}'||_{2},L_{n} = c_{1}n^{c_{2}}log(n)^{c_{3}}, c_{1},c_{2}, c_{3}>0 . 
\end{align*}
Combining these terms to calculate the Frobenius norm, we have
\begin{align*}
    ||\Sigma_{t}(\boldsymbol{\theta}) - \Sigma_{t}(\boldsymbol{\theta}')||_{F}^{2} &= \sum\limits_{i,j= 1}^{N} |\Sigma_{t,i,j}(\boldsymbol{\theta}) - \Sigma_{t,i,j}(\boldsymbol{\theta}')|^{2}\\ 
    &\leq \sum\limits_{i,j=1}^{N} L_{n}^{2}||\boldsymbol{\theta}- \boldsymbol{\theta}'||_{2}^{2}\\ 
    &= N^{2} L_{n}^{2}||\boldsymbol{\theta}- \boldsymbol{\theta}'||_{2}^{2}. 
\end{align*}
That is,  
\begin{equation*}
    ||\Sigma_{t}(\boldsymbol{\theta}) - \Sigma_{t}(\boldsymbol{\theta}')||_{F} \leq NL_{n}||\boldsymbol{\theta}-\boldsymbol{\theta}'||_{2}. 
\end{equation*}
Recall that $h^{2}(P,Q) \leq 2KL (P,Q)$ and 
\begin{equation*}
    KL(N(\boldsymbol{\mu}_{1}, \Sigma_{1}), N(\boldsymbol{\mu}_{2}, \Sigma_{2})) = \frac{1}{2}[\text{tr}(\Sigma_{2}^{-1}\Sigma_{1}-I_{N}) - \log(\det(\Sigma_{2}^{-1}\Sigma_{1})) + (\boldsymbol{\mu}_{1} - \boldsymbol{\mu}_{2})'\Sigma_{2}^{-1}(\boldsymbol{\mu}_{1}-\boldsymbol{\mu}_{2})]. 
\end{equation*}
Note that since $s_{n}$ is fixed $\boldsymbol{\mu}_{t}(\boldsymbol{\theta}_{1}) - \boldsymbol{\mu}_{t}(\boldsymbol{\theta}_{2}) = \boldsymbol{0}_{N}. $

Now, let $X := \Sigma_{t}(\boldsymbol{\theta}_{2})^{-\frac{1}{2}} \Sigma_{t}(\boldsymbol{\theta}_{1}) \Sigma_{t}(\boldsymbol{\theta}_{2})^{-\frac{1}{2}}$. Then 
\begin{equation*}
    KL_{t}(\boldsymbol{\theta}_{1}, \boldsymbol{\theta}_{2})  = \frac{1}{2}[ tr(X) - \log(\det(X)) - N]. 
\end{equation*}
Since $X$ is symmetric, it is diagonalizable, $X = SDS^{-1}$ where $S$ is some invertible matrix and $D$ is a diagonal matrix consisting of the eigenvalues of $X$. Then it follows that
\begin{equation*}
    \text{tr}(X) \text{tr}(SDS^{-1}) = tr(DSS^{-1}) = tr(D) = \sum\limits_{i=1}^{N} \lambda_{i}. 
\end{equation*}
Similarly, 
\begin{equation*}
    \log(\det(X)) = \log(\det(SDS^{-1})) = \log(\det(D)) = \log(\prod\limits_{i=1}^{N} \lambda_{i}) = \sum\limits_{i=1}^{N}\log(\lambda_{i}). 
\end{equation*}
Then the difference of these terms in the KL distance is
\begin{align*}
    \text{tr}(X) - \log(\det(X)) -N & = \sum\limits_{i = 1 }^{N} \lambda_{i} - \sum\limits_{i=1}^{N} \log(\lambda_{i}) - \sum\limits_{i=1}^{N}1 \\ 
    &= \sum\limits_{i=1}^{N}(\lambda_{i} - \log(\lambda_{i}) -1)\\ 
    &= \sum\limits_{i=1}^{N}f(\lambda_{i}), f(x) = x-\log(x) -1. 
\end{align*}
Since $ \Lambda \tilde{\Sigma}_{t}\Lambda^{'} + \bar{\Sigma}_{t} \succeq 0,$ and $\Sigma_{t}(\boldsymbol{\theta}) \succeq \bar{\Sigma}_{t}$. So, 
\begin{equation*}
    \lambda_{\min}(\Sigma_{t}(\boldsymbol{\theta})) \geq \lambda_{\min}(\bar{\Sigma}_{t}) = \min\limits_{1\leq i \leq N}( \exp \{ \bar{\mu}_{i}+ \bar{\phi}_{i}(\bar{h}_{i,t-1} -\bar{\mu}_{i}) + \frac{1}{2}\bar{\sigma}_{i}^{2} \}). 
\end{equation*}
Note that 
\begin{align*}
    \exp\{ \bar{\mu}_{i} + \bar{\phi}_{i}(\bar{h}_{i,t-1}-\bar{\mu}_{i}) + \frac{1}{2}\bar{\sigma}_{i}^{2} \}  &\geq \exp\{ - \log(n) - \log(n) -p\log(n) - \log(n) \}\\
    &= \exp \{ - (3+p)\log(n) \}\\
    &= n^{-(3+p)}. 
\end{align*}
Thus
\begin{equation*}
    \lambda_{\min}(\Sigma_{t}(\boldsymbol{\theta})) \geq n^{-(3+p)}.
\end{equation*}
Now we need to upper bound. 
\begin{equation*}
    \lambda_{\max} (\Sigma_{t}(\boldsymbol{\theta})) \leq  \lambda_{\max}(\bar{\Sigma}_{t}) + \lambda_{\max}(\Lambda \tilde{\Sigma}_{t}\Lambda'). 
\end{equation*}
Note 
\begin{align*}
    \lambda_{\max}(\bar{\Sigma}_{t}) &\leq \exp\{ \log(n) + (2+p)\log(n) + \frac{1}{2}\log(n)^{\frac{1}{2}} \}\\
    & = \exp \{ (3+p) \log(n) + \frac{1}{2}\log(n)^{\frac{1}{2}} \}\\
    &= n^{3+p} e^{\frac{1}{2}(\log(n))^{\frac{1}{2}}}\\
    &\leq n^{4 + p}. 
\end{align*}
Similarly, 
\begin{align*}
    \lambda_{\max} (\Lambda \tilde{\Sigma}_{t}\Lambda') &\leq ||\Lambda||_{F}^{2} \lambda_{\max}(\tilde{\Sigma}_{t})\\
    &\leq (\log(n))^{2}\exp\{ \log(n) + (2+p)\log(n) + \frac{1}{2}\log(n)^{\frac{1}{2}} \}\\
    &= (\log(n))^{2}\exp \{ (3+p)\log(n) + \frac{1}{2}\log(n)^{\frac{1}{2}} \}\\
    &\leq \log(n)^{2}n^{4+p}. 
\end{align*}
So that, 
\begin{equation*}
    \lambda_{\max}(\Sigma_{t}(\boldsymbol{\theta})) \leq n^{4+p} + \log(n)^{2}n^{4+p}. 
\end{equation*}
That is 
\begin{equation*}
    \lambda_{\min}(\Sigma_{t}(\boldsymbol{\theta}))\geq m_{n} \text{ and } \lambda_{\max}(\Sigma_{t}(\boldsymbol{\theta})) \leq M_{n}, 
\end{equation*}
where 
\begin{align*}
    m_{n}&:=  n^{-(3+p)}\\
    M_{n}&:= n^{4+p}  + (\log(n))^{2}n^{4+p}. 
\end{align*}
Then 
\begin{equation*}
    m_{n}I_{N}\preceq \Sigma_{t}(\boldsymbol{\theta}_{1}) \preceq M_{n}I_{N}  \text{ and } m_{n}I_{N} \preceq \Sigma_{t}(\boldsymbol{\theta}_{2}) \preceq M_{n}I_{N}. 
\end{equation*}
Then, 
\begin{equation*}
    X = \Sigma_{t}(\boldsymbol{\theta}_{2})^{-\frac{1}{2}}\Sigma_{t}(\boldsymbol{\theta}_{1})\Sigma_{t}(\boldsymbol{\theta}_{2})^{-\frac{1}{2}} \succeq \Sigma_{t}(\boldsymbol{\theta}_{2})^{-\frac{1}{2}} m_{n}I_{N} \Sigma_{t}(\boldsymbol{\theta}_{2})^{-\frac{1}{2}} = m_{n}\Sigma_{t}(\boldsymbol{\theta}_{2})^{-1}. 
\end{equation*}
We know 
\begin{equation*}
    \Sigma_{t}(\boldsymbol{\theta}_{2}) \preceq M_{n}I_{N} \iff \Sigma_{t}^{-1}(\boldsymbol{\theta}_{2}) \succeq \frac{1}{M_{n}}I_{N}. 
\end{equation*}
Therefore, we have
\begin{equation*}
    X \succeq m_{n}\Sigma_{t}^{-1}(\boldsymbol{\theta}_{2}) \succeq \frac{m_{n}}{M_{n}}I_{N}. 
\end{equation*}
Similarly
\begin{equation*}
    X = \Sigma_{t}(\boldsymbol{\theta}_{2})^{-\frac{1}{2}} \Sigma_{t}(\boldsymbol{\theta}_{1})\Sigma_{t}(\boldsymbol{\theta}_{2})^{-\frac{1}{2}} \preceq \Sigma_{t}(\boldsymbol{\theta}_{2})^{-\frac{1}{2}} (M_{n}I_{N})\Sigma_{t}(\boldsymbol{\theta}_{2})^{-\frac{1}{2}} = M_{n}\Sigma_{t}^{-1}(\boldsymbol{\theta}_{2}). 
\end{equation*}
with 
\begin{equation*}
    \Sigma_{t}(\boldsymbol{\theta}_{2}) \succeq m_{n}I_{N} \iff \Sigma_{t}^{-1}(\boldsymbol{\theta}_{2}) \preceq \frac{1}{m_{n}}I_{N}. 
\end{equation*}
This implies 
\begin{equation*}
    X \preceq M_{n}\Sigma_{t}(\boldsymbol{\theta}_{2})^{-1} \preceq \frac{M_{n}}{m_{n}}I_{N}. 
\end{equation*}
Hence
\begin{equation*}
    \frac{m_{n}}{M_{n}}I_{N} \preceq \Sigma_{t}^{-\frac{1}{2}}(\boldsymbol{\theta}_{2})\Sigma_{t}(\boldsymbol{\theta}_{1}) \Sigma_{t}(\boldsymbol{\theta}_{2})^{-\frac{1}{2}} \preceq \frac{M_{n}}{m_{n}} I_{N}. 
\end{equation*}
Thus 
\begin{equation*}
    \frac{m_{n}}{M_{n}}\leq \lambda_{i}(X) \leq \frac{M_{n}}{m_{n}}, i= 1, \dots, N. 
\end{equation*}
Now, observe that
\begin{equation*}
    |f'(x)| = |1-\frac{1}{x}| \leq 1+\frac{1}{x} \leq 1+ \frac{M_{n}}{m_{n}}. 
\end{equation*}
By Taylor's theorem, $\exists c_{i}$ between $1$ and $x$ such that 
\begin{align*}
    f(x) &= f(1) + f'(1) (x-1) + \frac{1}{2}f''(c_{i})(x-1)^{2}\\
    &= \frac{1}{2}f''(c_{i})(x-1)^{2}\\
    &= \frac{1}{2c_{i}^{2}} (x-1)^{2}. 
\end{align*}
Observe, 
\begin{equation*}
    \frac{1}{c_{i}^{2}} \leq \sup\limits_{u \in [\frac{m_{n}}{M_{n}},\frac{M_{n}}{m_{n}}]} \frac{1}{u^{2}} = (\frac{M_{n}}{m_{n}})^{2},
\end{equation*}
so that
\begin{equation*}
    |f(\lambda_{i}(X))| \leq \frac{1}{2}(\frac{M_{n}}{m_{n}})^{2}(\lambda_{i}(X)-1)^{2}. 
\end{equation*}
Then, 
\begin{equation*}
    \text{tr}(X) - \log(\det(X))-N \leq \frac{1}{2}(\frac{M_{n}}{m_{n}})^{2}\sum\limits_{i=1}^{N}(\lambda_{i}(X)-1)^{2}.
\end{equation*}
Since $X$ is symmetric, we have for orthogonal $Q$
\begin{equation*}
    X-I_{N} = Q\text{diag}(\lambda_{1}(X)-1, \dots, \lambda_{N}(X)-1)Q^{'}. 
\end{equation*}
Observe, 
\begin{align*}
    ||X-I_{N}||_{F}^{2} &= ||QDQ'||_{F}^{2}\\
    &= \text{tr}(QDQ'QDQ')\\
    &= \text{tr}(QD^{2}Q')\\
    &= \text{tr}(D^{2} Q'Q)\\
    &= \text{tr}(D^{2})\\
    &= \sum\limits_{i=1}^{N} (\lambda_{i}(X)-1)^{2}. 
\end{align*}
So that, 
\begin{equation*}
    \text{tr}(X) - \log(\det(X)) - N \leq \frac{1}{2}(\frac{M_{n}}{m_{n}})^{2}||X-I_{N}||_{F}^{2}. 
\end{equation*}
Then, 
\begin{align*}
    ||X-I_{N}||_{F} & = ||\Sigma_{t}(\boldsymbol{\theta}_{2})^{-\frac{1}{2}}\Sigma_{t}(\boldsymbol{\theta}_{1}) \Sigma_{t}(\boldsymbol{\theta}_{2})^{-\frac{1}{2}} -I_{N}||_{F}\\
    &= ||\Sigma_{t}(\boldsymbol{\theta}_{2})^{-\frac{1}{2}} (\Sigma_{t}(\boldsymbol{\theta}_{1})-\Sigma_{t}(\boldsymbol{\theta}_{2}))\Sigma_{t}(\boldsymbol{\theta}_{2})^{-\frac{1}{2}}||_{F}\\
    &\leq ||\Sigma_{t}(\boldsymbol{\theta}_{2})^{-\frac{1}{2}}||_{OP}^{2} ||\Sigma_{t}(\boldsymbol{\theta}_{1})-\Sigma_{t}(\boldsymbol{\theta}_{2})||_{F}\\
    &= ||\Sigma_{t}(\boldsymbol{\theta}_{2})^{-1}||_{OP} ||\Sigma_{t}(\boldsymbol{\theta}_{1})-\Sigma_{t}(\boldsymbol{\theta}_{2})||_{F}\\
    &= \frac{1}{\lambda_{\min}(\Sigma_{t}(\boldsymbol{\theta}_{2}))}||\Sigma_{t}(\boldsymbol{\theta}_{1})-\Sigma_{t}(\boldsymbol{\theta}_{2})||_{F}\\
    &\leq \frac{1}{m_{n}}||\Sigma_{t}(\boldsymbol{\theta}_{1})- \Sigma_{t}(\boldsymbol{\theta}_{2})||_{F}. 
\end{align*}
So 
\begin{align*}
    \text{tr}(X) - \log(\det(X)) - N &\leq \frac{1}{2}(\frac{M_{n}}{m_{n}})^{2}||\Sigma_{t}(\boldsymbol{\theta}_{1}) - \Sigma_{t}(\boldsymbol{\theta}_{2})||_{F}^{2}\frac{1}{m_{n}^{2}}\\
    & = C_{n}'||\Sigma_{t}(\boldsymbol{\theta}_{1})- \Sigma_{t}(\boldsymbol{\theta}_{2})||_{F}^{2}, C_{n}' = \frac{1}{2}(\frac{M_{n}}{m_{n}})^{2}\frac{1}{m_{n}^{2}}\\
    &\leq C_{n}' N^{2}L_{n}^{2}||\boldsymbol{\theta}_{1}-\boldsymbol{\theta}_{2}||_{2}^{2}, B_{n} = C_{n}'N^{2}L_{n}^{2}. 
\end{align*}
Then
\begin{equation*}
    KL_{t}(\boldsymbol{\theta}_{1}, \boldsymbol{\theta}_{2}) \leq \frac{1}{2}B_{n}||\boldsymbol{\theta}_{1}-\boldsymbol{\theta}_{2}||_{2}^{2}. 
\end{equation*}
So 
\begin{align*}
    {d}_{n,\boldsymbol{s}_{n}}^{2}(\boldsymbol{\theta}_{1}, \boldsymbol{\theta}_{2}) & = \frac{1}{n}\sum\limits_{t=1}^{n}h_{t}^{2}(\boldsymbol{\theta}_{1}, \boldsymbol{\theta}_{2})\\
    &\leq \frac{2}{n} \sum\limits_{t=1}^{n}KL_{t}(\boldsymbol{\theta}_{1}, \boldsymbol{\theta}_{2})\\
    & \leq \frac{2}{n}\sum\limits_{t=1}^{n}\frac{1}{2}(B_{n})||\boldsymbol{\theta}_{1}-\boldsymbol{\theta}_{2}||_{2}^{2}\\
    &= B_{n}||\boldsymbol{\theta}_{1}-\boldsymbol{\theta}_{2}||_{2}^{2}. 
\end{align*}
That is, 
\begin{align*}
   {d}_{n,s_{n}}(\boldsymbol{\theta}_{1}, \boldsymbol{\theta}_{2}) &\leq \Gamma_{n}||\boldsymbol{\theta}_{1}-\boldsymbol{\theta}_{2}||_{2}, \Gamma_{n}:= \sqrt{B_{n}}\\
    & \leq c_{4}n^{c_{5}} \log(n)^{c_{6}}||\boldsymbol{\theta}_{1}- \boldsymbol{\theta}_{2}||_{2}, c_{4}, c_{5}, c_{6}>0\\
    &= \tilde{L}_{n}||\boldsymbol{\theta}_{1}- \boldsymbol{\theta}_{2}||_{2}, \tilde{L}_{n} = c_{4}n^{c_{5}} \log(n)^{c_{6}}. 
\end{align*}
Let $\epsilon > \epsilon_{n}$ and set $\delta:= \epsilon/36\tilde{L}_{n}$. If $||\boldsymbol{\theta}-\boldsymbol{\theta}'||_{2} <  \delta$, then by above 
\begin{equation*}
    \sup\limits_{\boldsymbol{s}_{n}\in H_{n}}d_{n,\boldsymbol{s}_{n}}(\boldsymbol{\theta}, \boldsymbol{\theta}') < \frac{\tilde{L}_{n}\epsilon}{36\tilde{L}_{n}} = \frac{\epsilon}{36}.
\end{equation*}
Therefore, for every $\boldsymbol{s}_{n} \in H_{n}$, 
\begin{equation*}
    N(\frac{\epsilon}{36}, \{ \boldsymbol{\theta} \in \Theta_{n}:d_{n,\boldsymbol{s}_{n}}(\boldsymbol{\theta}, \boldsymbol{\theta}_{0})\leq \epsilon \}, d_{n,\boldsymbol{s}_{n}}) \leq N(\delta, \Theta_{n}, ||\cdot||_{2}).
\end{equation*}
Now we just need to bound $N(\delta, \Theta_{n}, ||\cdot||_{2}).$
From the definition of $\Theta_{n}$, $\Theta_{n} \subset \prod\limits_{j=1}^{d} [a_{j}, b_{j}],d = c_{\boldsymbol{\theta}}, L_{j} := b_{j} - a_{j} \leq 8\log(n)$. 
Split each coordinate interval into pieces of length $h = \delta/ \sqrt{c_{\boldsymbol{\theta}}}$. Then the number of subintervals needed in coordinate $j$ is at most 
\begin{equation*}
    \lceil \frac{L_{j}}{h}\rceil \leq \frac{L_{j}}{h}+1 \leq \frac{8\log(n)}{\frac{\delta}{\sqrt{c_{\boldsymbol{\theta}}}}}+1 = \frac{8\sqrt{c_{\boldsymbol{\theta}}}\log(n)}{\delta}+1. 
\end{equation*}
So the whole rectangle is partitioned into at most 
\begin{equation*}
    \prod\limits_{j=1}^{d}(\frac{8\sqrt{c_{\boldsymbol{\theta}}}\log(n)}{\delta} +1) = (\frac{8\sqrt{c}_{\boldsymbol{\theta}}\log(n)}{\delta} + 1)^{c_{\boldsymbol{\theta}}} \text{ small boxes}. 
\end{equation*}
A box with side length $h$ has Euclidean diameter at most $\sqrt{c_{\boldsymbol{\theta}}}h =  \sqrt{c_{\boldsymbol{\theta}}} \cdot \delta / \sqrt{c_{\boldsymbol{\theta}}} = \delta.$ So each small box can be covered by one $\delta$-ball in $||\cdot||_{2}$. Therefore 
\begin{align*}
    N(\delta, \Theta_{n}, ||\cdot||_{2})&\leq (\frac{8\sqrt{c}_{\boldsymbol{\theta}} \log(n)}{\delta} + 1)^{c_{\boldsymbol{\theta}}}\\
    &\leq (\frac{c_{7}\log(n)}{\delta})^{c_{\boldsymbol{\theta}}}, c_{7}>0. 
\end{align*}
Taking logs gives 
\begin{equation*}
    \log(N(\delta, \Theta_{n}, ||\cdot||_{2})) \leq c_{\boldsymbol{\theta}}[\log(c_{7}\log(n)) + \log(\frac{1}{\delta})]. 
\end{equation*}
Substituting $\delta = \epsilon/ 36 \tilde{L}_{n}$ gives 
\begin{equation*}
    \log(N(\delta, \Theta_{n}, ||\cdot||_{2})) \leq c_{\boldsymbol{\theta}}[\log(c_{7} \log(n)) + \log(36) + \log(\tilde{L}_{n}) + \log(\frac{1}{\epsilon})]. 
\end{equation*}
Note 
\begin{align*}
    \log(\tilde{L}_{n}) &= \log(c_{4}n^{c_{5}}\log(n)^{c_{6}})\\
    &= \log(c_{4}) + c_{5}\log(n) + c_{6}\log(\log(n)). 
\end{align*}
Similarly, since $\epsilon > \epsilon_{n}$,
\begin{align*}
    \log(\frac{1}{\epsilon}) &\leq \log(\frac{1}{\epsilon_{n}})\\
    &= \log(\frac{1}{c_{\epsilon}n^{-\frac{1}{2+c_{\boldsymbol{\theta}}}}})\\
    &= \log(1) - \log(c_{\epsilon} n^{-\frac{1}{2+c_{\boldsymbol{\theta}}}})\\
    &= -\log(c_{\epsilon}) - \log(n^{-\frac{1}{2+c_{\boldsymbol{\theta}}}})\\
    &= - \log(c_{\epsilon}) - (-\frac{1}{2+c_{\boldsymbol{\theta}}})\log(n)\\
    &= - \log(c_{\epsilon}) + \frac{1}{2+c_{\boldsymbol{\theta}}} \log(n). 
\end{align*}
Therefore
\begin{equation*}
    \log(N(\delta, \Theta_{n}, ||\cdot||_{2})) \leq c_{\boldsymbol{\theta}} [\log(c_{7}\log(n))+ \log(36) + \log(c_{4}) + c_{5}\log(n)+ c_{6}\log(\log(n)) - \log(c_{\epsilon}) + \frac{1}{2+c_{\boldsymbol{\theta}}}\log(n)]
\end{equation*}
So that, 
\begin{equation*}
    \sup\limits_{\boldsymbol{s}_{n} \in H_{n}}\sup\limits_{\epsilon> \epsilon_{n}} \log(N(\frac{\epsilon}{36}, \{ \boldsymbol{\theta} \in \Theta_{n}: d_{n,\boldsymbol{s}_{n}}(\boldsymbol{\theta}, \boldsymbol{\theta}_{0})\leq \epsilon \}, d_{n,\boldsymbol{s}_{n}})) \leq A\log(n) + B\log(\log(n)) + D, 
\end{equation*}
for some finite positive constants $A,B, $ and $D$. For sufficiently large $c_{\epsilon}, A\log(n) + B \log\log(n) + D \leq n\epsilon_{n}^{2}$. Hence, 
\begin{equation*}
    \sup\limits_{\boldsymbol{s}_{n} \in H_{n}} \sup\limits_{\epsilon > \epsilon_{n}} \log(N(\frac{\epsilon}{36}, \{ \boldsymbol{\theta} \in \Theta_{n}: d_{n,\boldsymbol{s}_{n}}(\boldsymbol{\theta}, \boldsymbol{\theta}_{0}) \leq \epsilon \}, d_{n,\boldsymbol{s}_{n}})) \leq n\epsilon_{n}^{2}. 
\end{equation*}
\end{proof}
\subsection*{Proof of Proposition 2}
\begin{proof}
Fix $\boldsymbol{s}_{n} \in H_{n}$ and $j \geq 2$. Since 
    \begin{equation*}
        S_{n,j} \subset \{ \boldsymbol{\theta} \in \Theta_{n}:d_{n,\boldsymbol{s}_{n}}(\boldsymbol{\theta}, \boldsymbol{\theta}_{0}) \leq 2j\epsilon_{n} \},
    \end{equation*}
Proposition 1, applied with $\epsilon = 2j \epsilon_{n}$ gives 
\begin{equation*}
    \log(N(\frac{2j\epsilon_{n}}{36}, S_{n,j}, d_{n,\boldsymbol{s}_{n}})) \leq n\epsilon_{n}^{2}. 
\end{equation*}
That is 
\begin{equation*}
    \log(N(\frac{j \epsilon_{n}}{18}, S_{n,j},d_{n,\boldsymbol{s}_{n}})) \leq n\epsilon_{n}^{2}. 
\end{equation*}
Choose a $d_{n,\boldsymbol{s}_{n}}-$cover of $S_{n,j}$ by balls of radius $j\epsilon_{n}/18$ with centers, 
\begin{equation*}
    \boldsymbol{\theta}_{j,1}, \dots, \boldsymbol{\theta}_{j,M_{n,j}} \in S_{n,j}, M_{n,j}:= N(\frac{j \epsilon_{n}}{18}, S_{n,j}, d_{n,\boldsymbol{s}_{n}}). 
\end{equation*}
Then 
\begin{equation*}
    S_{n,j} \subset \bigcup\limits_{\ell = 1}^{M_{n,j}}B_{d_{n,\boldsymbol{s}_{n}}}(\boldsymbol{\theta}_{j, \ell}, \frac{j\epsilon_{n}}{18}), \log(M_{n,j}) \leq n\epsilon_{n}^{2}. 
\end{equation*}
Now fix one center $\boldsymbol{\theta}_{j,\ell}$. Since $\boldsymbol{\theta}_{j,\ell} \in S_{n,j}$, 
\begin{equation*}
    d_{n,\boldsymbol{s}_{n}}(\boldsymbol{\theta}_{0}, \boldsymbol{\theta}_{j, \ell}) > j \epsilon_{n}. 
\end{equation*}
Because $Q_{\boldsymbol{\theta}, \boldsymbol{s}_{n}}^{(n)}$ is a product measure and $d_{n,\boldsymbol{s}_{n}}$ is the average Hellinger distance, then Lemma 2 of \cite{ghosal2007convergence} applies. Hence there exists a test $\phi_{n,j,\ell, \boldsymbol{s}_{n}}$ such that 
\begin{equation*}
    Q_{\boldsymbol{\theta}_{0}, \boldsymbol{s}_{n}}^{(n)} \phi_{n,j,\ell, \boldsymbol{s}_{n}} \leq \exp \{ - \frac{1}{2}nd_{n,\boldsymbol{s}_{n}}^{2} (\boldsymbol{\theta}_{0}, \boldsymbol{\theta}_{j,\ell}) \} \leq e^{-\frac{1}{2}j^{2} n\epsilon_{n}^{2}}
\end{equation*}
and 
\begin{equation*}
Q_{\boldsymbol{\theta},\boldsymbol{s}_{n}}^{(n)}(1- \phi_{n,j, \ell, \boldsymbol{s}_{n}}) \leq \exp \{ - \frac{1}{2} nd_{n,\boldsymbol{s}_{n}}^{2} (\boldsymbol{\theta}_{0}, \boldsymbol{\theta}_{j,\ell}) \} \leq e^{-\frac{1}{2}j^{2} n\epsilon_{n}^{2}},
\end{equation*}
for every $\boldsymbol{\theta}$ satisfying 
\begin{equation*}
    d_{n,\boldsymbol{s}_{n}}(\boldsymbol{\theta}, \boldsymbol{\theta}_{j, \ell}) \leq \frac{1}{18} d_{n,\boldsymbol{s}_{n}} (\boldsymbol{\theta}_{0}, \boldsymbol{\theta}_{j,\ell}).
\end{equation*}
Therefore 
\begin{equation*}
    \sup\limits_{\boldsymbol{\theta} \in B_{d_{n,\boldsymbol{s}_{n}}}(\boldsymbol{\theta}_{j, \ell}, \frac{j\epsilon_{n}}{18})} Q_{\boldsymbol{\theta},\boldsymbol{s}_{n}}^{(n)}(1-\phi_{n,j,\ell,\boldsymbol{s}_{n}}) \leq \exp \{- \frac{1}{2}j^{2} n\epsilon_{n}^{2} \}.
\end{equation*}
Define 
\begin{equation*}
    \phi_{n,j,\boldsymbol{s}_{n}}:= \max\limits_{1 \leq \ell \leq M_{n,j}} \phi_{n,j,\ell, \boldsymbol{s}_{n}}
\end{equation*}
and observe that
\begin{align*}
    \{ \phi_{n,j,\boldsymbol{s}_{n}} = 1 \} &= \{ \max \limits_{1 \leq \ell \leq M_{n,j}} \phi_{n,j,\ell, \boldsymbol{s}_{n}} \}\\
    &= \{ \exists \ell \in \{ 1, \dots,M_{n,j} \} \text{ such that } \phi_{n,j, \ell , \boldsymbol{s}_{n}} = 1 \}\\
    &= \bigcup\limits_{\ell=1}^{M_{n,j}} \{ \phi_{n,j,\ell, \boldsymbol{s}_{n}}  = 1\}. 
\end{align*}
Therefore it follows that
\begin{align*}
    Q_{\boldsymbol{\theta}_{0}, \boldsymbol{s}_{n}}^{(n)}(\phi_{n,j, \boldsymbol{s}_{n}} = 1) &= Q_{\boldsymbol{\theta}_{0}, \boldsymbol{s}_{n}}^{(n)} (\{ \phi_{n,j,\boldsymbol{s}_{n}} = 1 \})\\
    &= Q_{\boldsymbol{\theta}_{0}, \boldsymbol{s}_{n}}^{(n)} (\bigcup\limits_{\ell = 1}^{M_{n,j}} \{ \phi_{n,j,\ell, \boldsymbol{s}_{n} }=1  \}) \\ 
    &\leq \sum\limits_{\ell=1}^{M_{n,j}} Q_{\boldsymbol{\theta}_{0}, \boldsymbol{s}_{n}}^{(n)}(\phi_{n,j, \ell, \boldsymbol{s}_{n}}=1)\\ 
    &\leq M_{n,j} \exp \{ - \frac{1}{2}j^{2} n\epsilon_{n}^{2} \}\\
    &\leq \exp\{ n\epsilon_{n}^{2} \} \exp\{ - \frac{1}{2}j^{2} n\epsilon_{n}^{2} \}\\
    &= \exp\{ n\epsilon_{n}^{2} - \frac{1}{2}j^{2} n \epsilon_{n}^{2} \}\\ 
    &= \exp \{ (1-\frac{j^{2}}{2})n\epsilon_{n}^{2} \}\\
    &= \exp\{ - (\frac{j^{2}}{2}-1) n\epsilon_{n}^{2} \}. 
\end{align*}
Since $j \geq 2, (j^{2}/2)-1 \geq j^{2}/4$. Hence 
\begin{equation*}
    Q_{\boldsymbol{\theta}_{0},\boldsymbol{s}_{n}}^{(n)}(\phi_{n,j,\boldsymbol{s}_{n}}=1) \leq \exp \{ - \frac{j^{2}n\epsilon_{n}^{2}}{4} \}. 
\end{equation*}
Again consider $\boldsymbol{\theta} \in S_{n,j}$. Since 
\begin{equation*}
    S_{n,j} \subset \bigcup\limits_{\ell = 1}^{M_{n,j}} B_{d_{n,\boldsymbol{s}_{n}}}(\boldsymbol{\theta}_{j,\ell}, \frac{j\epsilon_{n}}{18}),
\end{equation*}
$\exists \ell,$ called $\ell(\boldsymbol{\theta})$ such that
\begin{equation*}
    d_{n,\boldsymbol{s}_{n}}(\boldsymbol{\theta}, \boldsymbol{\theta}_{j, \ell(\boldsymbol{\theta})}) < \frac{j\epsilon_{n}}{18}.
\end{equation*}
Then 
\begin{equation*}
Q_{\boldsymbol{\theta},\boldsymbol{s}_{n}}^{(n)}(1-\phi_{n,j, \ell(\boldsymbol{\theta}), \boldsymbol{s}_{n}}) \leq e^{-\frac{1}{2}j^{2}n\epsilon_{n}^{2}}. 
\end{equation*}
Note that $\phi_{n, j,\boldsymbol{s}_{n}} \geq \phi_{n,j, \ell(\boldsymbol{\theta}), \boldsymbol{s}_{n}}$. Therefore, 
\begin{equation*}
    1-\phi_{n,j, \boldsymbol{s}_{n}} \leq 1-\phi_{n,j,\ell(\boldsymbol{\theta}), \boldsymbol{s}_{n}}. 
\end{equation*}
So
\begin{equation*}
    Q_{\boldsymbol{\theta},\boldsymbol{s}_{n}}^{(n)}(1-\phi_{n,j,\boldsymbol{s}_{n}}) \leq Q_{\boldsymbol{\theta}, \boldsymbol{s}_{n}}^{(n)}(1 - \phi_{n,j,\ell(\boldsymbol{\theta}), \boldsymbol{s}_{n}}). 
\end{equation*}
Hence 
\begin{equation*}
Q_{\boldsymbol{\theta},\boldsymbol{s}_{n}}^{(n)}(1-\phi_{n,j,\boldsymbol{s}_{n}}) \leq Q_{\boldsymbol{\theta},\boldsymbol{s}_{n}}^{(n)} (1-\phi_{n,j,\ell(\boldsymbol{\theta}), \boldsymbol{s}_{n}}) \leq e^{-\frac{1}{2}j^{2}n \epsilon_{n}^{2}}. 
\end{equation*}
Then, taking the supremum over $\boldsymbol{\theta} \in S_{n,j}$ gives 
\begin{equation*}
    \sup\limits_{\boldsymbol{\theta} \in S_{n,j}} Q_{\boldsymbol{\theta}, \boldsymbol{s}_{n}} ^{(n)} (1-\phi_{n,j,\boldsymbol{s}_{n}}) \leq e^{-\frac{1}{2}j^{2} n\epsilon_{n}^{2}}. 
\end{equation*}
Finally define $\phi_{n,\boldsymbol{s}_{n}} := \sup\limits_{j \geq 2} \phi_{n,j,\boldsymbol{s}_{n}}$. 
Therefore, 
\begin{align*}
    Q_{\boldsymbol{\theta}_{0}, \boldsymbol{s}_{n}}^{(n)} (\phi_{n,\boldsymbol{s}_{n}} = 1) &= Q_{\boldsymbol{\theta}_{0}, \boldsymbol{s}_{n}}^{(n)} (\bigcup\limits_{j \geq 2} \{ \phi_{n,j,\boldsymbol{s}_{n}} = 1 \})\\
    &\leq \sum\limits_{j=2}^{\infty} Q_{\boldsymbol{\theta}_{0},\boldsymbol{s}_{n}}(\phi_{n,j,\boldsymbol{s}_{n}}=1)\\
    &\leq \sum\limits_{j = 2}^{\infty} e^{-\frac{j^{2} n\epsilon_{n}^{2}}{4}}. 
\end{align*}
Observe that
\begin{equation*}
    n\epsilon_{n}^{2} = nc_{\epsilon}^{2}n^{- \frac{2}{2+C_{\boldsymbol{\theta}}}} = c_{\epsilon}^{2} n^{1 - \frac{2}{2+c_{\boldsymbol{\theta}}}} = c_{\epsilon}^{2}n^{\frac{c_{\boldsymbol{\theta}}}{2+c_{\boldsymbol{\theta}}}}. 
\end{equation*}
so that
\begin{equation*}
    a_{n}:= \frac{c_{\epsilon}^{2}}{4}n^{\frac{c_{\boldsymbol{\theta}}}{2+c_{\boldsymbol{\theta}}}} \rightarrow \infty
\end{equation*}
and therefore it follows that 
\begin{align*}
    Q_{\boldsymbol{\theta}_{0}, \boldsymbol{s}_{n}}^{(n)}(\phi_{n,\boldsymbol{s}_{n}} = 1) &\leq \sum\limits_{ j =2}^{\infty} e^{-a_{n}j^{2}}\\
    &\leq \sum\limits_{j=2}^{\infty} e^{-2a_{n}j}, \text{ since } j^{2}>2j \text{ for } j \geq 2, \\
    & = \frac{e^{-4a_{n}}}{1-e^{-2a_{n}}}\\
    &\rightarrow 0. 
\end{align*}
Clearly, $\phi_{n,\boldsymbol{s}_{n}} \geq \phi_{n,j,\boldsymbol{s}_{n}}$. So $1-\phi_{n,\boldsymbol{s}_{n}} \leq 1- \phi_{n,j,\boldsymbol{s}_{n}}$. Fix $j \geq 2, \boldsymbol{s}_{n} \in H_{n}$ and $\boldsymbol{\theta} \in S_{n,j}$. By monotonicity of measure, 
\begin{equation*}
    Q_{\boldsymbol{\theta},\boldsymbol{s}_{n}}^{(n)} (1-\phi_{n,\boldsymbol{s}_{n}}) \leq Q_{\boldsymbol{\theta}, \boldsymbol{s}_{n}}^{(n)}(1-\phi_{n,j,\boldsymbol{s}_{n}}). 
\end{equation*}
Then taking the supremum gives 
\begin{equation*}
    \sup\limits_{\boldsymbol{\theta} \in S_{n,j}} Q_{\boldsymbol{\theta}, \boldsymbol{s}_{n}}^{(n)}(1-\phi_{n,\boldsymbol{s}_{n}}) \leq \sup\limits_{\boldsymbol{\theta} \in S_{n,j}} Q_{\boldsymbol{\theta}, \boldsymbol{s}_{n}}^{(n)}(1-\phi_{n,j,\boldsymbol{s}_{n}}) \leq e^{-\frac{1}{2}j^{2}n\epsilon_{n}^{2}}. 
\end{equation*}
We can also take the supremum over $H_{n}$, 
\begin{equation*}
    \sup\limits_{\boldsymbol{s}_{n}\in H_{n}} Q_{\boldsymbol{\theta}_{0},\boldsymbol{s}_{n}}^{(n)} \phi_{n,\boldsymbol{s}_{n}} \rightarrow 0 \text{ and } \sup\limits_{\boldsymbol{s}_{n} \in H_{n}} \sup\limits_{\boldsymbol{\theta} \in S_{n,j}} Q_{\boldsymbol{\theta},\boldsymbol{s}_{n}}^{(n)} (1-\phi_{n,\boldsymbol{s}_{n}}) \leq e^{-\frac{1}{2}j^{2} n\epsilon_{n}^{2}}. 
\end{equation*}
Note that any $K< \frac{1}{2}$ would work, e.g. $K = \frac{1}{4}$. 
\end{proof}
\subsection*{Proof of Proposition 3}
\begin{proof}
We can trivially upper bound the numerator by one. Therefore, it remains to lower bound the denominator. To lower bound the denominator, we will show that the set $B_{n}(\boldsymbol{\theta}_{0}, \epsilon_{n}; k)$ is a superset of some other set in view of (2.5) from \cite{ghosal2007convergence}, and then lower bound the denominator by the probability of this subset using monotonicity of measure.

Let $R^{(n)} = ({\bf{r}}_{1},...,{\bf{r}}_{n}) \text{ with }r_{a,t} = \alpha_{a,t} + r_{M,t}\beta_{a,t} + \epsilon_{a,t}.\text{ Then, } \boldsymbol{r}_{t}|\boldsymbol{\theta}_{t} \sim N_{N}({\boldsymbol{\alpha}}_{t} + r_{M,t} {\boldsymbol{\beta}}_{t}, \Sigma_{t}), \text{ where } \\\Sigma_{t} = \Lambda\tilde{\Sigma}_{t}\Lambda^{T} + \bar{\Sigma}_{t}.$ Since ${Q}_{\boldsymbol{\theta}, \boldsymbol{s}_{n}}^{(n)}(r^{(n)}|\boldsymbol{\theta}^{}) =  \prod\limits_{t=1}^{n}P(\boldsymbol{r}_{t}|\boldsymbol{\theta}),$ the log-likelihood is given by 
\begin{equation*}
    L_{n}(\boldsymbol{\theta}) = \log(\prod\limits_{t=1}^{n}P(\boldsymbol{r}_{t}|\boldsymbol{\theta}_{})) = \sum\limits_{t=1}^{n}\log(P(\boldsymbol{r}_{t}|\boldsymbol{\theta}_{})) = \sum\limits_{t=1}^{n}L_{t}(\boldsymbol{\theta_{}}).
\end{equation*}
Note, 
\begin{align*}
    L_{t}(\boldsymbol{\theta_{}}) &= \log[(2\pi)^{-\frac{N}{2}} \det(\Sigma_{t})^{-\frac{1}{2}} \exp \{ - \frac{1}{2}({\bf{r}}_{t}-\boldsymbol{\mu}_{t})^{T}\Sigma_{t}^{-1}({\bf{r}}_{t}-\boldsymbol{\mu}_{t}) \}] \\
    & = - \frac{1}{2}(\log((2\pi)^{N}) + \log(\det(\Sigma_{t})) + (\boldsymbol{r}_{t} - \boldsymbol{\mu}_{t})^{T}\Sigma_{t}^{-1}(\boldsymbol{r}_{t} - \boldsymbol{\mu}_{t})).
\end{align*}

We will now bound the KL divergence between $Q_{\boldsymbol{\theta},\boldsymbol{s}_{n}}$ and $Q_{\boldsymbol{\theta}_{0},\boldsymbol{s}_{n}}$. Let 
\begin{equation*}
    U = \{ \boldsymbol{\theta} \in \Theta: ||\boldsymbol{\theta} - \boldsymbol{\theta}_{0}||_{2}\leq \delta \}
\end{equation*}
for some $\delta>0$ and latent state path $\boldsymbol{s}$. 
From the proof of Proposition 1, 
\begin{equation*}
    KL_{t}(\boldsymbol{\theta}_{0}, \boldsymbol{\theta}_{1}) = \frac{1}{2}(\text{tr}(A_{t}) - \log(\det(A_{t}))-N), A_{t} = \Sigma_{\boldsymbol{\theta},t}^{-\frac{1}{2}}\Sigma_{\boldsymbol{\theta}_{0},t} \Sigma_{\boldsymbol{\theta},t}^{-\frac{1}{2}}, \boldsymbol{\theta} \in U.
\end{equation*}
Then from the proof of Proposition 1, 
\begin{equation*}
    KL_{t}(\boldsymbol{\theta}_{0}, \boldsymbol{\theta}) = \frac{1}{2} \sum\limits_{i=1}^{N} (\lambda_{i,t}-1-\log(\lambda_{i,t})),
\end{equation*}
where $\lambda_{1,t},\dots, \lambda_{N,t}$ are the eigenvalues of $A_{t}$. Like the proof of Proposition 1, define 
\begin{equation*}
    m_{t}: = \inf\limits_{\boldsymbol{\theta}\in U} \lambda_{\min}(\Sigma_{\boldsymbol{\theta},t})>0. 
\end{equation*}
Then, 
\begin{equation*}
    ||\Sigma_{\boldsymbol{\theta},t}^{-\frac{1}{2}}||_{OP}^{2} = ||\Sigma_{\boldsymbol{\theta},t}^{-1}||_{OP} \leq m_{t}^{-1}, \boldsymbol{\theta} \in U. 
\end{equation*}
Therefore, 
\begin{equation*}
    ||A_{t}-I||_{F} \leq m_{t}^{-1} ||\Sigma_{\boldsymbol{\theta},t}- \Sigma_{\boldsymbol{\theta}_{0},t}||_{F}.
\end{equation*}
Similarly to the proof of proposition 1, 
\begin{equation*}
    ||\Sigma_{t}(\boldsymbol{\theta})- \Sigma_{t}(\boldsymbol{\theta}_{0})||_{F} \leq G_{t}||\boldsymbol{\theta} - \boldsymbol{\theta}_{0}||_{2}, \boldsymbol{\theta} \in U, 
\end{equation*}
for some $G_{t}$. Therefore 
\begin{equation*}
    ||A_{t}-I||_{F} \leq m_{t}^{-1} G_{t}||\boldsymbol{\theta}- \boldsymbol{\theta}_{0}||_{2}. 
\end{equation*}
Similarly define 
\begin{equation*}
    M_{t} := \sup\limits_{\boldsymbol{\theta} \in U}\lambda_{\max}(\Sigma_{\boldsymbol{\theta},t})< \infty. 
\end{equation*}
So that
\begin{equation*}
    \frac{m_{t}}{M_{t}}I \preceq A_{t} \preceq \frac{M_{t}}{m_{t}}I. 
\end{equation*}
By a similar argument to the proof of proposition 1, 
\begin{align*}
    KL_{t}(\boldsymbol{\theta}_{0}, \boldsymbol{\theta}) &\leq \frac{1}{4}(\frac{M_{t}}{m_{t}})^{2}\sum\limits_{i=1}^{N} (\lambda_{i}(A_{t})-1)^{2}\\
    &= \frac{1}{4}(\frac{M_{t}}{m_{t}})^{2}||A_{t}-I||_{F}^{2}\\
    &\leq \frac{M_{t}^{2}G_{t}^{2}}{4m_{t}^{4}}||\boldsymbol{\theta}- \boldsymbol{\theta}_{0}||_{2}^{2}.\\
\end{align*}
Define 
\begin{equation*}
    c_{1} = \sup\limits_{t\geq 1} \frac{M_{t}^{2}G_{t}^{2}}{4m_{t}^{4}}.
\end{equation*}
Then $\forall \boldsymbol{\theta} \in U$, 
\begin{equation*}
    KL_{t}(\boldsymbol{\theta}_{0}, \boldsymbol{\theta}) \leq c_{1}||\boldsymbol{\theta} - \boldsymbol{\theta}_{0}||_{2}^{2}. 
\end{equation*}
Hence 
\begin{equation*}
    KL_{n}(\boldsymbol{\theta}_{0}, \boldsymbol{\theta}) = \sum\limits_{t=1}^{n} KL_{t}(\boldsymbol{\theta}_{0}, \boldsymbol{\theta}) \leq c_{1}n||\boldsymbol{\theta} - \boldsymbol{\theta}_{0}||_{2}^{2} \leq c_{1}n\delta^{2}. 
\end{equation*}
 Now,
 \begin{align*}
     V_{k,0} (Q_{\boldsymbol{\theta}_{0},\boldsymbol{s}_{n}}, Q_{\boldsymbol{\theta},\boldsymbol{s}_{n}}) &= \int Q_{\boldsymbol{\theta}_{0},\boldsymbol{s}_{n}}|\log(\frac{Q_{\boldsymbol{\theta}_{0},\boldsymbol{s}_{n}}}{Q_{\boldsymbol{\theta},\boldsymbol{s}_{n}}})-KL(Q_{\boldsymbol{\theta}_{0},\boldsymbol{s}_{n}}, Q_{\boldsymbol{\theta},\boldsymbol{s}_{n}})|^{k} d{x}\\
     & = \mathbb{E}_{\boldsymbol{\theta}_{0}}[|\log(\frac{Q_{\boldsymbol{\theta}_{0},\boldsymbol{s}_{n}}}{Q_{\boldsymbol{\theta},\boldsymbol{s}_{n}}})-KL(Q_{\boldsymbol{\theta}_{0},\boldsymbol{s}_{n}},Q_{\boldsymbol{\theta},\boldsymbol{s}_{n}})|^{k}]\\ & = \mathbb{E}_{\boldsymbol{\theta}_{0}}[|\log(\frac{Q_{\boldsymbol{\theta}_{0},\boldsymbol{s}_{n}}}{Q_{\boldsymbol{\theta},\boldsymbol{s}_{n}}}) - \mathbb{E}_{\boldsymbol{\theta}_{0}}[\log(\frac{Q_{\boldsymbol{\theta}_{0},\boldsymbol{s}_{n}}}{Q_{\boldsymbol{\theta},\boldsymbol{s}_{n}}})]|^{k} \\ 
     &  \leq \mathbb{E}_{\boldsymbol{\theta}_{0}}[(|\log(\frac{Q_{\boldsymbol{\theta}_{0},\boldsymbol{s}_{n}}}{Q_{\boldsymbol{\theta},\boldsymbol{s}_{n}}})| + |\mathbb{E}_{\boldsymbol{\theta}_{0}}[\log(\frac{Q_{\boldsymbol{\theta}_{0},\boldsymbol{s}_{n}}}{Q_{\boldsymbol{\theta},\boldsymbol{s}_{n}}})]|)^{k}] \\ 
     & \leq \mathbb{E}_{\boldsymbol{\theta}_{0}}[2^{k-1}(|\log(\frac{Q_{\boldsymbol{\theta}_{0},\boldsymbol{s}_{n}}}{Q_{\boldsymbol{\theta},\boldsymbol{s}_{n}}})|^{k} + |\mathbb{E}_{\boldsymbol{\theta}_{0}}[\log(\frac{Q_{\boldsymbol{\theta}_{0},\boldsymbol{s}_{n}}}{Q_{\boldsymbol{\theta},\boldsymbol{s}_{n}}})]|^{k})], (a+b)^{k} \leq 2^{k-1}(a^{k} + b^{k}), a,b \geq 0. 
\end{align*}
Then, 
\begin{align*}
    V_{k,0} (Q_{\boldsymbol{\theta}_{0},\boldsymbol{s}_{n}}, Q_{\boldsymbol{\theta},\boldsymbol{s}_{n}})  &\leq \mathbb{E}_{\boldsymbol{\theta}_{0}}[2^{k-1}(|\log(\frac{Q_{\boldsymbol{\theta}_{0},\boldsymbol{s}_{n}}}{Q_{\boldsymbol{\theta},\boldsymbol{s}_{n}}})|^{k} + |\mathbb{E}_{\boldsymbol{\theta}_{0}}[\log(\frac{Q_{\boldsymbol{\theta}_{0},\boldsymbol{s}_{n}}}{Q_{\boldsymbol{\theta},\boldsymbol{s}_{n}}})]|^{k})] \\ & \leq \mathbb{E}_{\boldsymbol{\theta}_{0}}[2^{k-1}(|\log(\frac{Q_{\boldsymbol{\theta}_{0},\boldsymbol{s}_{n}}}{Q_{\boldsymbol{\theta},\boldsymbol{s}_{n}}})|^{k} +\mathbb{E}_{\boldsymbol{\theta}_{0}}[|\log(\frac{Q_{\boldsymbol{\theta}_{0},\boldsymbol{s}_{n}}}{Q_{\boldsymbol{\theta},\boldsymbol{s}_{n}}})|^{k}])], \text{ by Jensen's inequality. }\\ 
     &= 2^{k}\mathbb{E}_{\boldsymbol{\theta}_{0}}[|\log(\frac{Q_{\boldsymbol{\theta}_{0},\boldsymbol{s}_{n}}}{Q_{\boldsymbol{\theta},\boldsymbol{s}_{n}}})|^{k}], k \geq 1. 
\end{align*}
 \begin{align*}
     \text{If } \boldsymbol{X}_{t} \sim N(\boldsymbol{\mu}_{t}, \Sigma_{t}), \text{ then } \ell_{\boldsymbol{\theta}}(\boldsymbol{x}) &= \log\{ (2 \pi)^{- \frac{N}{2}}\det(\Sigma_{t})^{-\frac{1}{2}} \exp\{ -\frac{1}{2}(\boldsymbol{x}_{t}-\boldsymbol{\mu}_{t})^{'}\Sigma_{t}^{-1}(\boldsymbol{x}_{t}-\boldsymbol{\mu}_{t}) \} \} \\ &= -\frac{N}{2}\log(2\pi) - \frac{1}{2}\log(\det(\Sigma_{t})) - \frac{1}{2}(\boldsymbol{x}_{t}-\boldsymbol{\mu}_{t})^{'} \Sigma^{-1}_{t}(\boldsymbol{x}_{t}- \boldsymbol{\mu}_{t}).\\ 
    \end{align*}
Then it follows that
\begin{align*}
\log(\frac{Q_{\boldsymbol{\theta}_{0},\boldsymbol{s}_{n},t}}{Q_{\boldsymbol{\theta},\boldsymbol{s}_{n},t}}) &= \log(Q_{\boldsymbol{\theta}_{0},\boldsymbol{s}_{n},t})-\log(Q_{\boldsymbol{\theta},\boldsymbol{s}_{n},t}) \\ & = -\frac{N}{2}\log(2\pi) - \frac{1}{2}\log(\det(\Sigma_{\boldsymbol{\theta}_{0},t})) - \frac{1}{2}(\boldsymbol{x}_{t} - \boldsymbol{\mu}_{0,t})^{'}\Sigma_{\boldsymbol{\theta}_{0},t}^{-1}(\boldsymbol{x}_{t}- \boldsymbol{\mu}_{0,t}) + \frac{N}{2}\log(2\pi) \\ & + \frac{1}{2}\log(\det(\Sigma_{\boldsymbol{\theta},t})) + \frac{1}{2}(\boldsymbol{x}_{t}-\boldsymbol{\mu}_{\boldsymbol{\theta},t})^{'}\Sigma_{\boldsymbol{\theta},t}^{-1}(\boldsymbol{x}_{t}- \boldsymbol{\mu}_{\boldsymbol{\theta},t})\\& = \frac{1}{2}(\boldsymbol{x}_{t}-\boldsymbol{\mu}_{\boldsymbol{\theta},t})^{'}\Sigma_{\boldsymbol{\theta},t}^{-1}(\boldsymbol{x}_{t}-\boldsymbol{\mu}_{\boldsymbol{\theta},t})  + \frac{1}{2}\log(\det(\Sigma_{\boldsymbol{\theta},t}))\\& - \frac{1}{2}\log(\det(\Sigma_{\boldsymbol{\theta}_{0},t})) - \frac{1}{2}(\boldsymbol{x}_{t}- \boldsymbol{\mu}_{0,t})'\Sigma_{\boldsymbol{\theta}_{0},t}^{-1}(\boldsymbol{x}_{t} - \boldsymbol{\mu}_{0,t})\\ &  = \frac{1}{2}\{ (\boldsymbol{x}_{t}-\boldsymbol{\mu}_{\boldsymbol{\theta},t})^{'}\Sigma_{\boldsymbol{\theta},t}^{-1}(\boldsymbol{x}_{t}-\boldsymbol{\mu}_{\boldsymbol{\theta},t}) -(\boldsymbol{x}_{t}-\boldsymbol{\mu}_{0,t})^{'}\Sigma_{\boldsymbol{\theta}_{0},t}^{-1}(\boldsymbol{x}_{t}-\boldsymbol{\mu}_{0,t}) \} + \frac{1}{2}\log(\frac{|\Sigma_{\boldsymbol{\theta},t}|}{|\Sigma_{\boldsymbol{\theta}_{0},t}|}). \\ 
    \end{align*}
Let $\boldsymbol{\eta}_{t} = \boldsymbol{X}_{t}-\boldsymbol{\mu}_{\boldsymbol{\theta}_{0},t}$, then 
\begin{equation*}
    \ell_{t}(\boldsymbol{\theta}) = \frac{1}{2}\{ (\boldsymbol{x}_{t}-\boldsymbol{\mu}_{\boldsymbol{\theta},t})^{'}\Sigma_{\boldsymbol{\theta},t}^{-1}(\boldsymbol{x}_{t} -\boldsymbol{\mu}_{\boldsymbol{\theta},t}) - \boldsymbol{\eta}_{t}^{'}\Sigma_{\boldsymbol{\theta}_{0},t}^{-1}\boldsymbol{\eta}_{t} + \log(\frac{|\Sigma_{\boldsymbol{\theta},t}|}{|\Sigma_{\boldsymbol{\theta}_{0},t}|}) \}.
\end{equation*}
Therefore, $(\boldsymbol{x}_{t} - \boldsymbol{\mu}_{\boldsymbol{\theta},t}) = \boldsymbol{\eta}_{t} + \boldsymbol{\mu}_{\boldsymbol{\theta}_{0},t} - \boldsymbol{\mu}_{\boldsymbol{\theta},t} = \boldsymbol{\eta}_{t} - \Delta \boldsymbol{\mu}_{t}, \text{ where } \Delta \boldsymbol{\mu}_{t} = \boldsymbol{\mu}_{\boldsymbol{\theta},t} - \boldsymbol{\mu}_{\boldsymbol{\theta}_{0},t}=\boldsymbol{0},$ for fixed $\boldsymbol{s}_{n} \in H_{n}$ so that 
 \begin{align*}
     \ell_{t}(\boldsymbol{\theta}) &= \frac{1}{2} \{ (\boldsymbol{\eta}_{t} - \Delta \boldsymbol{\mu}_{t})^{'}\Sigma_{\boldsymbol{\theta},t}^{-1}(\boldsymbol{\eta}_{t} - \Delta \boldsymbol{\mu}_{t}) - \boldsymbol{\eta}_{t}^{'}\Sigma_{\boldsymbol{\theta}_{0},t}^{-1}\boldsymbol{\eta}_{t} + \log(\frac{|\Sigma_{\boldsymbol{\theta},t}|}{|\Sigma_{\boldsymbol{\theta}_{0},t}|})\}.
     \end{align*}
Noting that 
\begin{equation*}
    (\boldsymbol{\eta}_{t} - \Delta \boldsymbol{\mu}_{t})^{'}\Sigma_{\boldsymbol{\theta},t}^{-1}(\boldsymbol{\eta}_{t} - \Delta\boldsymbol{\mu}_{t}) = \boldsymbol{\eta}_{t}^{'}\Sigma_{\boldsymbol{\theta},t}^{-1}\boldsymbol{\eta}_{t}. 
\end{equation*}
It follows that 
     \begin{align*} \ell_{t}(\boldsymbol{\theta}) &= \frac{1}{2} \{ \log(\frac{|\Sigma_{\boldsymbol{\theta},t}|}{|\Sigma_{\boldsymbol{\theta}_{0},t}|})  \} + \frac{1}{2} \{ \boldsymbol{\eta}_{t}^{'}\Sigma_{\boldsymbol{\theta},t}^{-1}\boldsymbol{\eta}_{t}  - \boldsymbol{\eta}_{t}^{'}\Sigma_{\boldsymbol{\theta}_{0},t}^{-1}\boldsymbol{\eta}_{t} \}\\
     & = C_{\boldsymbol{\theta},t} + \frac{1}{2}\boldsymbol{\eta}_{t}^{'}\Sigma_{\boldsymbol{\theta},t}^{-1}\boldsymbol{\eta}_{t}  - \frac{1}{2}\boldsymbol{\eta^{'}}_{t}\Sigma_{\boldsymbol{\theta}_{0},t}^{-1}\boldsymbol{\eta}_{t} \\ 
     & = C_{\boldsymbol{\theta},t} + \frac{1}{2}\boldsymbol{\eta}_{t}^{'}(\Sigma_{\boldsymbol{\theta},t}^{-1} - \Sigma_{\boldsymbol{\theta}_{0},t}^{-1})\boldsymbol{\eta}_{t} ,C_{\boldsymbol{\theta},t} = \frac{1}{2} \{ \log(\frac{|\Sigma_{\boldsymbol{\theta},t}|}{|\Sigma_{\boldsymbol{\theta}_{0},t}|})  \}.
 \end{align*}
 Since $\boldsymbol{\eta}_{t} \sim N(\boldsymbol{0}, \Sigma_{\boldsymbol{\theta}_{0},t}), $ then 
 \begin{equation*}
     \mathbb{E}_{\boldsymbol{\theta}_{0}}[\boldsymbol{\eta}^{'}_{t}(\Sigma_{\boldsymbol{\theta},t}^{-1}-\Sigma_{\boldsymbol{\theta}_{0},t}^{-1})\boldsymbol{\eta}_{t}] = \mathbb{E}_{\boldsymbol{\theta}_{0}}[\boldsymbol{\eta}_{t}^{'}\Sigma_{\boldsymbol{\theta},t}^{-1}\boldsymbol{\eta}_{t}] - \mathbb{E}_{\boldsymbol{\theta}_{0}}[\boldsymbol{\eta}_{t}^{'}\Sigma_{\boldsymbol{\theta}_{0,t}}^{-1}\boldsymbol{\eta}_{t}] = tr(\Sigma_{\boldsymbol{\theta},t}^{-1}\Sigma_{\boldsymbol{\theta}_{0},t})-tr(\Sigma_{\boldsymbol{\theta}_{0},t}^{-1}\Sigma_{\boldsymbol{\theta}_{0},t}) =tr(\Sigma_{\boldsymbol{\theta},t}^{-1}\Sigma_{\boldsymbol{\theta}_{0},t}) - N.
 \end{equation*}
This implies that 
\begin{equation*}
    \mathbb{E}_{\boldsymbol{\theta}_{0}}[\ell_{t}(\boldsymbol{\theta})] = C_{\boldsymbol{\theta},t} + \frac{1}{2}tr(\Sigma_{\boldsymbol{\theta},t}^{-1}\Sigma_{\boldsymbol{\theta}_{0},t})-\frac{N}{2}.
\end{equation*}
Let $Y_{t} = \ell_{t}(\boldsymbol{\theta}) - \mathbb{E}_{\boldsymbol{\theta}_{0}}[\ell_{t}(\boldsymbol{\theta})]$ which is equal to
     \begin{align*}  
     & C_{\boldsymbol{\theta},t} + \frac{1}{2}\boldsymbol{\eta}^{'}_{t}(\Sigma_{\boldsymbol{\theta},t}^{-1}-\Sigma_{\boldsymbol{\theta}_{0},t}^{-1})\boldsymbol{\eta}_{t}  - C_{\boldsymbol{\theta},t} - \frac{1}{2}tr(\Sigma_{\boldsymbol{\theta},t}^{-1}\Sigma_{\boldsymbol{\theta}_{0},t})+\frac{N}{2}\\ 
     & = \frac{1}{2} 
    \{ \boldsymbol{\eta}_{t}^{'}(\Sigma_{\boldsymbol{\theta},t}^{-1} -\Sigma_{\boldsymbol{\theta}_{0},t}^{-1})\boldsymbol{\eta}_{t} - tr(\Sigma_{\boldsymbol{\theta},t}^{-1}\Sigma_{\boldsymbol{\theta}_{0},t}) \} + \frac{N}{2} \\ 
    & = \frac{1}{2} \{ \boldsymbol{\eta}^{'}_{t}(\Sigma_{\boldsymbol{\theta},t}^{-1}-\Sigma_{\boldsymbol{\theta}_{0},t}^{-1})\boldsymbol{\eta}_{t}  -tr(\Sigma_{\boldsymbol{\theta},t}^{-1}\Sigma_{\boldsymbol{\theta}_{0},t}) + tr(\Sigma_{\boldsymbol{\theta}_{0},t}^{-1}\Sigma_{\boldsymbol{\theta}_{0},t}) \} \\ & = \frac{1}{2}[\boldsymbol{\eta}^{'}_{t}(\Sigma_{\boldsymbol{\theta},t}^{-1} -\Sigma_{\boldsymbol{\theta}_{0},t}^{-1})\boldsymbol{\eta}_{t} -[tr(\Sigma_{\boldsymbol{\theta},t}^{-1}\Sigma_{\boldsymbol{\theta}_{0},t})- tr(\Sigma_{\boldsymbol{\theta}_{0},t}^{-1}\Sigma_{\boldsymbol{\theta}_{0},t})]]  \\
    & = \frac{1}{2}[\boldsymbol{\eta}_{t}^{'}(\Sigma_{\boldsymbol{\theta},t}^{-1} - \Sigma_{\boldsymbol{\theta}_{0},t}^{-1})\boldsymbol{\eta}_{t} - tr(\Sigma_{\boldsymbol{\theta},t}^{-1}\Sigma_{\boldsymbol{\theta}_{0},t} -\Sigma_{\boldsymbol{\theta}_{0},t}^{-1}\Sigma_{\boldsymbol{\theta}_{0},t})] .
    \end{align*}
Therefore, 
\begin{align*}
    \mathbb{E}_{\boldsymbol{\theta}_{0}}[Y_{t}]  &=\frac{1}{2}[\mathbb{E}_{\boldsymbol{\theta}_{0}}[\boldsymbol{\eta}_{t}^{'}(\Sigma_{\boldsymbol{\theta},t}^{-1}-\Sigma_{\boldsymbol{\theta}_{0},t}^{-1})\boldsymbol{\eta}_{t}] - tr((\Sigma_{\boldsymbol{\theta},t}^{-1}-\Sigma_{\boldsymbol{\theta}_{0},t}^{-1})\Sigma_{\boldsymbol{\theta}_{0},t})] \\ 
    &=\frac{1}{2}[tr((\Sigma_{\boldsymbol{\theta},t}^{-1}-\Sigma_{\boldsymbol{\theta}_{0},t}^{-1})\Sigma_{\boldsymbol{\theta}_{0},t} -tr((\Sigma_{\boldsymbol{\theta},t}^{-1}-\Sigma_{\boldsymbol{\theta}_{0},t}^{-1})\Sigma_{\boldsymbol{\theta}_{0},t})] \\
    &= 0.
\end{align*}
Now we have that
    \begin{align*}
    \text{Var}(Y_{t}) &= \text{Var}(\frac{1}{2}\{\boldsymbol{\eta}_{t}^{'}(\Sigma_{\boldsymbol{\theta},t}^{-1}-\Sigma_{\boldsymbol{\theta}_{0},t}^{-1})\boldsymbol{\eta}_{t} -tr((\Sigma_{\boldsymbol{\theta},t}^{-1}-\Sigma_{\boldsymbol{\theta}_{0},t}^{-1})\Sigma_{\boldsymbol{\theta}_{0},t}) \} \\ &= \text{Var}_{\boldsymbol{\theta}_{0}}(\frac{1}{2}\{ \boldsymbol{\eta}_{t}^{'}(\Sigma_{\boldsymbol{\theta},t}^{-1}-\Sigma_{\boldsymbol{\theta}_{0},t}^{-1})\boldsymbol{\eta}_{t} -tr((\Sigma_{\boldsymbol{\theta},t}^{-1}-\Sigma_{\boldsymbol{\theta}_{0},t}^{-1})\Sigma_{\boldsymbol{\theta}_{0},t}) \})\\ 
     & = \frac{1}{4}\text{Var}_{\boldsymbol{\theta}_{0}}(\boldsymbol{\eta}_{t}^{'}(\Sigma_{\boldsymbol{\theta},t}^{-1}-\Sigma_{\boldsymbol{\theta}_{0},t}^{-1})\boldsymbol{\eta}_{t}) \\
     & = \frac{1}{2}tr([(\Sigma_{\boldsymbol{\theta},t}^{-1}-\Sigma_{\boldsymbol{\theta}_{0},t}^{-1})\Sigma_{\boldsymbol{\theta}_{0},t}]^{2}). 
    \end{align*}
    Observe that, \begin{align*}
        tr([(\Sigma_{\boldsymbol{\theta},t}^{-1}-\Sigma_{\boldsymbol{\theta}_{0},t}^{-1})\Sigma_{\boldsymbol{{\theta}}_{0},t}]^{2}) &= tr((\Sigma_{\boldsymbol{\theta},t}^{-1} -\Sigma_{\boldsymbol{\theta}_{0},t}^{-1})\Sigma_{\boldsymbol{\theta}_{0},t}(\Sigma_{\boldsymbol{\theta},t}^{-1}-\Sigma_{\boldsymbol{\theta}_{0},t}^{-1})\Sigma_{\boldsymbol{\theta}_{0},t})\\ &\leq ||(\Sigma_{\boldsymbol{\theta},t}^{-1}-\Sigma_{\boldsymbol{\theta}_{0},t}^{-1})\Sigma_{\boldsymbol{\theta}_{0},t}||_{F}^{2} \leq ||(\Sigma_{\boldsymbol{\theta},t}^{-1}-\Sigma_{\boldsymbol{\theta}_{0},t}^{-1})||_{F}^{2}||\Sigma_{\boldsymbol{\theta}_{0},t}||_{F}^{2}.
    \end{align*}

By the Mean Value Theorem there $\exists \epsilon_{ij} = \boldsymbol{\theta}_{0} + s_{ij} (\boldsymbol{\theta}- \boldsymbol{\theta}_{0}), s_{ij} \in (0,1)$  such that 
\begin{equation*}
    \Sigma_{ij,t}(\boldsymbol{\theta})-\Sigma_{ij,t}(\boldsymbol{\theta}_{0}) =\nabla_{\boldsymbol{\theta}}\Sigma_{ij,t}(\epsilon_{ij})^{'}(\boldsymbol{\theta} - \boldsymbol{\theta}_{0}).
\end{equation*}
This implies $
|\Sigma_{ij,t}(\boldsymbol{\theta}) -\Sigma_{ij,t}(\boldsymbol{\theta}_{0})| \leq ||\nabla_{\boldsymbol{\theta}}\Sigma_{ij,t}(\epsilon_{ij})||_{2}||\boldsymbol{\theta} - \boldsymbol{\theta}_{0}||_{2}.$

From the proof of Proposition 1, we found that all the partial derivatives are continuous. Since $U$ is a compact set, we have that $||\nabla_{\boldsymbol{\theta}}\Sigma_{ij,t}(\epsilon')||_{2}$ is bounded by some constant, say $L_{ij}  = \sup\limits_{t \geq 1}\max\limits_{\boldsymbol{\theta} \in U}||\nabla_{\boldsymbol{\theta}}\Sigma_{ij,t}(\epsilon')||_{2}<\infty$, by the extreme value theorem. Therefore, 
\begin{equation*}
    ||\Sigma_{t}(\boldsymbol{\theta}) - \Sigma_{t}(\boldsymbol{\theta}_{0})||_{F}^{2} \leq L_{1}^{2}||\boldsymbol{\theta} - \boldsymbol{\theta}_{0}||_{2}^{2}, L_{1} = (\sum\limits_{i,j = 1}^{N}L_{ij}^{2})^{\frac{1}{2}}.
\end{equation*}
Observe that 
\begin{align*}
    ||\Sigma_{\boldsymbol{\theta},t}^{-1}-\Sigma_{\boldsymbol{\theta}_{0},t}^{-1}||_{F} &= ||\Sigma_{\boldsymbol{\theta},t}^{-1}(\Sigma_{\boldsymbol{\theta}_{0},t}-\Sigma_{\boldsymbol{\theta},t})\Sigma_{\boldsymbol{\theta}_{0},t}^{-1}||_{F}\\
    &\leq ||\Sigma_{\boldsymbol{\theta},t}^{-1}||_{F}||\Sigma_{\boldsymbol{\theta}_{0},t}-\Sigma_{\boldsymbol{\theta},t}||_{F}||\Sigma_{\boldsymbol{\theta}_{0},t}^{-1}||_{F}.
\end{align*}
Since $\Sigma_{\boldsymbol{\theta},t} \succ 0$ on $U$ which is compact, $\lambda_{*}:= \inf\limits_{t \geq 1}\min\limits_{\boldsymbol{\theta} \in U} \lambda_{\min} (\Sigma_{\boldsymbol{\theta},t}) >0$. Then 
\begin{align*}
    ||\Sigma_{\boldsymbol{\theta},t}^{-1}-\Sigma_{\boldsymbol{\theta}_{0},t}^{-1}||_{F} &\leq \frac{N}{\lambda_{*}^{2}}||\Sigma_{\boldsymbol{\theta}_{0},t} - \Sigma_{\boldsymbol{\theta},t}||_{F}\\
    &\leq \frac{N}{\lambda_{*}^{2}} L_{1}||\boldsymbol{\theta} - \boldsymbol{\theta}_{0}||_{2}\\
    &= c_{2}L_{1}||\boldsymbol{\theta} - \boldsymbol{\theta}_{0}||_{2}, c_{2} := \frac{N}{\lambda_{*}^{2}}. 
\end{align*}
This implies
\begin{equation*}
    tr([(\Sigma_{\boldsymbol{\theta},t}^{-1}-\Sigma_{\boldsymbol{\theta}_{0},t}^{-1})\Sigma_{\boldsymbol{\theta}_{0},t}]^{2}) \leq c_{2}^{2}L_{1}^{2}||\boldsymbol{\theta}-\boldsymbol{\theta}_{0}||_{2}^{2}||\Sigma_{\boldsymbol{{\theta}}_{0},t}||_{F}^{2}\ \leq c_{2}^{2}L_{1}^{2}\delta^{2}||\Sigma_{\boldsymbol{\theta}_{0},t}||_{F}^{2}.
\end{equation*}
Therefore,
\begin{align*}
    \text{Var}(Y_{t}) &\leq \frac{1}{2}c_{2}^{2}L_{1}^{2}\delta^{2}||\Sigma_{\boldsymbol{\theta}_{0},t}||_{F}^{2} \leq \frac{1}{2}c_{2}^{2}L_{1}^{2}\delta^{2}c_{F}, c_{F}:= \sup\limits_{t \geq 1} ||\Sigma_{\boldsymbol{\theta}_{0},t}||_{F}^{2}.  
\end{align*}
Let 
\begin{equation*}
    \delta = \frac{\epsilon_{n}}{\sqrt{\max\{ c_{1},\frac{1}{2}c_{2}^{2}L_{1}^{2}c_{F} \}}}. 
\end{equation*}
so that
\begin{align*}
    V_{2,0}(Q_{\boldsymbol{\theta}_{0}, \boldsymbol{s}_{n}}, Q_{\boldsymbol{\theta},\boldsymbol{s}_{n}}) &= \text{Var}_{\boldsymbol{\theta}_{0}} (\sum\limits_{t=1}^{n}{Y}_{t}) \\
    &= \sum\limits_{t=1}^{n}\text{Var}(Y_{t})\\
    &\leq n\cdot\frac{1}{2}c_{2}^{2}L_{1}^{2}\delta^{2}c_{F}\\
    &\leq n\epsilon_{n}^{2}. 
\end{align*}
Then 
\begin{equation*}
   U_{n} = \{ \boldsymbol{\theta} \in \Theta: ||\boldsymbol{\theta} - \boldsymbol{\theta}_{0}||_{2}\leq\frac{\epsilon_{n}}{\sqrt{\max \{ c_{1},\frac{1}{2}c_{2}^{2}L_{1}^{2}c_{F} \}}} \}. 
\end{equation*}
On $U_{n}$ we have shown 
$KL_{n}(\boldsymbol{\theta}_{0}, \boldsymbol{\theta}) \leq n \epsilon_{n}^{2}  \text{ and } V_{2,0}(\boldsymbol{\theta}_{0}, \boldsymbol{\theta}) \leq n\epsilon_{n}^{2}.$ Therefore , $U_{n} \subseteq B_{n}(\boldsymbol{\theta}_{0}, \epsilon_{n},2)$ which implies  ${\Pi}_{n}(B_{n}(\boldsymbol{\theta}_{0},\epsilon_{n};2)) \geq {\Pi}_{n}(U_{n}).$

To lower bound the denominator, we can lower bound the probability of $U_{n}$. 
Let $\boldsymbol{c}_{\boldsymbol{\theta}}$ denote the number of parameters in the model (see the definition of $\Theta_{n}$). The required probability is given by
\begin{equation*}
    \Pi_{n}(U_{n}) = \mathbb{P}(\boldsymbol{\theta} \in \Theta: ||\boldsymbol{\theta} - \boldsymbol{\theta}_{0}||_{2}\leq\frac{\epsilon_{n}}{c_{3}}), \text{ where } c_{3}= {}{\sqrt{\max\{ c_{1}, \frac{1}{2}c_{2}^{2}L_{1}^{2}c_{F} \}}}. 
\end{equation*}
Note for the parameters excluding $\Lambda$
\begin{equation*}
\bigcap\limits_{j=1}^{c_{\boldsymbol{\theta}}-Nr}\{ |\boldsymbol{\theta}^{(j)}-\boldsymbol{\theta}_{0}^{(j)}|\leq \frac{\epsilon_{n}}{c_{4}} \} \subseteq\{ ||\boldsymbol{\theta} - \boldsymbol{\theta}_{0}||_{2} \leq \frac{\epsilon_{n}}{c_{3}} \}, c_{4}:= c_{3}\sqrt{c_{\boldsymbol{\theta}}}.
\end{equation*}
Similarly, define 
\begin{equation*}
    A_{n}^{\Lambda} := \bigcap\limits_{i=1}^{N} \bigcap\limits_{j=1}^{r} \{ |\Lambda_{ij} - \Lambda_{0,ij}| \leq \frac{\epsilon_{n}}{c_{4}} \}. 
\end{equation*}
This implies  
\begin{align*}
    {\Pi}_{n}(||\boldsymbol{\theta} - \boldsymbol{\theta}_{0}||_{2} \leq \frac{\epsilon_{n}}{c_{3}}) &\geq {\Pi}_{n}(A_{n}^{\Lambda}\bigcap\limits_{j=1}^{c_{\boldsymbol{\theta}}-Nr}\{ |\boldsymbol{\theta}^{(j)}-\boldsymbol{\theta}_{0}^{(j)}| \leq \frac{\epsilon_{n}}{c_{4}}\})\\ & =\Pi_{n}(A_{n}^{\Lambda})\prod\limits_{j=1}^{c_{\boldsymbol{\theta}}-Nr}{\Pi}_{n}(|\boldsymbol{\theta}^{(j)}-\boldsymbol{\theta}_{0}^{(j)}| \leq \frac{\epsilon_{n}}{c_{4}}) \\
&=\Pi_{n}(A_{n}^{\Lambda})\prod\limits_{j=1}^{c_{\boldsymbol{\theta}}-Nr}\int\limits_{\boldsymbol{\theta}_{0}-\frac{\epsilon_{n}}{c_{4}}}^{\boldsymbol{\theta}_{0}+\frac{\epsilon_{n}}{c_{4}}}f_{\boldsymbol{\theta}_{j}}(x)dx.
\end{align*}

Now we just need to compute or lower bound these integrals. 
For the mean parameters in the Z-distributed AR(1) processes  
\begin{equation*}
    \mu = \log(\tau_{0}^{2}\tau_{1}^{2}), \tau_{0} \sim C^{+}(0, \frac{1}{\sqrt{n}}), \tau_{1}\sim C^{+}(0,1).
\end{equation*}
The probability density function of a half Cauchy distribution with scale $\sigma$ is $$\frac{2}{\pi \sigma} \frac{1}{1+\frac{y^{2}}{\sigma^{2}}} 
= \frac{2\sigma}{\pi(\sigma^{2} + y^{2})}.$$

Let $X = \tau^{2},$ then,
\begin{equation*}
    f_{X}(x) = f_{\tau}(\sqrt{x})\frac{d}{dx}(\sqrt{x}) =\frac{2\sigma}{\pi(\sigma^{2}+x)}\frac{1}{2\sqrt{x}} = \frac{\sigma}{\sqrt{x}(\sigma^{2}+x)\pi}.
\end{equation*}
Now, let $U = \log(X)$ so that 
\begin{equation*}
    f_{U}(u) = f_{X}(e^{u})e^{u} =\frac{\sigma e^{u}}{\sqrt{e^{u}}(\sigma^{2}+e^{u})\pi} = \frac{\sigma e^{\frac{1}{2}u}}{(\sigma^{2}+e^{u})\pi}.
\end{equation*}
Further let $\mu = U_{0} + U_{1}$ so that, 
     \begin{align*}  f_{\mu}(m) &= \int\limits_{-\infty}^{\infty} f_{U_{0}}(t)f_{U_{1}}(m-t)dt
    \\ &= \int\limits_{-\infty}^{\infty}\frac{S_{0}}{\pi}\frac{e^{\frac{t}{2}}}{(S_{0}^{2}+e^{t})} \frac{S_{1}}{\pi}\frac{e^{\frac{(m-t)}{2}}}{e^{m-t}+S_{1}^{2}}dt 
    \\ &= \frac{S_{0}S_{1}e^{\frac{m}{2}}}{\pi^{2}} \int \limits_{\infty}^{\infty} \frac{1}{(S_{0}^{2}+e^{t})(e^{m-t}+S_{1}^{2})}dt.
     \end{align*} 
Let  $x = e^{t}$ so then,  
  \begin{align*} 
  \int\limits_{-\infty}^{\infty}\frac{1}{(e^{t}+S_{0}^{2})(e^{m-t}+S_{1}^{2})}dt &= \int\limits_{0}^{\infty}\frac{1}{x(x+S_{0}^{2})(\frac{e^{m}}{x}+S_{1}^{2})}dx 
    \\ &= \int\limits_{0}^{\infty} \frac{1}{(x+S_{0}^{2})(e^{m}+S_{1}^{2}x)}dx.
    \end{align*}
Note that
\begin{equation*}
    \frac{1}{(x+a)(b+cx)} =\frac{1}{b-ac}(\frac{1}{x+a} - \frac{c}{b+cx}).
\end{equation*}
Therefore, the integral with such an integrand is equal to  $$\frac{1}{b-ac}[\log(x+a)-\log(x+cx)]_{0}^{\infty} =\frac{1}{b-ac}\log(\frac{b}{ac}).$$
   Therefore, $$\int\limits_{0}^{\infty} \frac{1}{(x+S_{0}^{2})(e^{m}+S_{1}^{2}x)}dx = \frac{m+\log(n)}{e^{m}-\frac{1}{n}}.$$
   This implies  $$f_{\mu}(m) = \frac{S_{0}S_{1}e^{\frac{m}{2}}}{\pi^{2}}\frac{m+\log(n)}{e^{m}-\frac{1}{n}} =\frac{e^{\frac{m}{2}}}{\sqrt{n}\pi^{2}}\frac{m+\log(n)}{e^{m}-\frac{1}{n}}.$$
Suppose, $\mu_{0} \neq -\log(n)$, then the factor 
\begin{equation*}
    \frac{e^{\frac{m}{2}}}{\sqrt{n}\pi^{2}}
\end{equation*}
is strictly positive and continuous everywhere on  $\mathbb{R}.$  Then, 
\begin{equation*}
    \frac{m+\log(n)}{e^{m}-\frac{1}{n}}
\end{equation*}
is the quotient of two functions that are continuous at  $\mu_{0}$ and have non-zero values when  $\mu_{0} \neq - \log(n).$ Therefore the quotient is continuous at  $\mu_{0}.$
We also know that 
\begin{equation*}
    \text{sign}(m+ \log(n)) = \text{sign}(e^{m} - \frac{1}{n}).
\end{equation*}
Then 
\begin{equation*}
    \frac{m+\log(n)}{e^{m}-\frac{1}{n}}>0.
\end{equation*}
Therefore $\lim\limits_{m \rightarrow \mu_{0}} f_{\mu}(m) = f_{\mu}(\mu_{0})>0.$  When  $\mu_{0} = -\log(n)$ we may get $\frac{0}{0}$ which is undefined. However, $f_{\mu}$ extends continuously and positively through this point. 

Let  $g(m) = m+\log(n)$  and  $h(m) = e^{m} - \frac{1}{n}, A(m) =\frac{e^{\frac{m}{2}}}{\sqrt{n}\pi^{2}}$, and $B(m) = \frac{g(m)}{h(m)}.$ Now, apply L'H{\^{o}}pital's rule, $$\lim\limits_{m \rightarrow -\log(n)} B(m) = \lim\limits_{m \rightarrow -\log(n)}\frac{g'(m)}{h'(m)} = e^{\log(n)} =n.$$ Further, $$\lim\limits_{m \rightarrow -\log(n)} A(m) = \frac{e^{-\frac{\log(n)}{2}}}{\sqrt{n}\pi^{2}} = \frac{n^{-\frac{1}{2}}}{\sqrt{n}\pi^{2}} =\frac{1}{n\pi^{2}}.$$ This implies  $$\lim\limits_{m \rightarrow - \log(n)} f_{\mu}(m) = \lim\limits_{m \rightarrow - \log(n)} A(m)B(m) = \frac{1}{\pi^{2}}.$$ So, defining  $f_{\mu}(-\log(n)) := \frac{1}{\pi^{2}}$ makes  $f_{\mu}$ continuous and positive at  $m = -\log(n).$ 

Define $g_{n}(x):= \sqrt{n} f_{\mu,n}(x), x \in \mathbb{R}$. Then $g_{n}$ is continuous and strictly positive on $\mathbb{R}$. Since $I_{\mu} : =[ \mu_{0} - c_{\epsilon}/4, \mu_{0} + c_{\epsilon}/4]$ is compact, $b_{\mu,n}:=\min\limits_{x \in I_{\mu}} g_{n}(x)>0, n \in \mathbb{N}$. Now choose $N_{0} \in \mathbb{N}$ such that 
\begin{equation*}
    \log(n) + \mu_{0} - \frac{c_{\epsilon}}{c_{4}} \geq 1, n \geq N_{0}. 
\end{equation*}
For $n \geq N_{0}$ and $x \in I_{\mu}$. Then $x \leq \mu_{0}+ c_{\epsilon}/c_{4}$ and $x \geq \mu_{0} - c_{\epsilon}/c_{4}$. Hence, 
\begin{equation*}
    x+ \log(n) \geq \mu_{0} - \frac{c_{\epsilon}}{c_{4}} + \log(n) \geq 1,
\end{equation*}
\begin{equation*}
    \exp\{\frac{x}{2} \} \geq \exp\{ \frac{\mu_{0} - \frac{c_{\epsilon}}{c_{4}}}{2} \}, 
\end{equation*}
and
\begin{equation*}
    e^{x} - \frac{1}{n} \leq e^{x} \leq \exp \{ \mu_{0} + \frac{c_{\epsilon}}{c_{4}} \}. 
\end{equation*}
Therefore 
\begin{equation*}
    g_{n}(x) = \frac{e^{\frac{x}{2}}}{\pi^{2}}\frac{x+\log(n)}{e^{x}- \frac{1}{n}} \geq \frac{\exp\{ \frac{1}{2}(\mu_{0} - \frac{c_{\epsilon}}{c_{4}}) \}}{\pi^{2} \exp \{ \mu_{0} +\frac{c_{\epsilon}}{c_{4}} \}}=: c_{\mu}>0. 
\end{equation*}
Since this holds for every $x \in I_{\mu}, b_{\mu, n} \geq c_{\mu}>0, n \geq N_{0}$. Now define $b_{\mu}:= \min\{ b_{\mu,1}, \dots, b_{\mu, N_{0}-1}, c_{\mu} \}>0$. Then if $ 1 \leq n \leq N_{0},$ by the definition of $b_{\mu}$, $b_{\mu,n} \geq b_{\mu}$. If $n \geq N_{0}.$ If $n \geq N_{0}$, $b_{\mu,n} \geq c_{\mu}$, and $b_{\mu} \leq c_{\mu}$. Therefore $b_{\mu,n} \geq b_{\mu}$. Thus $b_{\mu,n} \geq b_{\mu}, \forall n \in \mathbb{N}$. Since $b_{\mu,n} = \min\limits_{x \in I_{\mu}} g_{n}(x)$, it follows that for every $n \in \mathbb{N}$ and every $x \in I_{\mu}$, $g_{n}(x) \geq b_{\mu}$. Equivalently
\begin{equation*}
    f_{\mu,n} \geq b_{\mu}n^{-\frac{1}{2}}, x \in I_{\mu},n \in \mathbb{N}.  
\end{equation*}
Since $\epsilon_{n}/c_{4} \leq c_{\epsilon}/c_{4},$
\begin{equation*}
    [\mu_{0} - \frac{\epsilon_{n}}{c_{4}}, \mu_{0} + \frac{\epsilon_{n}}{c_{4}}] \subseteq I_{\mu}. 
\end{equation*}
Hence, for every $n \in \mathbb{N}$, 
\begin{equation*}
    \int \limits_{\mu_{0} - \frac{\epsilon_{n}}{c_{4}}}^{\mu_{0} +\frac{\epsilon_{n}}{c_{4}}} f_{\mu,n}(x) dx \geq \int\limits_{\mu_{0} - \frac{\epsilon_{n}}{c_{4}}}^{\mu_{0} + \frac{\epsilon_{n}}{c_{4}}}b_{\mu}n^{-\frac{1}{2}} dx = \frac{2b_{\mu}}{c_{4}}n^{-\frac{1}{2}}\epsilon_{n} = a_{\mu}n^{-\frac{1}{2}}\epsilon_{n}, a_{\mu}:= \frac{2b_{\mu}}{c_{4}}. 
\end{equation*}

Let  $Y = \frac{\phi+1}{2},$ then $f_{Y}(y) = \frac{1}{B(a,b)}y^{a-1}(1-y)^{b-1}.$  Observe that $\phi = g(Y) = 2Y-1$
so that $Y = g^{-1}(\phi) = \frac{\phi + 1}{2},$ so that  
$f_{\phi}(\phi) = f_{Y}(g^{-1}(x))|\frac{d}{dx}g^{-1}(x)| =f_{Y}(\frac{\phi+1}{2})\frac{1}{2}.$ 
We set 
\begin{align*} 
f_{\phi}(\phi) &= \frac{1}{2B(a,b)}(\frac{\phi + 1}{2})^{a-1}(1-\frac{\phi + 1}{2})^{b-1}\\ & = \frac{1}{2^{a+b-1}B(a,b)} (\phi+1)^{a-1}(1-\phi)^{b-1}, \text{ where } |\phi|<1. 
\end{align*} 
Define 
\begin{equation*}
    \delta_{\phi}:= \frac{1}{2}(1-|\phi_{0}|)>0 \text{ and } \rho_{\phi,n}:= \min\{ \frac{\epsilon_{n}}{c_{4}}, \delta_{\phi} \}. 
\end{equation*}
Then for every $n \in \mathbb{N}$, $0 < \rho_{n}< \epsilon_{n}/ c_{4}$ and $[\phi_{0} - \rho_{n}, \phi_{0} + \rho_{n}] \subset (-1,1)$. Since $f_{\phi}$ is continuous and strictly positive on the compact interval $[\phi_{0} - \rho_{n}, \phi_{0} + \rho_{n}]$ the extreme value theorem implies 
\begin{equation*}
    b_{\phi,n} :=\min\limits_{|x-\phi_{0}| \leq \rho_{n}} f_{\phi}(x) >0. 
\end{equation*}
Therefore, 
\begin{equation*}
    \int\limits_{\phi_{0} - \frac{\epsilon_{n}}{c_{4}}}^{\phi_{0}+ \frac{\epsilon_{n}}{c_{4}}} f_{\phi}(x) dx \geq \int \limits_{\phi_{0}+ \rho_{n}}^{\phi_{0} + \rho_{n}} f_{\phi}(x) dx \geq \int\limits_{\phi_{0} - \rho_{n}}^{\phi_{0} + \rho_{n}}b_{\phi,n}dx = 2 \rho_{n}b_{\phi,n}. 
\end{equation*}
Since 
\begin{equation*}
    -\frac{1}{2} - \frac{1}{2+c_{\boldsymbol{\theta}}} <0,
\end{equation*}
it follows that 
\begin{equation*}
    n^{-\frac{1}{2}- \frac{1}{2+c_{\boldsymbol{\theta}}}} \leq 1. 
\end{equation*}
Hence $n^{-\frac{1}{2}} \epsilon_{n} \leq c_{\epsilon}$. This implies 
\begin{equation*}
    \frac{\delta_{\phi}}{c_{\epsilon}} n^{-\frac{1}{2}} \epsilon_{n} \leq \delta_{\phi}. 
\end{equation*}
Since $n^{-\frac{1}{2}} \leq 1$, 
\begin{equation*}
    \frac{\epsilon_{n}}{c_{4}} \geq \frac{1}{c_{4}} n^{-\frac{1}{2}}\epsilon_{n}. 
\end{equation*}
So 
\begin{equation*}
    \rho_{n} \geq \min \{ \frac{1}{c_{4}} n^{-\frac{1}{2}}\epsilon_{n}, \frac{\delta_{\phi}}{c_{\epsilon}}n^{-\frac{1}{2}} \epsilon_{n} \} = \min\{ \frac{1}{c_{4}}, \frac{\delta_{\phi}}{c_{\epsilon}} \}n^{-\frac{1}{2}}\epsilon_{n}. 
\end{equation*}
Consider the interval 
\begin{equation*}
    I_{\phi} = [\phi_{0} - R_{\phi}, \phi_{0}+ R_{\phi}] \text{ with } R_{\phi} := \min\{ \frac{c_{\epsilon}}{c_{4}}, \frac{1-|\phi_{0}|}{2} \}.
\end{equation*}
Then
\begin{equation*}
    b_{\phi} := \min\limits_{x \in I_{\phi}} f_{\phi}(x) >0. 
\end{equation*}
Since $\rho_{n} \leq R_{\phi}$, $[\phi_{0} - \rho_{n}, \phi_{0} + \rho_{n}] \subseteq I_{\phi}$. Therefore $b_{\phi,n} \geq b_{\phi}$. 

Define 
\begin{equation*}
    a_{\phi}:= \{\frac{1}{c_{4}}, \frac{\delta_{\phi}}{c_{\epsilon}} \}b_{\phi}. 
\end{equation*}
Hence 
\begin{equation*}
    \int\limits_{\phi_{0} - \frac{\epsilon_{n}}{c_{4}}}^{\phi_{0} + \frac{\epsilon_{n}}{c_{4}}} f_{\phi}(x) \geq a_{\phi}n^{-\frac{1}{2}} \epsilon_{n}. 
\end{equation*}

Now consider a generic parameter $X \sim N(\mu, \sigma^{2})$ with true value $x_{0} \in \mathbb{R}$. As $\epsilon_{n}$ is decreasing, $\epsilon_{n} \leq \epsilon, \forall n \in \mathbb{N}$. Now take any $x$ such that $|x-x_{0}| \leq \epsilon_{n}/c_{4}$. Then 
\begin{align*}
    |x-\mu| & = |(x-x_{0}) + (x_{0}-\mu)|\\
    &\leq |x- x_{0}| + |x_{0} -\mu|\\
    & \leq \frac{\epsilon_{n}}{c_{4}} + |x_{0} - \mu|\\
    &\leq \frac{\epsilon_{1}}{c_{4}} + |x_{0}-\mu|. 
\end{align*}
Therefore 
\begin{equation*}
    (x-\mu)^{2} \leq (\frac{\epsilon_{1}}{c_{4}} + |x_{0}- \mu|)^{2}. 
\end{equation*}
Hence 
\begin{equation*}
    -\frac{(x-\mu)^{2}}{2\sigma^{2}} \geq -\frac{(\frac{\epsilon_{1}}{c_{4}} + |x_{0}-\mu|^{2}}{2\sigma^{2}}. 
\end{equation*}
So that
\begin{equation*}
    f_{X}(x) = \frac{1}{\sqrt{2\pi}\sigma} \exp\{ - \frac{(x-\mu)^{2}}{2\sigma^{2}} \} \geq \frac{1}{\sqrt{2\pi}\sigma} \exp\{ - \frac{(\frac{\epsilon_{1}}{4}+|x_{0}-\mu|)^{2}}{2\sigma^{2}} \}. 
\end{equation*}
Now define 
\begin{equation*}
    b_{x} = \frac{1}{\sqrt{2\pi}\sigma} \exp\{ - \frac{(\frac{\epsilon_{1}}{c_{4}}+ |x_{0}-\mu|)^{2}}{2\sigma^{2}} \}>0. 
\end{equation*}
Then for every $x \in [x_{0} - \frac{\epsilon_{n}}{c_{4}},x_{0} + \frac{\epsilon_{n}}{c_{4}}], f_{X}(x) \geq b_{x}$. 
Therefore, 
\begin{equation*}
    \int\limits_{x_{0} - \frac{\epsilon_{n}}{c_{4}}}^{x_{0} + \frac{\epsilon_{n}}{c_{4}}} f_{X}(x) dx \geq \frac{2b_{x}}{c_{4}} \epsilon_{n} \geq \frac{2b_{x}}{c_{4}}n^{-\frac{1}{2}} \epsilon_{n}= a_{x}n^{-\frac{1}{2}}\epsilon_{n}, a_{x}:= \frac{2b_{x}}{c_{4}}. 
\end{equation*}
Now consider the Gamma distributed random variables. Consider generic $Z \sim Ga(\alpha, \beta)$. Then 
\begin{equation*}
    f(z)  = \frac{\beta^{\alpha}z^{\alpha-1}e^{-\beta z}}{\Gamma(\alpha)}. 
\end{equation*}
Define 
\begin{equation*}
    \rho_{n}:= \min \{ \frac{\epsilon_{n}}{c_{4}}, \frac{z_{0}}{2} \}. 
\end{equation*}
Then, $0 \leq \rho_{n} \leq \epsilon_{n}/c_{4}$ and 
\begin{equation*}
    z_{0} - \rho_{n} \geq z_{0} - \frac{z_{0}}{2} = \frac{z_{0}}{2} >0.
\end{equation*}
Hence $[z_{0} - \rho_{n}, z_{0} + \rho_{n}]\subset (0,\infty)$. Then by the extreme value theorem 
\begin{equation*}
    b_{\Gamma,n} :=\min \limits_{|z-z_{0}| \leq \rho_{n}} f(z) >0. 
\end{equation*}
Since 
\begin{equation*}
    [z_{0} - \rho_{n}, z_{0} + \rho_{n}] \subset [z_{0} - \frac{\epsilon_{n}}{c_{4}}, z_{0} + \frac{\epsilon_{n}}{c_{4}}]
\end{equation*}
we have $\forall n \in \mathbb{N}$
\begin{equation*}
    \int\limits_{z_{0}- \frac{\epsilon_{n}}{c_{4}}}^{z_{0} + \frac{\epsilon_{n}}{c_{4}}} f(z) dz \geq \int\limits_{z_{0}-\rho_{n}}^{z_{0}+\rho_{n}} f(z) dz \geq \int \limits_{z_{0} - \rho_{n}}^{z_{0}+\rho_{n}} b_{\Gamma,n} dz = 2\rho_{n}b_{\Gamma,n} \geq \frac{2b_{\Gamma}}{C_{4}} n^{-\frac{1}{2}}\epsilon_{n} = a_{\sigma} n^{-\frac{1}{2}} \epsilon_{n}, a_{\sigma}:= \frac{2b_{\Gamma}}{c_{4}}. 
\end{equation*}

Note 
\begin{equation*}
    f_{\Lambda_{ij}|\tau_{ij}^{2}} (x|t) = \frac{1}{\sqrt{2\pi t}} \exp \{ -\frac{x^{2}}{2t} \}, t>0, 
\end{equation*}
\begin{equation*}
    f_{\tau_{ij}|\lambda_{i}^{2}} (t|\ell) = \frac{(0.05\ell)^{0.1}}{\Gamma(0.1)}t^{-0.9}e^{-0.05\ell t}, t>0, \ell >0, 
\end{equation*}
and 
\begin{equation*}
    f_{\lambda_{i}^{2}} (\ell) = e^{- \ell}, \ell >0. 
\end{equation*}
Define 
\begin{equation*}
    E_{i}:= \{ \lambda_{i}^{2} \in [1,2], \tau_{ij}^{2} \in [1,2], \forall j = 1, \dots, r  \}. 
\end{equation*}
For $\ell \in [1,2]$, 
\begin{equation*}
    f_{\lambda_{i}^{2}}(\ell) = e^{-\ell} \geq e^{-2}. 
\end{equation*}
Hence 
\begin{equation*}
    \Pi(\lambda_{i}^{2} \in [1,2]) = \int\limits_{1}^{2}e^{- \ell} d\ell = e^{-1} - e^{-2}>0. 
\end{equation*}
For $\ell \in [1,2]$ and $t \in [1,2]$, $f_{\tau_{ij}^{2}|\lambda_{i}^{2}}$ is bounded below by 
\begin{equation*}
    c_{\tau} := \frac{(0.05)^{0.1}}{\Gamma(0.1)}2^{-0.9}e^{-0.2}>0. 
\end{equation*}
Therefore, 
\begin{equation*}
    \Pi(\tau_{ij}^{2}\in [1,2]|\lambda_{i}^{2} = \ell) = \int\limits_{1}^{2} f_{\tau_{ij}^{2}|\lambda_{i}^{2}}(t|\ell) dt \geq c_{\tau}. 
\end{equation*}
Since conditional on $\lambda_{i}^{2}$, the variables $\tau_{i1}^{2}, \dots, \tau_{ir}^{2}$ are independent, which implies
\begin{equation*}
    \Pi(\tau_{ij}^{2} \in [1,2], \forall j| \lambda_{i}^{2} = \ell) \geq c_{\tau}^{r}. 
\end{equation*}
Integrating over $\ell$ gives, 
\begin{equation*}
    \Pi(E_{i}) = \int\limits_{1}^{2} \Pi(\tau_{ij}^{2}\in [1,2], \forall j|\lambda_{i}^{2} = \ell)f_{\lambda_{i}^{2}}(\ell) d\ell \geq (e^{-1} - e^{-2})c_{\tau}^{r}. 
\end{equation*}
So that
\begin{equation*}
    \Pi(E_{i}) \geq c_{E,i}, c_{E,i}:= (e^{-1}-e^{-2})c_{\tau}^{r}>0. 
\end{equation*}
Now fix $i$ and $j$. On $E_{i}$ we have $\tau_{ij} \in [1,2]$. Also for every $n \in \mathbb{N}, \epsilon_{n}/c_{4} \leq c_{\epsilon}/c_{4}. $ Then if $|x-\Lambda_{0,ij}| \leq \epsilon_{n}/c_{4}$ then $|x-\Lambda_{0,ij}| \leq c_{\epsilon}/c_{4}.$ Hence for $t \in [1,2]$ and $|x - \Lambda_{0,ij}| \leq c_{\epsilon}/c_{4}$, $f_{\Lambda_{ij|\tau_{ij}^{2}}}$ is bounded below by 
\begin{equation*}
    c_{ij}:= \frac{1}{\sqrt{4\pi}} \exp\{- \frac{(|\Lambda_{0,ij}| +R)^{2}}{2} \} >0. 
\end{equation*}
Therefore, whenever $t \in [1,2]$
\begin{equation*}
    \Pi(|\Lambda_{ij}- \Lambda_{0,ij}| \leq \frac{\epsilon_{n}}{c_{4}}|\tau_{ij}^{2} = t) = \int\limits_{\Lambda_{0,ij} - \frac{\epsilon_{n}}{c_{4}}}^{\Lambda_{0,ij} + \frac{\epsilon_{n}}{c_{4}}}f_{\Lambda_{ij}|\tau_{ij}^{2}}(x|t) dx \geq \frac{2\epsilon_{n}}{c_{4}} c_{ij}. 
\end{equation*}
Define 
\begin{equation*}
    A_{i,n}:= \bigcap\limits_{j = 1}^{r} \{|\Lambda_{ij} - \Lambda_{0,ij}| \leq \frac{\epsilon_{n}}{c_{4}} \}. 
\end{equation*}
Conditional on $\tau_{i1}^{2}, \dots, \tau_{ir}^{2}$, the variables $\Lambda_{i1}, \dots, \Lambda_{ir}$ are independent, so on the event $E_{i}$, 
\begin{equation*}
    \Pi(A_{i,n}|\tau_{i1}^{2}, \dots, \tau_{ir}^{2}, \lambda_{i}^{2}) = \prod\limits_{j = 1}^{r} \Pi(|\Lambda_{ij} - \Lambda_{0,ij}| \leq \frac{\epsilon_{n}}{c_{4}} | \tau_{ij}^{2}) \geq \prod\limits_{j=1}^{r} (2 \frac{\epsilon_{n}}{c_{4}}c_{ij}). 
\end{equation*}
Thus, 
\begin{equation*}
    \Pi(A_{i,n}) \geq \mathbb{E}[\boldsymbol{1}\{ E_{i} \} \Pi(A_{i,n}|\tau_{i1}^{2}, \dots, \tau_{ir}^{2}, \lambda_{i}^{2})] \geq \Pi(E_{i}) (2 \frac{\epsilon_{n}}{c_{4}})^{r}\prod\limits_{j=1}^{r}c_{ij}. 
\end{equation*}
Using the bound for $\Pi(E_{i})$, 
\begin{equation*}
    \Pi(A_{i,n}) \geq (e^{-1} - e^{-2})c_{\tau}^{r} (2\frac{\epsilon_{n}}{c_{4}})^{r} \prod\limits_{j=1}^{r}c_{ij}. 
\end{equation*}
So for each row $i$, 
\begin{equation*}
    \Pi(A_{i,n}) \geq a_{\Lambda,{i}} (\frac{\epsilon_{n}}{c_{4}})^{r}, 
\end{equation*}
\text{ where }
\begin{equation*}
    a_{\Lambda,i} := 2^{r}(e^{-1} - e^{-2})c_{\tau}^{r} \prod\limits_{j=1}^{r} c_{ij} >0.  
\end{equation*}
Now define 
\begin{equation*}
    A_{n}^{\Lambda} := \bigcap\limits_{i=1}^{N} A_{i,n} = \bigcap\limits_{i=1}^{N} \bigcap\limits_{j=1}^{r}\{|\Lambda_{ij} - \Lambda_{0,ij}| \leq \frac{\epsilon_{n}}{c_{4}} \}. 
\end{equation*}
Because the prior is independent across rows $i$, the events $A_{1,n}, \dots, A_{N,n}$ are independent. Hence, 
\begin{equation*}
    \Pi(A_{n}^{\Lambda}) = \prod\limits_{i=1}^{N} \Pi(A_{i,n}) \geq \prod\limits_{i=1}^{N} a_{\Lambda,i}(\frac{\epsilon_{n}}{c_{4}})^{r}. 
\end{equation*}
Therefore, 
\begin{equation*}
    \Pi(\bigcap\limits_{i=1}^{N}\bigcap\limits_{j=1}^{r} \{ |\Lambda_{ij} - \Lambda_{0,ij}| \leq \frac{\epsilon_{n}}{c_{4}} \}) \geq \tilde{a}_{\Lambda}\epsilon_{n}^{Nr}, \tilde{a}_{\Lambda}:= a_{\Lambda} c_{4}^{-Nr}>0. 
\end{equation*}
Since $n^{-\frac{1}{2}}\leq 1$ for every $n \in \mathbb{N}$, every parameter bound of the form $a_{i}\epsilon_{n}$ is lower bounded by $a_{i}n^{-\frac{1}{2}} \epsilon_{n}$. Thus, for the $c_{\boldsymbol{\theta}} -Nr$ scalar parameters 
\begin{equation*}
    a_{i}\epsilon_{n} \geq a_{i} n^{-\frac{1}{2}} \epsilon_{n}. 
\end{equation*}
Also 
\begin{equation*}
    \tilde{a}_{\Lambda}\epsilon_{n}^{Nr} \geq \tilde{a}_{\Lambda}n^{-\frac{Nr}{2}} \epsilon_{n}^{Nr}. 
\end{equation*}
Now define 
\begin{equation*}
    m:= \min\{ \tilde{a}_{\Lambda}^{\frac{1}{Nr}}, a_{1}, \dots, a_{c_{\boldsymbol{\theta}}-Nr} \}>0. 
\end{equation*}
Then, 
\begin{align*}
    \Pi_{n}(B_{n}(\boldsymbol{\theta}_{0}, \epsilon_{n}, 2)) &\geq \Pi_{n}(U_{n})\\
    &\geq \tilde{a}_{\Lambda} \epsilon_{n}^{Nr} \prod\limits_{i=1}^{c_{\boldsymbol{\theta}}-Nr}a_{i}\epsilon_{n}\\
    &\geq \tilde{a}_{\Lambda} n^{-\frac{Nr}{2}}\epsilon_{n}^{Nr} \prod\limits_{i=1}^{c_{\boldsymbol{\theta}}-Nr}a_{i}n^{-\frac{1}{2}}\epsilon_{n}\\
    &\geq m^{Nr}n^{-\frac{Nr}{2}} \epsilon_{n}^{Nr} \prod\limits_{i=1}^{c_{\boldsymbol{\theta}}-Nr} mn^{-\frac{1}{2}} \epsilon_{n}\\
    &= (mn^{-\frac{1}{2}}\epsilon_{n})^{c_{\boldsymbol{\theta}}}. 
\end{align*}
Thus, we need to verify that 
\begin{align*}
    \frac{1}{(mn^{-\frac{1}{2}}\epsilon_{n})^{c_{\boldsymbol{\theta}}}} &\leq \exp\{ \frac{Kn\epsilon_{n}^{2}j^{2}}{2} \}\\
    \iff \frac{1}{(mn^{-\frac{1}{2}}c_{\epsilon}n^{-\frac{1}{2+c_{\boldsymbol{\theta}}}})^{c_{\boldsymbol{\theta}}}} & \leq \exp \{ \frac{Kc_{\epsilon}^{2}j^{2}}{2}n^{\frac{c_{\boldsymbol{\theta}}}{2+c_{\boldsymbol{\theta}}}} \}\\
     \iff \frac{n^{\frac{c_{\boldsymbol{\theta}}}{2}+\frac{c_{\boldsymbol{\theta}}}{2+c_{\boldsymbol{\theta}}}}}{(mc_{\epsilon})^{c_{\boldsymbol{\theta}}}}&\leq \exp \{ \frac{Kc_{\epsilon}^{2}j^{2}}{2}n^{\frac{c_{\boldsymbol{\theta}}}{2+c_{\boldsymbol{\theta}}}} \}\\
     \iff (\frac{c_{\boldsymbol{\theta}}}{2} + \frac{c_{\boldsymbol{\theta}}}{2+c_{\boldsymbol{\theta}}})\log(n) - c_{\boldsymbol{\theta}}\log(mc_{\epsilon}) &\leq \frac{Kc_{\epsilon}^{2}j^{2}}{2}n^{\frac{c_{\boldsymbol{\theta}}}{2+c_{\boldsymbol{\theta}}}}.
\end{align*}
Now choose $c_{\epsilon} \geq 1/m.$ Then $-c_{\boldsymbol{\theta}}\log(mc_{\epsilon}) \leq 0$. So it is enough to require 
\begin{equation*}
    (\frac{c_{\boldsymbol{\theta}}}{2} + \frac{c_{\boldsymbol{\theta}}}{2+c_{\boldsymbol{\theta}}}) \log(n) \leq \frac{Kc_{\epsilon}^{2}j^{2}}{2}n^{\frac{c_{\boldsymbol{\theta}}}{2+c_{\boldsymbol{\theta}}}}. 
\end{equation*}
Observe
\begin{equation*}
    \frac{c_{\boldsymbol{\theta}}}{2+c_{\boldsymbol{\theta}}} \log(n) = \log(n^{\frac{c_{\boldsymbol{\theta}}}{2+c_{\boldsymbol{\theta}}}})\leq \frac{1}{e}n^{\frac{c_{\boldsymbol{\theta}}}{2+c_{\boldsymbol{\theta}}}}.
\end{equation*}
Hence 
\begin{equation*}
    \log(n) \leq \frac{2+c_{\boldsymbol{\theta}}}{ec_{\boldsymbol{\theta}}}n^{\frac{c_{\boldsymbol{\theta}}}{2+c_{\boldsymbol{\theta}}}}.
\end{equation*}
Then it follows that
\begin{align*}
    (\frac{c_{\boldsymbol{\theta}}}{2} + \frac{c_{\boldsymbol{\theta}}}{2+c_{\boldsymbol{\theta}}})\frac{2+c_{\boldsymbol{\theta}}}{ec_{\boldsymbol{\theta}}} &\leq \frac{Kc_{\epsilon}^{2}j^{2}}{2}\\
    \iff \frac{4+c_{\boldsymbol{\theta}}}{2e} & \leq \frac{Kc_{\epsilon}^{2}j^{2}}{2}. 
\end{align*}
Note we are free to choose $c_{\epsilon}$. Any choice of $c_{\epsilon}$ which is sufficiently large in view of Proposition 1 and satisfies 
\begin{equation*}
    \max \{ \frac{1}{m},\sqrt{\frac{4+c_{\boldsymbol{\theta}}}{Ke}} \} \leq c_{\epsilon}
\end{equation*}
would suffice. 
\end{proof}
\subsection*{Proof of Proposition 4}
\begin{proof}
Define 
\begin{align*}
    A_{M,n}&:= \{ \max\limits_{1 \leq t \leq n} |h_{M,t-1}| \leq |\mu_{M,0}| + p\log(n) \},\\ 
    A_{\alpha,n} &:= \{ \max\limits_{1 \leq i \leq N, 1 \leq t \leq n-1} |h_{\alpha_{i}, t-1}| \leq |\mu_{\alpha_{i},0}| +\frac{c\log(n) + \log(2K_{\alpha,i}(s))}{s} \}, \\
    A_{\beta,n} &:= \{ \max\limits_{1 \leq i \leq N, 1 \leq t \leq n} |\beta_{i,t-1}| \leq \sqrt{(\kappa+1) C_{4}n^{\alpha} \log(n)} \}, \\ 
    A_{\tilde{h},{n}} &:= \{ \max\limits_{1 \leq k \leq r, 1 \leq t \leq n} |\tilde{h}_{k,t-1}| \leq |\tilde{\mu}_{k,0}| + p\log(n) \}, \\ 
    A_{\bar{h},{n}} &:=  \{ \max \limits_{1 \leq i \leq N, 1 \leq t \leq n}|\bar{h}_{i,t-1} | \leq |\bar{\mu}_{i,0}| + p\log(n) \}. 
\end{align*}
Then 
\begin{equation*}
    H_{n} = A_{M,n} \cap A_{\alpha,n} \cap A_{\beta,n} \cap A_{\tilde{h}, n} \cap A_{\bar{h}, n}.
\end{equation*}
Then by the union bound
\begin{equation*}
    \mathbb{P}_{\boldsymbol{\theta}_{0}} (H_{n}^{c}) \leq \mathbb{P}_{\boldsymbol{\theta}_{0}}(A_{M,n}^{c}) + \mathbb{P}_{\boldsymbol{\theta}_{0}}( A_{\alpha, n}^{c}) + \mathbb{P}_{\boldsymbol{\theta}_{0}}( A_{\beta,n}^{c}) + \mathbb{P}_{\boldsymbol{\theta}_{0}}(A_{\tilde{h},{n}}^{c}) + \mathbb{P}_{\boldsymbol{\theta}_{0}}( A_{\bar{h},{n}}^{c}). 
\end{equation*}
For the Gaussian AR(1) log-volatility processes, first consider a generic process of the form 
\begin{equation*}
    x_{t} = \mu_{0} + \phi_{0} (x_{t-1} - \mu_{0}) + \sigma_{0} \eta_{t}, \eta_{t} \sim N(0,1), |\phi_{0}| < 1. 
\end{equation*}
Then 
\begin{equation*}
    x_{t} - \mu_{0} = \sum\limits_{j=0}^{\infty} \phi_{0}^{j} \sigma_{0}\eta_{t-j}. 
\end{equation*}
This means 
\begin{equation*}
    x_{t} - \mu_{0} \sim N(0, \frac{\sigma_{0}^{2}}{1-\phi_{0}^{2}}). 
\end{equation*}
So for $u \in \mathbb{R}$, 
\begin{equation*}
    \mathbb{E}_{\boldsymbol{\theta}_{0}}[\exp \{ u(x_{t}-\mu_{0}) \}] = \exp \{ \frac{\sigma_{0}^{2}}{2(1-\phi_{0}^{2})}u^{2} \}.
\end{equation*}
Then by Chernoff's bound 
\begin{equation*}
    \mathbb{P}_{\boldsymbol{\theta}_{0}} (x_{t} - \mu_{0} > y) \leq \exp\{ \frac{\sigma_{0}^{2}u^{2}}{2(1-\phi_{0}^{2})}-uy \}. 
\end{equation*}
Similarly 
\begin{equation*}
    \mathbb{P}_{\boldsymbol{\theta}_{0}} (x_{t} - \mu_{0} < -y) \leq \exp \{ \frac{\sigma_{0}^{2}u^{2}}{2(1- \phi_{0}^{2})}-uy \}. 
\end{equation*}
Minimizing in $u$, with minimum $u  =(1-\phi_{0}^{2})y/\sigma_{0}^{2}$ yields 
\begin{equation*}
    \mathbb{P}_{\boldsymbol{\theta}_{0}} (|x_{t}-\mu_{0}| > y) \leq 2\exp \{ - \frac{(1-\phi_{0}^{2})y^{2}}{2\sigma_{0}^{2}} \}. 
\end{equation*}
Then let $y = p\log(n)$, so 
\begin{equation*}
    \mathbb{P}_{\boldsymbol{\theta}_{0}}(|x_{t} - \mu_{0}| > p\log(n)) \leq 2\exp \{ - \frac{(1-\phi_{0}^{2})p^{2}\log^{2}(n)}{2\sigma_{0}^{2}} \}. 
\end{equation*}
For $h_{M,t},$
\begin{align*}
    \mathbb{P}_{\boldsymbol{\theta}_{0}} (|h_{M,t}| > |\mu_{M,0}| + p\log(n)) &\leq \mathbb{P}_{\boldsymbol{\theta}_{0}}(|h_{M,t}-\mu_{M,0}|>p\log(n))\\
    & \leq  2\exp \{ - \frac{(1-\phi_{M,0}^{2}) p^{2} \log^{2}(n)}{2\sigma_{M,0}^{2}} \}. 
\end{align*}
By the union bound 
\begin{equation*}
    \mathbb{P}_{\boldsymbol{\theta}_{0}} (A_{M,n}^{c}) \leq 2n\exp\{ - \frac{(1-\phi_{M,0}^{2})p^{2}\log^{2}(n)}{2\sigma_{M,0}^{2}} \} \rightarrow 0. 
\end{equation*}
For each fixed $k$, 
\begin{equation*}
    \mathbb{P}_{\boldsymbol{\theta}_{0}} (|\tilde{h}_{k,t}| > |\tilde{\mu}_{k,0}| + p\log(n)) \leq 2\exp\{ - \frac{(1-\tilde{\phi}_{k,0}^{2})p^{2}\log^{2}(n)}{2\tilde{\sigma}_{k,0}^{2}} \}. 
\end{equation*}
Therefore, 
\begin{align*}
    \mathbb{P}_{\boldsymbol{\theta}_{0}}(A_{\tilde{h},{n}}^{c})  & \leq \sum\limits_{k = 1}^{r} \sum\limits_{t=1}^{n} \mathbb{P}_{\boldsymbol{\theta}_{0}} (|\tilde{h}_{k,t-1}| > |\tilde{\mu}_{k,0}| + p \log(n)) \\ 
    & \leq 2rn\max\limits_{1 \leq k \leq r} \exp \{ - \frac{(1-\tilde{\phi}_{k,0}^{2})p^{2} \log^{2}(n)}{2\tilde{\sigma}_{k,0}^{2}} \} \rightarrow 0. 
\end{align*}
Similarly, for each fixed $i$, 
\begin{equation*}
    \mathbb{P}_{\boldsymbol{\theta}_{0}}(|\bar{h}_{i,t}| > |\bar{\mu}_{i,0}| + p \log(n)) \leq 2\exp \{ -\frac{(1-\bar{\phi}_{i,0}^{2})p^{2}\log^{2}(n)}{2\bar{\sigma}_{i,0}^{2}}\}. 
\end{equation*}
Hence
\begin{equation*}
    \mathbb{P}_{\boldsymbol{\theta}_{0}}(A_{\bar{h},{n}}^{c}) \leq 2Nn \max\limits_{1 \leq i \leq N} \exp \{ - \frac{(1-\bar{\phi}_{i,0}^{2})p^{2}\log^{2}(n)}{2\bar{\sigma}_{i,0}^{2}} \} \rightarrow 0. 
\end{equation*}
Now consider the Z-innovation AR(1) process. Consider a generic process 
\begin{equation*}
    x_{t} = \mu_{0} + \phi_{0}(x_{t-1} - \mu_{0}) + \epsilon_{t}, |\phi_{0}|<1, \epsilon_{t}\sim Z(\frac{1}{2}, \frac{1}{2}, 0,1). 
\end{equation*}

Before proceeding we first need to derive the moment generating function of the four-parameter $Z$-distribution. The moment generating function of $X \sim Z(\frac{1}{2},\frac{1}{2},0,1) $ is  
\begin{equation*}
    M(t) = \mathbb{E}[e^{tx}] = \int\limits_{-\infty}^{\infty} e^{tx}f(x) dx = \frac{1}{\pi}\int\limits_{-\infty}^{\infty}\frac{e^{(t+\frac{1}{2})x}}{1+e^{x}}dx. 
\end{equation*} 
By the change of variables $u = e^{x}$ it follows that 
\begin{equation*}
    M(t) = \frac{1}{\pi}\int\limits_{0}^{\infty}\frac{u^{t+\frac{1}{2}}}{1+u}\frac{du}{u} = \frac{1}{\pi}\int\limits_{0}^{\infty} \frac{u^{t-\frac{1}{2}}}{1+u}du. 
\end{equation*}
Letting $a = t + \frac{1}{2}$
\begin{equation*}M(t) = \frac{1}{\pi}\int\limits_{0}^{\infty}\frac{u^{a-1}}{1+u} du.
\end{equation*}
Another change of variables with $v = \frac{u}{1+u}$ gives 
\begin{align*}
    M(t) &= \frac{1}{\pi}\int\limits_{0}^{1} v^{a-1}(1-v)^{-(a-1)} (1-v)\frac{1}{(1-v)^{2}}dv = \frac{1}{\pi}\int\limits_{0}^{1}v^{a-1}(1-v)^{-(a-1)}(1-v)^{-1}dv \\ & = \frac{1}{\pi}\int\limits_{0}^{1}v^{a-1}(1-v)^{-a}dv = \frac{1}{\pi} \int\limits_{0}^{1} v^{a-1}(1-v)^{(1-a)-1}dv = \frac{1}{\pi}B(a,1-a) \\ & = \frac{1}{\pi}B(t+\frac{1}{2}, \frac{1}{2}-t) = \frac{1}{\pi}\frac{\Gamma(t+\frac{1}{2})\Gamma(\frac{1}{2}-t)}{\Gamma(1)} = \frac{1}{\pi} \Gamma(t+\frac{1}{2})\Gamma(\frac{1}{2}-t).
\end{align*}
With  $z = t+\frac{1}{2}, M(t) = \frac{1}{\pi}\Gamma(z)\Gamma(1-z) = \frac{1}{\pi}\frac{\pi}{sin(\pi z)} = \frac{1}{sin(\pi z)} $ by Euler's reflection formula.  Thus, $M(t)$, has the convenient form $M(t) = \frac{1}{sin(\pi(t+\frac{1}{2}))} =\frac{1}{sin(\pi t + \frac{\pi}{2})} = \frac{1}{cos(\pi t)} =sec(\pi t)$. 

\indent For a centered autoregressive process of order 1 with $Z$-distributed innovations 
\begin{equation*}
    X_{t} = \sum\limits_{j=0}^{\infty}\phi^{j}\epsilon_{t-j} \text{ which implies } M_{X}(s) = \prod\limits_{j=0}^{\infty} sec(\pi \phi^{j}s).
\end{equation*}
\begin{align*}
    &\text{For fixed } s  \in (0,0.5) \text{ the infinite product converges. }\text{For such an s we have }\\
    &|\pi \phi^{j}s| = \pi |\phi|^{j}s \leq \pi s < \frac{\pi}{2}. \text{Therefore, } \pi \phi^{j}s \in (- \frac{\pi}{2}, \frac{\pi}{2}). \text{ Consider} \sum\limits_{j=1}^{\infty} \log(sec(\pi \phi^{j} s)) \\
    & \text{where by Taylor expansion we see } \log(sec(x)) = \frac{x^{2}}{2} + O(x^{4}). \text{This implies }\\ & \log(sec(\pi \phi^{j}s)) = \frac{\pi^{2}\phi^{2j}s^{2}}{2} + O(\phi^{4j}). \text{Observe } \lim\limits_{j \rightarrow \infty} \frac{\frac{1}{2}\pi^{2}s^{2}\phi^{2(j+1)} + O(\phi^{4j})}{\frac{1}{2}\pi^{2}s^{2}\phi^{2j} + O(\phi^{4j})} = \phi^{2}, \text{with } |\phi^{2}| < 1. \\
    &\text{Therefore, by the ratio test the sum is absolutely convergent. Therefore the product converges, } \\ & \hspace{7em }\prod\limits_{j=0}^{\infty} sec(\pi\phi^{j}s) = \exp \{ \sum\limits_{j=0}^{\infty} \log(sec(\pi\phi^{j}s))\} < \infty.
\end{align*}

Therefore, for each fixed $s \in (0,1/2)$, 
\begin{equation*}
    \mathbb{E}_{\boldsymbol{\theta}_{0}}[\exp\{ s(x_{t}-\mu_{0}) \}] = \prod\limits_{j=0}^{\infty} \sec(\pi \phi_{0}^{j}s) =: K(s) < \infty. 
\end{equation*} 
Then by Chernoff's bound,  
\begin{equation*}
    \mathbb{P}_{\boldsymbol{\theta}_{0}}(x_{t} - \mu_{0}>y) \leq K(s) e^{-sy} \text{ and } \mathbb{P}_{\boldsymbol{\theta}_{0}} (x_{t}-\mu_{0}< -y) \leq K(s) e^{-sy}. 
\end{equation*}
Thus 
\begin{equation*}
    \mathbb{P}_{\boldsymbol{\theta}_{0}}(|x_{t}-\mu_{0}|>y) \leq 2K(s) e^{-sy}. 
\end{equation*}
For each $i$, choose 
\begin{equation*}
    y_{n}^{(\alpha_{i})}:= \frac{c\log(n) + \log(2K_{\alpha,i}(s))}{s}. 
\end{equation*}
Then 
\begin{equation*}
    \mathbb{P}_{\boldsymbol{\theta}_{0}}(|h_{\alpha_{i,t}} - \mu_{\alpha_{i},0}|> y_{n}^{(\alpha_{i})}) \leq 2K_{\alpha,i} (s) \exp \{ - sy_{n}^{(\alpha_{i})} \} = n^{-c}. 
\end{equation*}
Hence 
\begin{equation*}
    \mathbb{P}_{\boldsymbol{\theta}_{0}}(|h_{\alpha_{i,t}}| > |\mu_{\alpha_{i},0}| + y_{n}^{(\alpha_{i})}) \leq n^{-c}. 
\end{equation*}
Therefore 
\begin{align*}
    \mathbb{P}_{\boldsymbol{\theta}_{0}}(A_{\alpha,n}^{c}) &\leq \sum\limits_{i=1}^{N} \sum\limits_{t=1}^{n-1} \mathbb{P}_{\boldsymbol{\theta}_{0}}(|h_{\alpha_{i},t-1}| > |\mu_{\alpha_{i},0}| + y_{n}^{(\alpha_{i})})\\
    & \leq N(n-1)n^{-c}\\
    &= N(\frac{n-1}{n^{c}})\\
    & = N (\frac{n}{n^{c}} - \frac{1}{n^{c}})\\
    &= N(1- \frac{1}{n})n^{1-c}\\
    &\rightarrow N\cdot1\cdot0 = 0, \text{ since } c>1. 
\end{align*}
Now we will derive a bound on the $\beta-$process. Define the auxiliary event 
\begin{equation*}
    G_{n}:= \{ \max\limits_{1 \leq i \leq N, 1 \leq t \leq n}|h_{\beta_{i,t-1}}| \leq |\mu_{\beta_{i},0}| + \frac{c\log(n) + \log(2K_{\beta_{i}}(s))}{s}\}.
\end{equation*}
By the same argument as above 
\begin{equation*}
    \mathbb{P}_{\boldsymbol{\theta}_{0}}(G_{n}^{c}) \leq Nnn^{-c} = Nn^{1-c} \rightarrow 0 , \text{ since c>1}. 
\end{equation*}
Now fix $ i \in \{ 1, \dots, N \}$ and $t \in \{ 1, \dots, n\}$. Under the model 
\begin{equation*}
    \beta_{i,t} = \beta_{i,0} + \sum\limits_{u=0}^{t-1} \omega_{\beta_{i},u}, 
\end{equation*}
and conditional on the path $\{ h_{\beta_{i,u}} \}_{u=0}^{t-1}$ and $\omega_{\beta_{i},u} \sim N(0, e^{h_{\beta_{i,u}}})$ independently over $u$. 
Therefore, conditional on $\{ h_{\beta_{i},u} \}_{u=0}^{t-1}, \beta_{i,t} - \beta_{i,0} \sim N(0,V_{i,t}), V_{i,t} :=\sum\limits_{u=0}^{t-1} e^{h_{\beta_{i},u}} $. 
On the event $G_{n}$
\begin{equation*}
    h_{\beta_{i,u}} \leq |\mu_{\beta_{i,0}}| + \frac{c\log(n) + \log(2K_{\beta_{i}}(s))}{s},
\end{equation*}
for all $u \leq n-1$. Hence, 
\begin{align*}
    V_{i,t} &\leq n \exp \{ |\mu_{\beta_{i,0}}| + \frac{c\log(n) + \log(2K_{\beta_{i}}(s))}{s} \}\\ & = \exp\{ |\mu_{\beta_{i},0}| \} (2K_{\beta_{i}}(s))^{\frac{1}{s}} n^{1 + \frac{c}{s}}\\ 
    & \leq \frac{C_{4}}{8}n^{\alpha} \text{ on } G_{n},  
\end{align*}
where $\alpha = 1+c/s$ and $C_{4}>0$  is some constant such that
\begin{equation*}
    \exp\{ |\mu_{\beta_{i},0}| \} (2K_{\beta_{i}}(s))^{\frac{1}{s}} \leq \frac{C_{4}}{8}.
\end{equation*}
Now define $M_{n} : = \sqrt{(\kappa + 1) C_{4}n^{\alpha}\log(n)}$. Since $M_{n} \rightarrow \infty$ and $\beta_{i,0}$ is fixed, there exists $n_{0}$ such that, for all $n \geq n_{0}, M_{n} \geq 2 \max \limits_{1 \leq i \leq N} |\beta_{i,0}|$. Observe, 
\begin{equation*}
    |\beta_{i,t}| = |(\beta_{i,t} - \beta_{i,0}) + \beta_{i,0}| \leq |\beta_{i,t} - \beta_{i,0}| + |\beta_{i,0}|  \leq \frac{M_{n}}{2} + \frac{M_{n}}{2} = M_{n}. 
\end{equation*}
So, 
\begin{equation*}
    |\beta_{i,t}|> M_{n} \text{ and } M_{n} \geq 2 |\beta_{i,0}| \implies |\beta_{i,t} - \beta_{i,0}| > \frac{M_{n}}{2}. 
\end{equation*}
Hence, for $n \geq n_{0}$ on $G_{n}$, 
\begin{equation*}
    \{ |\beta_{i,t}| > M_{n} \} \subset \{ |\beta_{i,t} - \beta_{i,0}| > \frac{M_{n}}{2} \}. 
\end{equation*}
Conditional on $\{ h_{\beta_{i},u} \}_{u=0}^{t-1}$ on the set $G_{n}$, $\beta_{i,t} - \beta_{i,0}$ is centered Gaussian with variance at most $ C_{4}n^{\alpha}/8$. Therefore the Gaussian tail bound gives 
\begin{align*}
    \mathbb{P}_{\boldsymbol{\theta}_{0}} (|\beta_{i,t}| > M_{n}|G_{n})  &\leq \mathbb{P}_{\boldsymbol{\theta}_{0}}(|\beta_{i,t} -\beta_{i,0}| > \frac{M_{n}}{2}|G_{n})\\
    &= \mathbb{E}_{\boldsymbol{\theta}_{0}}[\mathbb{P}_{\boldsymbol{\theta}_{0}}(|\beta_{i,t}-\beta_{i,0}| > \frac{M_{n}}{2}|\sigma(h_{\beta_{i},0}, \dots, h_{\beta_{i},t-1}))|G_{n}] \\
    &\leq 2\exp \{ - \frac{(\frac{M_{n}}{2})^{2}}{2(\frac{C_{4}n^{\alpha}}{8})} \}\\
    &= 2\exp \{ - \frac{M_{n}^{2}}{C_{4}n^{\alpha}} \}\\ 
    &= 2\exp \{ - \frac{(\kappa + 1)C_{4}n^{\alpha} \log(n)}{C_{4}n^{\alpha}} \}\\
    &= 2n^{-(\kappa+1)}. 
\end{align*}
Then by the union bound 
\begin{align*}
    \mathbb{P}_{\boldsymbol{\theta}_{0}}(A_{\beta,n}^{c}) & = \mathbb{P}_{\boldsymbol{\theta}_{0}}(A_{\beta,n}^{c} \cap G_{n}) + \mathbb{P}_{\boldsymbol{\theta}_{0}} (A_{\beta,n}^{c} \cap G_{n}^{c})\\
    &\leq \mathbb{P}_{\boldsymbol{\theta}_{0}}(A_{\beta,n}^{c} \cap G_{n}) + \mathbb{P}_{\boldsymbol{\theta}_{0}}(G_{n}^{c})\\ 
    &= \mathbb{P}_{\boldsymbol{\theta}_{0}} (\{ \max\limits_{1 \leq i \leq N, 1 \leq t \leq n} |\beta_{i,t-1}|>M_{n} \} \cap G_{n}) + \mathbb{P}_{\boldsymbol{\theta}_{0}}(G_{n}^{c})\\
    &= \mathbb{P}_{\boldsymbol{\theta}_{0}}(\{  \bigcup\limits_{i=1}^{N} \bigcup\limits_{t=1}^{n} |\beta_{i,t-1}| > M_{n} \} \cap G_{n}) + \mathbb{P}_{\boldsymbol{\theta}_{0}} (G_{n}^{c})\\
    & \leq \sum\limits_{i=1}^{N} \sum\limits_{t=1}^{n} \mathbb{P}_{\boldsymbol{\theta}_{0}}(\{ |\beta_{i,t-1}| > M_{n}\} \cap G_{n}) + \mathbb{P}_{\boldsymbol{\theta}_{0}}(G_{n}^{c})\\
    &= \sum\limits_{i=1}^{N} \sum\limits_{t=1}^{n} \mathbb{P}_{\boldsymbol{\theta}_{0}} (|\beta_{i,t-1}|>M_{n}|G_{n})\mathbb{P}_{\boldsymbol{\theta}_{0}}(G_{n}) + \mathbb{P}_{\boldsymbol{\theta}_{0}}(G_{n}^{c}) \\
    & \leq \sum\limits_{i=1}^{N}\sum\limits_{t=1}^{n}\mathbb{P}_{\boldsymbol{\theta}_{0}}(|\beta_{i,t-1}|>M_{n}|G_{n}) + \mathbb{P}_{\boldsymbol{\theta}_{0}}(G_{n}^{c})\\
    &\leq \mathbb{P}_{\boldsymbol{\theta}_{0}}(G_{n}^{c}) + 2Nn^{-\kappa}\\ 
    &\rightarrow 0. 
\end{align*}
\end{proof}
\subsection*{Proof of Theorem 5}
\begin{proof}
    By Theorem 3 of \cite{ghosal2007convergence}
    \begin{equation*}
        \sup\limits_{\boldsymbol{s}_{n} \in H_{n}} Q_{\boldsymbol{\theta}_{0}, \boldsymbol{s}_{n}}^{(n)} \Pi_{n}(\boldsymbol{\theta} \in \Theta_{n}:d_{n,\boldsymbol{s}_{n}}(\boldsymbol{\theta}, \boldsymbol{\theta}_{0}) \geq M_{n} \epsilon_{n}|\boldsymbol{r}_{1:n}, r_{M,1:n}) \rightarrow 0,  
    \end{equation*}
where in the notation of \cite{ghosal2007convergence} $Pf$ denotes the expectation of $f$ with respect to $P$.Let 
\begin{equation*}
    Z_{n}(\boldsymbol{s}_{n}) := \Pi_{n}(\boldsymbol{\theta} \in \Theta_{n}:d_{n,\boldsymbol{s}_{n}}(\boldsymbol{\theta}, \boldsymbol{\theta}_{0}) \geq M_{n} \epsilon_{n}|\boldsymbol{r}_{1:n}, r_{M,1:n}). 
\end{equation*}
Notice $0 \leq Z_{n}(\boldsymbol{s}_{n}) \leq 1$. Now fix $\delta > 0$. Since $Z_{n}(\boldsymbol{s}_{n}) \geq 0$, Markov's inequality gives 
\begin{equation*}
    Q_{\boldsymbol{\theta}_{0},\boldsymbol{s}_{n}}^{(n)} (Z_{n}(\boldsymbol{s}_{n})>\delta) \leq \frac{1}{\delta}Q_{\boldsymbol{\theta}_{0}, \boldsymbol{s}_{n}}^{(n)} Z_{n}(\boldsymbol{s}_{n}). 
\end{equation*}
That is 
\begin{align*}
    &Q_{\boldsymbol{\theta}_{0}, \boldsymbol{s}_{n}}^{(n)}(\Pi_{n}(\boldsymbol{\theta}\in \Theta_{n}:d_{n,\boldsymbol{s}_{n}}(\boldsymbol{\theta}, \boldsymbol{\theta}_{0})\geq M_{n}\epsilon_{n}|\boldsymbol{r}_{1:n}, r_{M,1:n})>\delta)\\ &\leq \frac{1}{\delta}Q_{\boldsymbol{\theta}_{0}, \boldsymbol{s}_{n}}^{(n)}\Pi_{n}(\boldsymbol{\theta}\in \Theta_{n}:d_{n,\boldsymbol{s}_{n}}(\boldsymbol{\theta}, \boldsymbol{\theta}_{0})\geq M_{n}\epsilon_{n}|\boldsymbol{r}_{1:n}, r_{M,1:n}) \rightarrow 0. 
\end{align*}
This implies 
\begin{equation*}
    \sup\limits_{\boldsymbol{s}_{n} \in H_{n}} Q_{\boldsymbol{\theta}_{0},\boldsymbol{s}_{n}}^{(n)} (\Pi_{n}(\boldsymbol{\theta} \in \Theta_{n}: d_{n,\boldsymbol{s}_{n}}(\boldsymbol{\theta}, \boldsymbol{\theta}_{0})\geq M_{n}\epsilon_{n}|\boldsymbol{r}_{1:n}, r_{M,1:n})>\delta) \rightarrow 0.
\end{equation*}
For $\boldsymbol{s}_{n} \in H_{n}$, define 
\begin{equation*}
    A_{n}(\boldsymbol{s}_{n}): = \{ r_{1:n}: \Pi_{n}(\boldsymbol{\theta} \in \Theta_{n}:d_{n, \boldsymbol{s}_{n}}(\boldsymbol{\theta}, \boldsymbol{\theta}_{0}) \geq M_{n}\epsilon_{n}|\boldsymbol{r}_{1:n}, r_{M,1:n}) > \delta \}. 
\end{equation*}
Then, 
\begin{equation*}
    \sup\limits_{\boldsymbol{s}_{n} \in H_{n}}Q_{\boldsymbol{\theta}_{0},\boldsymbol{s}_{n}}^{(n)}(A_{n}(\boldsymbol{s}_{n})) \rightarrow 0. 
\end{equation*}
Similarly, define 
\begin{equation*}
    A_{n}(S_{0,n}) := \{ r_{1:n}: \Pi_{n}(\boldsymbol{\theta} \in \Theta_{n}: d_{n,S_{0,n}}(\boldsymbol{\theta},\boldsymbol{\theta}_{0}) \geq M_{n}\epsilon_{n}|\boldsymbol{r}_{1:n}, r_{M,1:n})>\delta \}. 
\end{equation*}
So, 
\begin{equation*}
    \mathbb{P}_{\boldsymbol{\theta}_{0}}(\Pi_{n}(\boldsymbol{\theta} \in \Theta_{n}: d_{n,S_{0,n}}(\boldsymbol{\theta}, \boldsymbol{\theta}_{0}) \geq M_{n}\epsilon_{n}|\boldsymbol{r}_{1:n}, r_{M,1:n})> \delta) = \mathbb{P}_{\boldsymbol{\theta}_{0}}(\boldsymbol{r}_{1:n} \in A_{n}(S_{0,n})). 
\end{equation*}
By the law of total expectation 
\begin{align*}
    \mathbb{P}_{\boldsymbol{\theta}_{0}}(\boldsymbol{r}_{1:n}\in A_{n}(S_{0,n})) &= \mathbb{E}_{\boldsymbol{\theta}_{0}}[\mathbb{E}_{\boldsymbol{\theta}_{0}} (\boldsymbol{1}\{ \boldsymbol{r}_{1:n} \in A_{n}(S_{0,n}) \}|S_{0,n}, r_{M,1:n})]\\ 
    & = \mathbb{E}_{\boldsymbol{\theta}_{0}} [Q_{\boldsymbol{\theta}_{0},S_{0,n}}^{(n)}(A_{n}(S_{0,n}))]\\
    &= \mathbb{E}_{\boldsymbol{\theta}_{0}} [Q_{\boldsymbol{\theta}_{0}, S_{0,n}}^{(n)} (A_{n}(S_{0,n})) \boldsymbol{1} \{ S_{0,n} \in H_{n} \} ] + \mathbb{E}_{\boldsymbol{\theta}_{0}}[Q_{\boldsymbol{\theta}_{0}, S_{0,n}}^{(n)} (A_{n}(S_{0,n}))\boldsymbol{1}\{ S_{0,n} \notin H_{n} \}]. 
\end{align*}
Note 
\begin{equation*}
    \mathbb{E}_{\boldsymbol{\theta}_{0}}[Q_{\boldsymbol{\theta}_{0}, S_{0,n}}^{(n)} (A_{n}(S_{0,n})) \boldsymbol{1}\{ S_{0,n} \in H_{n} \}] \leq \sup\limits_{\boldsymbol{s}_{n} \in H_{n}} Q_{\boldsymbol{\theta}_{0}, \boldsymbol{s}_{n}}^{(n)} (A_{n}(\boldsymbol{s}_{n})) \rightarrow 0. 
\end{equation*}
Similarly by the Proposition 5, 
\begin{equation*}
    \mathbb{E}_{\boldsymbol{\theta}_{0}}[Q_{\boldsymbol{\theta}_{0},S_{0,n}}^{(n)}(A_{n}(S_{0,n})) \boldsymbol{1} \{ S_{0,n} \notin H_{n} \}] \leq \mathbb{P}_{\boldsymbol{\theta}_{0}}(H_{n}^{c}) \rightarrow 0. 
\end{equation*}
Therefore 
\begin{equation*}
    \mathbb{P}_{\boldsymbol{\theta}_{0}}(\Pi_{n}(\boldsymbol{\theta} \in \Theta_{n}: d_{n,S_{0,n}}(\boldsymbol{\theta}, \boldsymbol{\theta}_{0}) \geq M_{n}\epsilon_{n} | \boldsymbol{r}_{1:n}, r_{M,1:n}) > \delta) \rightarrow 0. 
\end{equation*}
\end{proof}

  \bibliography{bibliography.bib}
\end{document}